\crefname{ineq}{inequality}{inequalities}
\title{Batching of Tasks by Users of Pseudonymous Forums:\\ Anonymity Compromise and Protection}
\author{Alexander Goldberg}
\author{Giulia Fanti}
\author{Nihar B. Shah}
\affil{Carnegie Mellon University}
\affil{\texttt {\{akgoldbe,gfanti,nihars\}@andrew.cmu.edu}}
\date{}
\begin{document}
\maketitle

\begin{abstract}
    There are a number of forums where people participate under pseudonyms. One example is peer review, where the identity of reviewers for any paper is confidential. When participating in these forums, people frequently engage in ``batching'': executing multiple related tasks (e.g., commenting on multiple papers) at nearly the same time. Our empirical analysis shows that batching is common in two applications we consider -- peer review and Wikipedia edits. 
    In this paper, we identify and address the risk of deanonymization arising from linking batched tasks. To protect against linkage attacks, we take the approach of adding delay to the posting time of batched tasks. We first show that under some natural assumptions, no delay mechanism can provide a meaningful differential privacy guarantee. We therefore propose a ``one-sided'' formulation of differential privacy for protecting against linkage attacks. We design a mechanism that adds zero-inflated uniform delay to events and show it can preserve privacy. We prove that this noise distribution is in fact optimal in minimizing expected delay among mechanisms adding independent noise to each event, thereby establishing the Pareto frontier of the trade-off between the expected delay for batched and unbatched events. Finally, we conduct a series of experiments on Wikipedia and Bitcoin data that corroborate the practical utility of our algorithm in obfuscating batching without introducing onerous delay to a system.
    
\end{abstract}

\section{Introduction}
\label{sec:intro}

\begin{wrapfigure}[20]{r}{.34\textwidth}
    \centering
    \vspace{-45pt}
    \includegraphics[width=.34\textwidth]{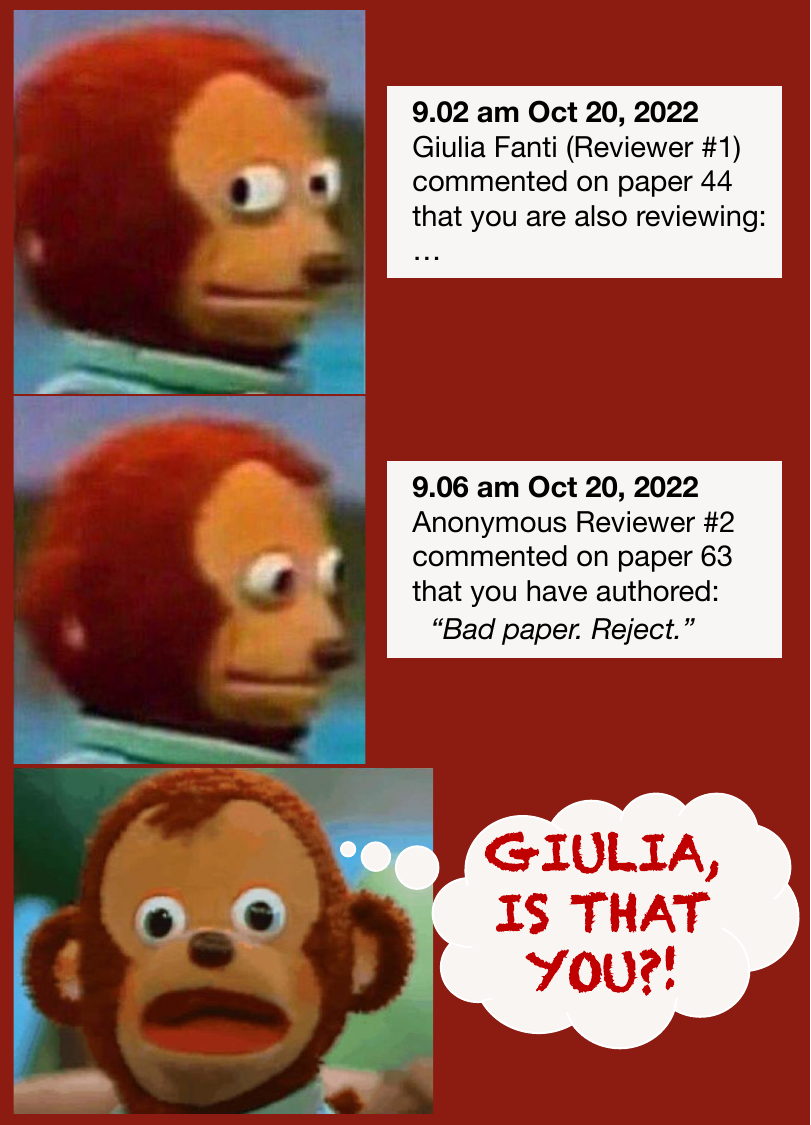}
    \caption{Cartoon illustration of reviewer de-anonymization due to batching.}
    \label{FigMeme}
\end{wrapfigure}

In a number of applications where anonymity is critical, users act under pseudonyms to preserve their privacy. For instance, in scientific peer review using online forums like \href{https://openreview.net/}{OpenReview.net}, reviewers make comments on papers that are publicly viewable. Reviewers (and meta-reviewers) who have been assigned multiple papers operate under different pseudonyms across their papers to remain anonymous. Other examples of publicly visible tasks where users operate under pseudonyms include Wikipedia editing and cryptocurrency transactions.

In many settings, it is common for users to engage in {\it batching} --- the completion of several similar tasks at the same time. Batching occurs both due to natural bursts in activity (e.g., a person visits a website and makes many comments at once) or as a productivity strategy used to streamline work. Indeed, both academic studies~\cite{kushlev_email_stress_2015, mark2016email, blank_email_emotional_2020} and popular media~\cite{monday_blog_batching, forbes_batching, nytimes_email} recommend performing tasks like responding to emails in batches in order to improve efficiency and reduce work-related stress.

In peer-review forums such as computer science conferences, reviewers and meta-reviewers are often assigned multiple papers. We find empirically that reviewers and meta-reviewers are highly likely to batch their comments and/or reviews. Specifically, we analyze data from a top Computer Science conference\footnote{Name redacted for privacy.} with thousands of papers, reviewers, and discussion comments. We find that when reviewers and meta-reviewers comment on multiple papers, they have a 30.10\% chance of batching their comments within $5$ minutes of one other. In comparison, any randomly chosen pair of reviewers and meta-reviewers had only a 0.66\% chance of making comments on different papers within $5$ minutes of each other. 

While batching is normal human behavior, it introduces a risk of deanonymization in peer-review settings.\footnote{This outcome is bad for a review system that needs a lot of interaction with the authors, but not for conferences where this is not expected nor allowed, like AAAI and IJCAI. The conference we analyzed was not on OpenReview.net but on a different conference management platform that does not make discussions public and has only a single-shot interaction between reviewers and authors (via a "rebuttal"). It is of interest to see an analogous analysis on conferences on OpenReview.net, but we do not have access to such data.}   For example, in many open peer-review settings, 
comments are publicly posted. Furthermore, many conferences have policies that (meta-)reviewers for any paper know the identities of other (meta-)reviewers on that paper. Now, when a (meta-)reviewer batches their comments, an author may observe that two comments are generated at nearly the same time on their own paper and on another paper. The author can then link the identity of this anonymous (meta-)reviewer on their own paper to a (meta-)reviewer on the other paper. If the author knows the identity of the (meta-)reviewers on the other paper---for instance, if the author is the meta-reviewer or another reviewer for that paper---this can uncover the identity of the (meta-)reviewer of their own paper. See Figure~\ref{FigMeme} for a cartoon illustration.

A back-of-the-envelope calculation based on our aforementioned measurements in peer review suggests that if an author has a uniform prior over $10$ possible (meta-)reviewers of their paper, then after observing a comment posted on their own paper within $5$ minutes of another comment from one of these (meta-)reviewers on another paper, their posterior probability that this (meta-)reviewer made the comment increases to $\frac{0.301}{0.301 + 9(0.0066)} = 83.51\%$ as compared to the prior of $10\%$. Thus, the linking of (meta-)reviewers across papers using batched comments can undermine the anonymity of the peer review process.

Similar privacy risks due to batching arise in many systems where users generate publicly logged events under pseudonyms. For instance:
\begin{itemize}[leftmargin=*]
    \item \textit{Inferring the identity of editors on Wikipedia articles.} Wikipedia provides public edit histories of articles. 
   While edit history is public, Wikipedia users are known to maintain their anonymity for a variety of important reasons. For instance, one study of Wikipedia editors who use the anonymity network Tor found that editors are concerned about their privacy due to risks like ``\emph{threats of surveillance, violence, harassment, opportunity loss, reputation loss, and fear for loved ones.}''~\cite{wiki_privacy} These risks are especially acute for already marginalized groups like women and ethnic minorities. Thus, the study concludes that ``\emph{open collaboration
    communities must go beyond attracting participants, to
    develop social and technical arrangements that support
    contributors’ needs for privacy.}'' 
    
    In order to address these privacy concerns, Wikipedia's terms of service explicitly allow for the use of a pseudonymous alternate account: \emph{``A person editing an article that is highly controversial within their family, social or professional circle, and whose Wikipedia identity is known within that circle, or traceable to their real-world identity, may wish to use an alternative account to avoid real-world consequences from their editing or other Wikipedia actions in that area}.''~\cite{wiki_TOS_pseudonyms} However, as in the peer review example, batched timing of article revisions can enable linkage of the second account to a known primary account. In practice, the batching of edits is ubiquitous on Wikipedia; our analysis of publicly logged Wikipedia article revisions shows that over $50\%$ of all edits are made within $5$ minutes of an edit from the same user on a different article. This common editing behavior may undermine the privacy of users employing a second account to preserve their anonymity.
    \item \textit{Clustering crypto-currency transactions on a public blockchain.} In cryptocurrencies like Bitcoin, users' transaction histories are recorded on a public blockchain where a person can send or receive currency to an associated public key, which acts as a pseudonym. Users can have multiple addresses, each containing its own funds and identified by a different public key \cite{bitcoin_address}. A transaction can (and often does) draw funds from multiple input addresses, particularly if no single address contains sufficient funds for a given transaction \cite{tx-format}. 
    However, a common heuristic used in practice is to link multiple input addresses to a single transaction to the same user \cite{common-input,bitcoin_man_with_no_names_2016}. Hence, users who wish to preserve their privacy can separate inputs from different addresses into different transactions to obfuscate the linkage between transactions from the same person \cite{common-input}.\footnote{There exist other cryptographic solutions (e.g., CoinJoin) that leak more information in exchange for cost benefits compared to generating multiple transactions \cite{coinjoin}.}
    However, if a user batches these transactions in time across addresses, an adversary may use this timing (along with other signals) to still link together their multiple addresses. Linking pseudonyms together is a common first step in a full deanonymization attack. For instance, attacks on Bitcoin transactions begin by leveraging a user's ``idioms of use'' to cluster together addresses likely belonging to the same person~\cite{bitcoin_man_with_no_names_2016}. The attacker then leverages a single known link to a real-world identity to de-anonymize the entire cluster. 
\end{itemize}

These scenarios motivate the need for defenses against timing-based linkage attacks that exploit the batching of tasks by people. There is already extensive literature on privacy-preserving data release in various settings. However, there are a number of strict constraints in our setting that prevent these methods from being applicable. A common approach to preserving privacy is to introduce fake events to obscure patterns among the real events. However, in all three applications --- peer review, Wikipedia, and cryptocurrency --- generating fake events is highly undesirable or impractical, and withholding events indefinitely is also not possible.  In the setting of commenting in peer review, it is undesirable to generate fake comments, as this would require giving made-up feedback to paper authors. Similarly, in Wikipedia, adding fake edits to articles can undermine the quality and legitimacy of the content. For cryptocurrencies, introducing dummy transactions would introduce additional financial cost, causing undesirable overhead. Furthermore, transactions include the amount of currency sent, so dummy transactions would require a sender to transfer actual funds just to preserve privacy. Instead, our approach is to design \emph{delay mechanisms} that introduce random delays to the time at which events are posted on the platform (without the use of any dummy data) to preserve privacy. Thus, the mechanism will trade off privacy for additional delay in the system.  

\medskip{\bf Our contributions.} In this work, we introduce the problem of anonymity compromise due to the batching of tasks in pseudonymous forums and then propose defenses. Our primary contributions are:
\begin{itemize}
    \item We identify the problem of deanonymization risk due to the batching of tasks by users of pseudonymous online forums. By analyzing data from an actual peer-reviewed conference, we demonstrate that a simple attack using the timing of comments on an online forum can link anonymous (meta)-reviewer's identities, increasing their certainty about a specific (meta)-reviewer's identity to 83\% from a prior of 10\%.
    In analysis of Wikipedia article revisions, we show that  batching of tasks on Wikipedia makes it possible to link editors across articles with an accuracy of 85\% 
    based only on the timing of their revisions.
    \item We formulate the problem of trading off privacy and delay in pseudonymous forums where users engage in batching. We show that standard notions of differential privacy (DP)~\cite{diff_privacy_defn} cannot be satisfied in our problem setting without introducing fake events or withholding events indefinitely. Therefore, we consider a ``one-sided'' relaxation of traditional DP \cite{osdp_2020}. Our formulation aims to prevent an adversary from inferring when batching happened, but allows an adversary to learn that batching did not happen. 
    \item We propose a general framework for designing mechanisms that guarantee one-sided DP by adding independent random delay to batched and unbatched comments. We show that we can instantiate this framework with a number of different distributions and guarantee privacy. Notably, it is possible to guarantee privacy with non-negative versions of typical distributions used for differential privacy like the Laplace distribution and the Staircase distribution. It is also possible to guarantee privacy at any setting of the privacy parameters by adding delay drawn from a uniform distribution with inflated probability mass at $0$, which we call the Zero-Inflated Uniform Mechanism.
    
    \item We establish the optimality of our Zero-Inflated Uniform Mechanism among mechanisms that add independent noise to each comment. In particular, we give a full characterization of the Pareto frontier of the expected delay added to batched and unbatched events by any mechanism that adds independent non-negative noise to comments, at any setting of privacy parameters, and show that our proposed mechanism achieves this frontier. This result may be of independent interest. While the uniform distribution is not typically used in the design of two-sided DP algorithms, our results show that for one-sided DP when only non-negative noise can be added (as is the case for streaming timing data) the Zero-Inflated Uniform Mechanism can optimally trade off privacy for utility.
    \item We conduct a series of experiments simulating linkage attacks using batched timing of tasks on Wikipedia article revision data and Bitcoin transaction data. These experiments reveal the applicability of our methods in preserving privacy in practice without exceedingly large delays. 
\end{itemize}
All of our code is available online at \href{https://github.com/akgoldberg/batching-privacy}{https://github.com/akgoldberg/batching-privacy}.

\section{Related Work}

There is a substantial body of work on anonymity when sending packets over a network. However, as we discuss below, the techniques developed therein are inapplicable to our setting. Specifically, prior work has described deanonymization attacks which leverage correlated timing of packet arrivals. The work gives various defenses against such attacks~\cite{shmatikov2006timing,levine2004timing, kadloor2013timing, javidbakht2017delayanonymity, stadium, vuvuzela}. Anonymous networking seeks to prevent an adversary from inferring the sender and recipient of a given message. Packets are routed through a sequence of ``mix nodes'' to obscure the path taken. The highly correlated arrival times of packets on the first mix node and the last mix node in one path can enable inferences that a specific sender and recipient are communicating with one another. Prior work~\cite{shmatikov2006timing,levine2004timing} demonstrates the practical viability of deanonymization attacks that take advantage of batching in anonymous networks. 

The defenses proposed in these papers rely on the introduction of \emph{dummy packets} or ``cover traffic'' to a network, obscuring any instance of batching amidst many instances of spurious batching. In contrast, a critical constraint in the settings we consider is the \textit{infeasibility of generating fake data} as a means of preserving privacy. Therefore, our work will consider mechanisms that delay batched arrivals in order to preserve anonymity, trading off delay for privacy, without introducing any synthetic data.

Our work defines privacy based on a ``one-sided'' relaxation of the popular notion of differential privacy~\cite{diff_privacy_defn}. The definition of one-sided DP was introduced in the paper~\cite{osdp_2020} in a setting where contributors of individual data-points to a database have different privacy constraints and hence data-points can be classified as ``sensitive'' and ``non-sensitive.'' In our work, we argue that this classification of sensitive and non-sensitive data-points is applicable to batched and unbatched events. Interestingly, while the paper \cite{osdp_2020} shows that one-sided DP can improve utility compared to standard two-sided DP, we find that in our problem setting, one-sided DP admits useful privacy-preserving algorithms where two-sided DP does not admit any useful algorithms at all. We cannot readily apply algorithms from the paper \cite{osdp_2020} due to the constraint that we publish all data. Therefore, while they develop mechanisms that release a subset of non-sensitive data with no noise addition, while withholding all sensitive data entirely, we consider mechanisms that add noise to both sensitive and non-sensitive data-points and release all data-points.

Geng and Viswanath~\cite{staircase} address the question of optimal distributions for noise addition in standard differential privacy. They show that in order to minimize the magnitude of noise added to a query with known sensitivity, noise should be drawn from a         ``staircase" distribution, which has a probability density function that is roughly a piece-wise constant approximation of the Laplace distribution. Our work can be seen as an analogous result in the one-sided DP regime. Specifically, we prove that for the one-sided relaxation of differential privacy, adding staircase noise is no longer optimal, but rather adding uniform noise with a possibly inflated probability of sampling $0$ minimizes the magnitude of noise addition.

Our running application in this paper is that of peer review. A few previous papers have considered certain issues of privacy in peer review, but with very different objectives and methods. The paper~\cite{ding2022calibration} considers the problem of miscalibration~\cite{flach2010kdd,roos2012statistical,ge13bias,wang2018your} in peer review. They consider privacy leakage when correcting for such miscalibration and provide methods (for a simplified setting) to mitigate this leakage. The paper~\cite{ding2020privacy} provides privacy-preserving algorithms for releasing some peer-review data to allow researchers at large to analyze and address problems like subjectivity~\cite{lee2015commensuration,noothigattu2018choosing} and miscalibration. The paper~\cite{jecmen2020manipulation} considers the problem of coalition-based fraud~\cite{Vijaykumar2020ASPLOS,littman2021collusion,wu2021making,jecmen2022tradeoffs} in peer review, and provides a randomized algorithm to assign reviewers to papers to mitigate such fraud. They argue that such a randomized assignment algorithm has another benefit: it can allow for release of the data that underlies the automated assignment algorithm while still preserving some privacy about which paper was assigned to which reviewer. We refer the reader to~\cite{shah2021survey} for an overview of research on peer review. 

\section{Problem Formulation}

We now describe our problem formulation. For clarity of exposition, we use the running example of peer-review.

\medskip{\bf Comment Arrivals.} We call the event when a reviewer makes a comment on a paper a \emph{comment arrival}. Each comment arrival consists of $4$ elements: the text of the comment, a timestamp $\ts$ when the comment arrived, a paper $\paper$ to which it responds, and the reviewer $\reviewer$ who made the comment. We assume that comments arrive in continuous time over an infinite time horizon, as this is the most general setup, although our analysis extends to any finite time horizon (for example, in the case where a conference has an end time after which comments can no longer be posted). We consider settings where the comments are publicly observable, as is the case for many conferences run on popular platforms like OpenReview.net. 

\medskip{\bf Batching.} In our initial model, we consider comments to be ``batched'' if they arrive simultaneously. Specifically, a set of $2$ or more  comment arrivals is \emph{batched} if all comments in the set come from the same reviewer at the same time, and furthermore, the comments are all on different papers. In Section~\ref{sec:extension_non_simultaenous} we discuss how to extend the model to allow for a short gap between batched comments. 

\medskip{\bf Comment Posting Mechanism.} A comment posting mechanism $\mech$ receives comments as they arrive and can choose to delay when they are posted, with the comments only becoming publicly visible at the time they are posted. The mechanism receives a streaming set of comment arrivals $\Arriv$ as input. It outputs a set of comments where each comment has identical content, paper, and reviewer to a comment in the input but with a potentially delayed timestamp. We place the following natural constraints on any \emph{valid} comment posting mechanism: \begin{enumerate}
    \item (\emph{Delay-Only}) If a comment arrives at time $\ts$ it must be output at time $\ts$ or later.
    \item (\emph{No Fake Data}) Any comment posted at time $\ts$ must have arrived at or before time $\ts$. 
    \item (\emph{Eventual Release of All Comments}) For any comment, letting $\delay$ denote the potentially randomized delay introduced to the comment by the mechanism, it must be that $\lim\limits_{\maxDelay \to \infty} \Pr[\delay \leq \maxDelay] = 1$.
\end{enumerate}

\medskip{\bf Privacy.} Our goal is to protect against an adversary who is trying to infer whether a specific pair of comment arrivals was batched. Following the widely-adopted framework of differential privacy, we consider a strong adversary who knows exactly when all comments arrived, except for one pair of comments that either arrived in a batch or at separate times. The adversary knows that comments arrive at the same time if batched and knows the exact inter-arrival time of the pair of comments if they arrive unbatched. In preserving privacy against such a strong adversary, we also provide privacy guarantees for general classes of weaker adversaries with less prior knowledge. For instance, in Section~\ref{sec:implementation_params} we discuss an adversary who only has an estimate of the baseline distribution of inter-arrival times when comments are unbatched, rather than the exact inter-arrival time.

Ideally, we would like to provide a privacy guarantee with respect to the standard notion of differential privacy (DP). Such a DP guarantee would promise difficulty of distinguishing whether the mechanism was run on one of two neighboring inputs, where one input has an additional batched pair of comments compared to its neighbor. Unfortunately, as we prove in Section~\ref{sec:dp_impossibility}, it is impossible to guarantee standard $\eps$-DP in this setting. There are two main reasons for this impossibility. First, consider defining neighboring inputs to a DP mechanism where a pair of comments arrives simultaneously in one input when batched, but arbitrarily far apart when unbatched in the neighboring input. Then, to satisfy a traditional DP guarantee, batched comments must be delayed \emph{indefinitely} to make these two inputs indistinguishable. Second, even with a bounded change in arrival time for any comment on neighboring inputs, we show that if the neighboring relation is symmetric (i.e., a pair of comments can be batched in one input and unbatched in the other, and it doesn't matter which input contains the batched comments), then to satisfy $\eps$-DP the mechanism must delay a batched comment indefinitely.

In order to address the aforementioned roadblocks, we relax the definition of neighboring inputs in two ways. First, we introduce a real-valued parameter $\gap >0$ into our formulation of neighbors that bounds how far in time a batched comment can move in a neighboring input where it arrives unbatched. Second, we define neighbors in a one-sided manner: a set of comment arrivals neighbors another set only if it contains one \emph{additional} pair of batched comments as compared to its neighbor. In contrast, a set of comment arrivals does not neighbor another set if it contains one \emph{fewer} pair of batched comments than its potential neighbor. Formally, we define neighboring comment arrival sets as follows:

\begin{definition}[$\gap$-Neighboring Comment Arrival Sets]
A set of comment arrivals $\inpBatch$ is \emph{$\gap$-neighboring} to set of comment arrivals $\inp$, if $\inpBatch$ can be obtained from $\inp$ by batching together one pair of comments that arrive separately in $\inp$. The comments must arrive within $\gap$ units of time of one another in $\inp$ and the later comment moves to the earlier comment in $\inpBatch$ to create a batch. Formally, $\exists (\act, \ts, \paper, \reviewer), (\act', \ts', \paper', \reviewer) \in \inp$ such that $\paper \not = \paper'$, $0 < \ts' - \ts \leq \gap$ and $\inpBatch = (\inp \setminus \{c'\}) \cup \{(c', \ts, \paper', \reviewer)\}$. 
\end{definition}

Note that this definition of adjacency is asymmetric as a set of comment arrivals with no pairs of batched comments is not $\gap$-adjacent to any other sets of comment arrivals. As an example, consider the following pair of comment arrival sets $\inp$ and $\inpBatch$:
\begin{align*}
    \inp = &  \{(\act_1, \ts = 1, \paper_1, \reviewer_1), (\act_2, \ts = 2, \paper_1, \reviewer_2), \pmb{(\act_3, \ts = 3, \paper_2, \reviewer_1)}\},\quad\text{and} \\ 
    \inpBatch = & \{(\act_1, \ts = 1, \paper_1, \reviewer_1), \pmb{(\act_3, \ts = 1, \paper_2, \reviewer_1)}, (\act_2, \ts=2, \paper_1, \reviewer_2)\}.
\end{align*}
Then under our definition above, $\inpBatch$ is $2$-neighboring to $\inp$. However, $\inp$ is \emph{not} $2$-neighboring to $\inpBatch$.\footnote{The reader may have observed that the definition of neighboring comment arrival sets has a technical condition that a batched comment moves later in time in a neighboring input with one fewer instance of batching. It is possible to modify the formulation to let a batched pair of comments arrive at either one of the later or earlier arrival times of an unbatched pair in an adjacent input. This modified formulation would capture an even stronger adversary who knows the exact time-frame in which a batched pair arrives. However, ensuring privacy against this adversary would require even more delay added to the system, Hence, we do not pursue this formulation.}

Now, we define  privacy of a mechanism using a notion similar to the definition of one-sided differential privacy introduced in~\cite{osdp_2020}. We note that apart from the one-sidedness of neighbors, our privacy formulation differs substantially from that of~\cite{osdp_2020} as we focus on inputs differing in the timing of a pair of comments due to batching, while~\cite{osdp_2020} considers databases where arbitrary entries are considered non-private.

For any finite time horizon $\finiteTime$ and set of comment arrivals $\Arriv$, we will let $\mech_\finiteTime(\Arriv)$ denote the output of the mechanism up to time $\finiteTime$. Then, we define privacy as follows:

\begin{definition}[$(\eps, \gap)$-One-Sided Differential Privacy (OSDP)]
\label{defn:osdp}
For any $\eps \geq 0$ and $\gap > 0$, a comment posting mechanism $\mech$ is $(\eps, \gap)$-one-sided differentially private if for any $\inp, \inpBatch$ such that $\inpBatch$ is $\gap$-neighboring to $\inp$, for any time horizon $\finiteTime$, and for any subset of possible outputs $\out \subseteq \text{Range}(\mech_\finiteTime)$ of the mechanism: \begin{align*}\Pr[\mech_\finiteTime(\inpBatch) \in \out] \leq e^{\eps} \Pr[\mech_\finiteTime(\inp) \in \out].
\end{align*}
\end{definition}

This privacy definition guarantees that the likelihood of observing an outcome on an input with at least one instance of batching is never much larger than the likelihood of observing that outcome on an input with one \emph{fewer} batched pair. Therefore, the mechanism obscures the fact that any pair of comments was batched. However, it is possible for the mechanism to reveal that a pair of comments was unbatched; we allow for outputs that occur with non-zero probability given input $\inp$ but zero probability given input $\inpBatch$ (unlike in standard two-sided DP). We argue that the one-sided definition effectively captures privacy risk due to batching, as the presence of a batched pair of comments is sensitive information, while the absence of batching is non-sensitive. We further discuss the motivation for only treating batching as sensitive via the concrete example of reviewer deanonymization by a meta-reviewer. 

The privacy definition requires two parameters: $\eps$ and $\gap$. The interpretation of $\eps$ is similar to two-sided DP as it quantifies the ``level'' of privacy: for smaller $\eps$ it is harder to distinguish neighboring inputs, whereas for larger $\eps$ it is easier to distinguish neighboring inputs. The $\gap$ parameter captures domain knowledge about what types of inputs can be neighbors, similar to restricting the domain of inputs in two-sided DP. Roughly, $\gap$ should capture how far apart consecutive comments would plausibly arrive if batching were not occurring. It is necessary for a practitioner to include this domain knowledge in the form of finite value $\gap$ as we prove that batched comments must be delayed by at least $\gap$ (in Section~\ref{sec:pareto}) and hence without this bound, comments must be withheld indefinitely. We give heuristics for how to set $\gap$ based on a hypothesis testing interpretation of the privacy definition in Section~\ref{sec:implementation_params}.

~\\\textbf{Utility.} We measure the cost of our mechanism in terms of expected delay added to comments. Because the privacy guarantee is asymmetric, the mechanism can behave differently on batched and unbatched comments. Therefore, we will consider measuring utility in terms of expected delay to batched comments denoted $\E[\distrB]$, expected delay to unbatched comments denoted $\E[\distrU]$ or more generally any weighted sum of the two expectations. 

~\\\textbf{Goal.} Our goal is to design comment posting mechanisms that guarantee $(\eps, \gap)$-one-sided differential privacy for chosen privacy parameters $\eps$ and $\gap$ while minimizing the expected delay added to comments. We may add random delay to batched and unbatched comments drawn from different distributions $\distrB$ and $\distrU$  respectively. Therefore, we wish to design $(\eps, \gap)$-OSDP mechanisms that are Pareto optimal in trading off between $\E[\distrB]$ and $\E[\distrU]$ at any setting of $\eps$ and $\gap$. Moreover, we want to allow practitioners to choose a mechanism on this Pareto frontier that minimizes an appropriate cost function suiting the requirements of their system. For instance, a system with a higher rate of batching may wish to weight delay to batched comments higher in their cost function than a system with a lower rate of batching. To this end, we consider minimizing any cost function that is a convex combination of expected delay to batched and unbatched comments. We aim to provide the exact mechanism on the Pareto frontier that minimizes $\weightB \E[\distrB] + (1-\weightB)\E[\distrU]$ for any choice of weighting parameter $\weightB \in [0,1]$ and any privacy parameters $\eps$ and $\gap$. We note that this choice of utility function is without loss of generality. In particular, the feasible region of $\E[\distrB]$ and $\E[\distrU]$ is convex (as we prove in Appendix~\ref{appendix:optimal_proof}, Lemma~\ref{lem:frontier_weighted_sum}.) Therefore, any mechanism that is Pareto optimal in trading off $\E[\distrB]$ and $\E[\distrU]$ minimizes the weighted cost function for some choice of $\weightB$ (since any point on the Pareto frontier of a convex feasible region optimizes some weighted sum objective per Boyd~\cite[Chapter 4.7]{boyd2004convex}).

\paragraph{Example: De-anonymizing reviewers.}
\label{sec:one_sided_privacy_risk}

\begin{figure}
    \centering
    \includegraphics[width=0.6\linewidth]{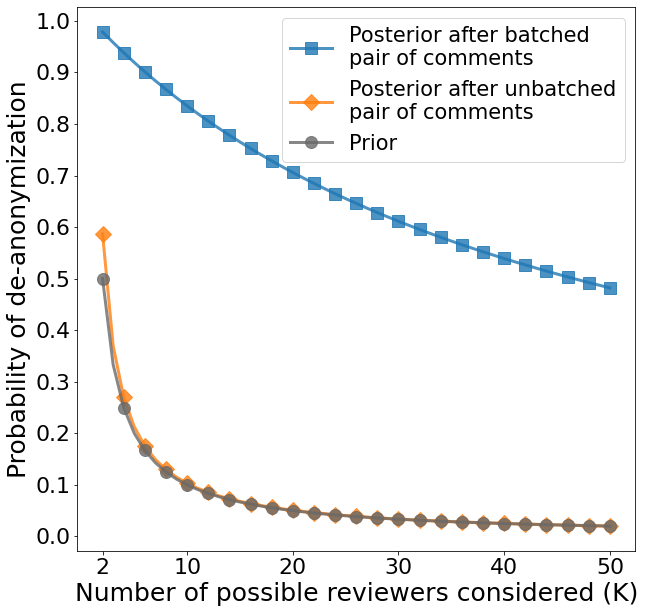}
    \caption{Success probability of de-anonymizing a (meta)-reviewer after learning that a pair of comments arrived together vs. learning that a pair of comments did not arrive together.}
    \label{fig:posteriors}
\end{figure}

We now discuss the one-sided nature of privacy risk inherent to batching using the running example of a meta-reviewer de-anonymizing a reviewer or meta-reviewer of a paper they have authored.  Recall the introductory scenario where an meta-reviewer observes two comments $\act$ and $\act'$ that arrive consecutively on different papers and are made by (meta)-reviewers $\reviewer_1$ and $\reviewer'$ respectively (where it is possible that $\reviewer' = \reviewer_1$). The meta-reviewer knows that the first comment was made by $\reviewer_1$ and has a uniform prior over $K$ possible reviewers who could have made $\act'$ (including $\reviewer_1$). They wish to de-anonymize $\reviewer'$ based on whether or not $\act'$ arrived in a batch with $\act$. From our aforementioned analysis of a conference peer review where we define two comments as ``arriving together'' if they arrive within $5$ minutes of one another, we estimate that: $\Pr[\act, \act' \text{ arrive together} \;|\; \reviewer' = \reviewer_1] \approx 0.3$, while $\Pr[\act, \act' \text{ arrive together} \;|\; \reviewer' \not= \reviewer_1] \approx 0.0066$. Therefore, after learning that $\act$ and $\act'$ arrived together, the meta-reviewer's posterior puts the most weight on $\Pr[\reviewer' = \reviewer_1 \;|\; \act, \act' \text{ arrived together}] = \frac{0.301}{0.301 + 0.0066(K-1)}$. On the other hand, after learning that $\act$ and $\act'$ did not arrive together, their posterior puts the most weight on: $\Pr[\reviewer' = \reviewer_k \;|\; \act_1, \act_2 \text{ did not arrive together}] = \frac{0.9934}{0.699 + 0.9934(K-1)} \; \text{for } K \not = 1$. We give further detail on how these statistics were estimated in Appendix~\ref{app:deanon_attack}.

As shown in Figure~\ref{fig:posteriors}, in learning that the pair of comments was batched, the meta-reviewer can identify the (meta)-reviewer of a paper they authored with much higher confidence than before observing the batched timing; on the other hand, by learning that the pair of comments was unbatched, the meta-reviewer's posterior hardly changes from the prior. 
Our one-sided privacy definition captures this asymmetric privacy risk. It ensures that an adversary does not learn much about the sensitive information of whether two comments are likely to be batched after observing the time that comments get posted, while allowing the adversary to potentially learn the insensitive information that two comments were unbatched.

\section{Theoretical Results}

In this section, we present our main theoretical results. First, in Section~\ref{sec:alg_framework}, we propose an algorithmic framework to design comment posting mechanisms that guarantee $(\eps, \gap)$-one-sided differential privacy under batching. In this framework, we add random noise to the timestamps of batched and unbatched comments, drawing the noise from a pair of distributions $(\distrB, \distrU)$ that depend on parameters $\eps$ and $\gap$.

Within this framework, there are many possible choices of the noise distributions $(\distrB, \distrU)$, and we investigate them in Section~\ref{sec:distributions}. For instance, one could use one-sided analogues of distributions commonly used for two-sided differential privacy, like exponential noise, which is the absolute value of the Laplace distribution~\cite{diff_privacy_defn}, or one-sided staircase noise~\cite{staircase} (whose two-sided version is known to be optimal for two-sided DP~\cite{staircase}). However, we show that perhaps surprisingly, these distributions are all sub-optimal for the privacy-delay trade-off. 

In Section~\ref{sec:pareto} we provide another distribution -- a zero-inflated uniform distribution with carefully chosen parameters -- that we show guarantees one-sided differential privacy in our setting and also achieves a Pareto-optimal privacy-delay trade-off. 

Finally, in Section~\ref{sec:dp_impossibility}, we motivate the usefulness of our one-sided DP formulation as a means of capturing the privacy-delay trade-off by showing that the popular two-sided definition of DP does not yield a useful privacy-delay trade-off for valid comment posting mechanisms. 

\subsection{Algorithmic Framework}
\label{sec:alg_framework}

In Algorithm~\ref{alg:dp_random_framework}, we present a general recipe for designing randomized delay mechanisms. The meta-algorithm receives as input privacy parameters $\eps$ and $\gap$ as well as probability distributions $\distrB$ and $\distrU$ that depend on $\eps$ and $\gap$. We will then prove that if pairs of distributions satisfy an ``indistinguishability'' property then Algorithm~\ref{alg:dp_random_framework} yields a $(\eps, \gap)$-OSDP mechanism. 

\begin{algorithm}
\caption{Framework for Designing a Randomized Delay Mechanism}
\label{alg:dp_random_framework}
\begin{algorithmic}
   \STATE {\bfseries Input:} privacy parameter $\eps > 0$, maximum time gap $\gap > 0$, noise addition distributions $\distrB$ and $\distrU$
   \FOR{each comment arrival time $\ts$}
    \IF{a set of batched comments arrives}
    \STATE For each comment, independently sample $\delay \sim \distrB(\eps/2, \gap)$ and post the action at time $\ts + \delay$. 
    \ELSIF{if an unbatched comment arrives}
    \STATE Post the comment at time $\ts + \delay$ where $\delay \sim \distrU(\eps/2, \gap)$
    \ENDIF 
    \ENDFOR 
\end{algorithmic}
\end{algorithm}

Mechanisms within this framework satisfy two useful qualitative properties for deployment in real applications. First, because the noise is sampled at arrival time, we can tell each user the duration of the delay on their comment as soon as they create it. Second, because the noise is sampled independently for each comment, the algorithm does not require a centralized coordinator to determine post times. This enables privacy-sensitive individuals to implement the algorithm for themselves. This ability to be implemented locally is a necessary property for use in cryptocurrencies where there is no central trusted server. 

Now, any choice of $(\distrB, \distrU)$ can satisfy $(\eps, \gap)$-OSDP as long as $\distrB$ and $\distrU$ are indistinguishable in the following sense:
\begin{definition}[One-Sided Indistinguishable Distributions]
\label{defn:indisting_distr}
Let $\distrB$ and $\distrU$ be non-negative random variables. We say that the ordered pair  $(\distrB, \distrU)$ is \emph{$(\eps, \gap)$-one-sided indistinguishable} if, for any measurable set $S \subseteq \R$ and any $\ts_0 \in [0, \gap]$, the distributions satisfy:
\begin{align*}
\Pr[\distrB \in S] \leq e^\eps \Pr[\distrU \in S-\ts_0],
\end{align*}
where for any $S \subseteq \R, t \in \R :\: S - t = \{s - t | s \in S\}$.
\end{definition}

The following theorem shows sufficiency of such one-sided indistinguishable distributions for guaranteeing privacy. 
\begin{theorem}[Privacy of Randomized Delay Mechanisms]
\label{thm:priv_alg}
Let $(\distrB, \distrU)$ be any pair of $(\eps/2, \gap)$-one-sided indistinguishable distributions. Then, Algorithm~\ref{alg:dp_random_framework} using $\distrB$  and $\distrU$ as noise-addition distributions guarantees $(\eps, \gap)$-one-sided differential privacy.
\end{theorem}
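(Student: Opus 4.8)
The plan is to exploit the fact that the mechanism of Algorithm~\ref{alg:dp_random_framework} adds \emph{independent} noise to each comment, so that the law of its output is a product measure over the posted times of the individual comments, and between two $\gap$-neighboring inputs at most two of these factors change. Fix $\gap$-neighboring inputs $\inpBatch, \inp$, where $\inpBatch$ is obtained from $\inp$ by moving the later comment $c' = (\act', \ts', \paper', \reviewer)$ of a pair onto the arrival time $\ts$ of the earlier comment $c = (\act, \ts, \paper, \reviewer)$, with $\ts_0 := \ts' - \ts \in (0,\gap]$ and $\paper \neq \paper'$.

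\emph{Step 1: reduce to the infinite horizon and to a product measure.} Truncating an output at time $\finiteTime$ -- keeping exactly the comments posted at or before $\finiteTime$ -- is a measurable operation that does not depend on the arrival times, so it suffices to show $\Pr[\mech(\inpBatch) \in \out'] \le e^{\eps}\Pr[\mech(\inp) \in \out']$ for every measurable set $\out'$ of untruncated outputs. Since the content, paper, and reviewer of each comment are deterministic given the input, and the two inputs share the same comments up to the timestamp of $c'$, the output of $\mech(\Arriv)$ is the image, under one fixed measurable labeling map $\iota$ (coordinate $i \mapsto$ the $i$-th comment's label together with its posted time), of the vector of posted times $(\ts_i^{\Arriv} + \delay_i^{\Arriv})_i$, where $\delay_i^{\Arriv}$ is drawn independently from $\distrB(\eps/2,\gap)$ if comment $i$ is batched in $\Arriv$ and from $\distrU(\eps/2,\gap)$ otherwise. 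By independence, $\Pr[\mech(\Arriv) \in \out'] = \big(\bigotimes_i \mu_i^{\Arriv}\big)\big(\iota^{-1}(\out')\big)$ with $\mu_i^{\Arriv} := \mathrm{Law}(\ts_i^{\Arriv} + \delay_i^{\Arriv})$, and since $\iota^{-1}(\out')$ is a fixed measurable set it is enough to prove $\bigotimes_i \mu_i^{\inpBatch} \le e^{\eps}\bigotimes_i \mu_i^{\inp}$ on all measurable sets.

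\emph{Step 2: only the $c$- and $c'$-factors change.} The key structural claim is that $\mu_i^{\inpBatch} = \mu_i^{\inp}$ for every comment $i \notin \{c, c'\}$. The only comment whose arrival time differs between the inputs is $c'$; for batch membership, one uses that the $\gap$-neighboring relation forces $c'$ to arrive unbatched in $\inp$ (it ``moves to create a batch''), so deleting $c'$ from time $\ts'$ cannot change any other comment's batch status, while inserting it at time $\ts$ can only affect the batch containing $c$ -- turning $c$ from unbatched to batched, or (if $c$ was already batched in $\inp$) leaving $c$'s status, hence $\mu_c$, unchanged. Checking this by unwinding the definitions of ``batched'' and ``$\gap$-neighboring,'' including degenerate cases (e.g., several arrivals sharing a reviewer, paper, and timestamp), is the step I expect to be the most delicate part of the argument.

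\emph{Step 3: peel off the two changed factors.} I would use the elementary fact that if $\alpha_1(A) \le e^{\beta}\alpha_2(A)$ for all measurable $A$, then $(\alpha_1 \otimes \sigma)(E) \le e^{\beta}(\alpha_2 \otimes \sigma)(E)$ for all measurable $E$ and every probability measure $\sigma$ on the remaining coordinates (integrate the sectionwise inequality over $\sigma$). Applying this once to the $c$-coordinate and once to the $c'$-coordinate reduces the claim to two one-dimensional inequalities. For $c$: $\mu_c^{\inpBatch}(A) = \Pr[\ts + \distrB \in A] \le e^{\eps/2}\Pr[\ts + \distrU \in A] = e^{\eps/2}\mu_c^{\inp}(A)$, which is $(\eps/2,\gap)$-one-sided indistinguishability applied to $S = A - \ts$ with shift $\ts_0 = 0$ (and an equality if $c$ was already batched). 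For $c'$: $\mu_{c'}^{\inpBatch}(A) = \Pr[\ts + \distrB \in A] \le e^{\eps/2}\Pr[\ts' + \distrU \in A] = e^{\eps/2}\mu_{c'}^{\inp}(A)$, using $\Pr[\ts' + \distrU \in A] = \Pr[\distrU \in (A-\ts) - \ts_0]$ and applying one-sided indistinguishability to $S = A - \ts$ with the admissible shift $\ts_0 = \ts' - \ts \in (0,\gap]$. Chaining the two bounds gives the overall factor $e^{\eps/2}\cdot e^{\eps/2} = e^{\eps}$, establishing $(\eps,\gap)$-OSDP. Note that the same inequalities, taken with $S$ of the form $(\finiteTime - \ts, \infty)$, also control the ``not yet posted by time $\finiteTime$'' events, so Step 1 handles the finite horizon with no extra bookkeeping.
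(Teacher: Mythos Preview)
Your proof is correct and follows essentially the same approach as the paper: factor the output distribution as a product over independently delayed comments, observe that between $\gap$-neighboring inputs only the factors for $c$ and $c'$ can change (with $c$ possibly unchanged if it was already batched in $\inp$), and bound each changing factor by $e^{\eps/2}$ via one-sided indistinguishability with shifts $0$ and $\ts_0$ respectively. The paper packages the same argument through an intermediate lemma giving a necessary-and-sufficient product condition $\Pr[\distrB\in\out]\Pr[\distrB\in\out']\le e^{\eps}\Pr[\distrU\in\out]\Pr[\distrU\in\out'-\ts_0]$, which it then reuses in later optimality proofs, but the sufficiency direction of that lemma is exactly your Steps~2--3.
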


We give the proof of the above theorem in Appendix~\ref{app:priv_alg}. The proof follows by observing that in neighboring inputs, a pair of comments that was batched becomes unbatched with one comment arrival moved forward by at most $\gap$ time units. Hence, if $\distrB$ and $\distrU$ have a likelihood ratio bounded by $e^{\eps/2}$ for any values within $\gap$ time units of one another, it is hard to distinguish whether the mechanism was given an input with two unbatched comments arriving $\gap$ time units apart or two batched comments arriving at the same time (up to a multiplicative factor of $e^{\eps})$.

\subsection{Privacy-preserving delay distributions}
\label{sec:distributions}

We now describe a number of possible choices for $(\eps, \gap)$-one-sided indistinguishable distributions $(\distrB, \distrU)$ that can be used in our algorithmic framework. We show that we can use an exponential distribution, which is the one-sided version of the Laplace distribution. We can also add noise from the absolute value of the staircase distribution, which was proven in~\cite{staircase} to be optimal for noise addition in two-sided DP, giving smaller delay than the exponential. Alternatively, we can add noise to unbatched comments drawn from a zero-inflated uniform distribution where we add $0$ delay with probability $1-\pUnif$ (for some parameter $\pUnif$) and delay drawn from a uniform distribution with probability $\pUnif$.

\begin{theorem}[Choices of One-Sided Indistinguishable Distributions]
\label{thm:privacy_distributions}
The following choices of $\distrB$ and $\distrU$ are $(\eps, \gap)$-one-sided indistinguishable:
\begin{enumerate}[label=(\arabic*)]
    \item \emph{Exponential}\footnote{In the notation to follow, we parameterize the exponential distribution by its rate.}: $\distrB = \gap + \text{Exponential}(\eps / \gap)$, $\distrU = \text{Exponential}(\eps/\gap)$ 
    \item \emph{Staircase~\cite{staircase}}\footnote{The staircase distribution is parameterized by $3$ values $\eps, \Delta, \gap$ in \cite{staircase}. Here, we take $\text{Staircase}(\eps, \gap)$ to mean the staircase distribution with $\eps = \eps$, $\Delta = \gap$ and $\gamma = \frac{1}{1+e^{\eps/2}}$, which is the optimal value of $\gamma$ to minimize expectation per \cite{staircase}.}: $\distrB = \gap + |\text{Staircase}(\eps, \gap)|$, $\distrU = |\text{Staircase}(\eps, \gap)|$ 
    \item \emph{Uniform}: $\distrB = \Unif(\gap, \frac{1}{1 - e^{-\eps}}  \gap)$, $\distrU = \Unif(0,  \frac{1}{1 - e^{-\eps}}  \gap)$
    \item \emph{Zero-inflated Uniform with parameter $\pUnif$.} For $e^{-\eps} <  \pUnif \leq 1$: \begin{align*}
    \distrB &= \text{Uniform}\left(\gap, \tfrac{\pUnif}{\pUnif-e^{-\eps}}  \gap\right) \\
    \distrU &= \begin{cases}
    0 & \text{with probability } 1-\pUnif \\ 
    \text{Uniform}\left(0, \tfrac{\pUnif}{\pUnif-e^{-\eps}}  \gap\right) & \text{with probability } \pUnif.
    \end{cases}
\end{align*}
\end{enumerate}
These choices of $(\distrB, \distrU)$ incur the following expected delays:
\begin{enumerate}[label=(\arabic*)]
    \item \emph{Exponential:} $\E[\distrB] = \gap (1 + \frac{1}{\eps})$ and $\E[\distrU] = \gap  \frac{1}{\eps}$
      \item \emph{Staircase}: $\E[\distrB] = \gap (1 + \frac{e^{\eps/2}}{e^\eps - 1})$, $\E[\distrU] = \gap  \frac{e^{\eps/2}}{e^\eps - 1}$
    \item \emph{Uniform:} $\E[\distrB] = \frac{1}{2}\gap \left(1 + \frac{e^\eps}{e^\eps - 1} \right)$  and $\E[\distrU] = \frac{1}{2}\gap \left(\frac{e^\eps}{e^\eps - 1}\right)$
    \item \emph{Zero-inflated Uniform with parameter $\pUnif$}: $\E[\distrB] = \frac{1}{2} \gap \left(\pUnif + \frac{\pUnif e^\eps}{\pUnif e^\eps -1} \right)$ and $\E[\distrU] = \frac{1}{2} \gap \left( \frac{\pUnif^2 e^\eps}{\pUnif e^\eps - 1} \right)$.
\end{enumerate}
\end{theorem}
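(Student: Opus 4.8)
The statement splits into two essentially independent halves — the four one-sided indistinguishability claims, and the expected-delay formulas — and I would handle indistinguishability first, since that is where the content lies. The key preliminary reduction is to turn Definition~\ref{defn:indisting_distr} into a pointwise density comparison. Fixing $\ts_0\in[0,\gap]$, the inequality $\Pr[\distrB\in S]\le e^\eps\Pr[\distrU\in S-\ts_0]=e^\eps\Pr[\distrU+\ts_0\in S]$ for all measurable $S$ is implied by $f_{\distrB}(x)\le e^\eps h_{\ts_0}(x)$ for almost every $x$, where $f_{\distrB}$ is the density of $\distrB$ (atomless in all four cases) and $h_{\ts_0}$ is \emph{any} lower bound on the absolutely continuous part of the law of $\distrU+\ts_0$ — an atom of $\distrU$ (as in the zero-inflated case) only enlarges the right-hand side and can be dropped. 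So the whole indistinguishability argument reduces to a density-domination check that holds uniformly over $\ts_0\in[0,\gap]$.

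For the exponential and staircase cases I would exploit that $\distrB\stackrel{d}{=}\gap+\distrU$, which makes the target $f_{\distrU}(y)\le e^\eps f_{\distrU}\!\left(y+(\gap-\ts_0)\right)$ for $y\ge 0$, where $\gap-\ts_0\in[0,\gap]$. This follows from two properties of $f_{\distrU}$: (i) it is non-increasing on $[0,\infty)$, and (ii) $f_{\distrU}(y+\gap)\ge e^{-\eps}f_{\distrU}(y)$; combining them gives $f_{\distrU}(y+(\gap-\ts_0))\ge f_{\distrU}(y+\gap)\ge e^{-\eps}f_{\distrU}(y)$. For $\distrU=\text{Exponential}(\eps/\gap)$ both hold, with (ii) an equality via $e^{-\eps(y+\gap)/\gap}=e^{-\eps}e^{-\eps y/\gap}$. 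For $\distrU=|\text{Staircase}(\eps,\gap)|$ I would invoke the explicit piecewise-constant staircase density from~\cite{staircase} (with $\Delta=\gap$), which is non-increasing on $[0,\infty)$ and satisfies the defining scaling $f(y+\gap)=e^{-\eps}f(y)$; folding to the absolute value multiplies the density on $(0,\infty)$ by $2$, preserving both (i) and (ii).

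For the zero-inflated uniform case — which contains the plain uniform case as $\pUnif=1$ — I would compute densities directly. With $c=\tfrac{\pUnif}{\pUnif-e^{-\eps}}$ (well-defined, positive, and $>1$ precisely because $e^{-\eps}<\pUnif\le 1$), $\distrB=\text{Uniform}(\gap,c\gap)$ has density $\tfrac{\pUnif-e^{-\eps}}{e^{-\eps}\gap}$ on $[\gap,c\gap]$ and $0$ elsewhere, while the continuous part of $\distrU+\ts_0$ has density $\tfrac{\pUnif-e^{-\eps}}{\gap}$ on $(\ts_0,c\gap+\ts_0]$. Since $\ts_0\le\gap$ yields the inclusion $(\gap,c\gap]\subseteq(\ts_0,c\gap+\ts_0]$, on the support of $f_{\distrB}$ we get $f_{\distrB}(x)=e^\eps\cdot\tfrac{\pUnif-e^{-\eps}}{\gap}$ \emph{exactly}, so the density-domination inequality holds (with equality on the overlap — which is morally why uniform noise turns out to be tight/optimal), and the atom at $\ts_0$ only helps.

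The expected-delay formulas are then routine integrations: $\E[\text{Exponential}(\eps/\gap)]=\gap/\eps$; $\E[\text{Uniform}(a,b)]=(a+b)/2$, with an extra factor $\pUnif$ on the $\distrU$ side from the zero-inflation mass; and adding $\gap$ handles each $\distrB$. For the staircase mean one sums $x f(x)$ over the piecewise-constant pieces, obtaining a geometric series which, after substituting $\gamma=\tfrac{1}{1+e^{\eps/2}}$ (this makes the normalizing constant $\gamma+e^{-\eps}(1-\gamma)$ collapse to $e^{-\eps/2}$), evaluates to $\gap\,\tfrac{e^{\eps/2}}{e^\eps-1}$. The only steps requiring genuine care are this last geometric sum and the bookkeeping of the atom at $0$ when lower-bounding $\Pr[\distrU+\ts_0\in S]$; neither is a real obstacle, so the substance of the theorem is just organizing the uniform-in-$\ts_0$ density comparison of the second and third paragraphs.
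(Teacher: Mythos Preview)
Your proposal is correct and follows essentially the same approach as the paper: reduce the indistinguishability condition to a pointwise density-ratio bound $f_{\distrB}(x)\le e^\eps f_{\distrU}(x-\ts_0)$ and verify it case by case, handling the atom in the zero-inflated case separately. Your packaging of the exponential and staircase cases via the two structural properties (non-increasing density plus the $\gap$-shift scaling $f_{\distrU}(y+\gap)\ge e^{-\eps}f_{\distrU}(y)$) is a slightly cleaner unification than the paper's direct ratio computation for the exponential and its citation to~\cite{staircase} for the staircase, but the underlying argument is the same.
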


The proof of the above theorem can be found in Appendix~\ref{app:pf_indistinguish}. Note that the (uniform, uniform) noise additions are a special case of (uniform, zero-inflated uniform) taking $\pUnif = 1$. We highlight them separately in Section~\ref{thm:privacy_distributions} as we introduce the zero-inflated uniform distribution for the first time here. In the next section, we show that a zero-inflated uniform distribution is Pareto optimal for appropriate choice of $\pUnif$.

Notably, the choice of parameters for the exponential and staircase distributions given in Theorem~\ref{thm:privacy_distributions} are the optimal choice of parameters in the sense that they minimize expected delay at fixed values of privacy parameters $\eps$ and $\gap$ when adding i.i.d. exponential or staircase noise plus a constant offset to all comments:

\begin{theorem}[Optimal Choice of Parameters for the Exponential and Staircase Distributions]

\label{thm:distr_params}
Let $\distrB, \distrU$ be non-negative noise-addition distributions that guarantee $(\eps, \gap)$-OSDP when used in Algorithm~\ref{alg:dp_random_framework} where $\distrB = \distrOffset_\distrB + \distrSingle$ and $\distrU = \distrOffset_\distrU + \distrSingle$ for constants $\distrOffset_\distrB, \distrOffset_\distrU > 0$ and non-negative random variable $\distrSingle$. Then, if $\distrSingle$ is an exponential random variable or a staircase random variable, $\E[\distrB]$ and $\E[\distrU]$ are minimized at any values of $\eps, \gap$ by the choice of parameters in Theorem~\ref{thm:privacy_distributions} such that $\distrB, \distrU$ are $(\eps/2, \gap)$-one-sided indistinguishable.  
\end{theorem}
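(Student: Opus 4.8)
The plan is to determine exactly which values of the offsets $(\distrOffset_{\distrB}, \distrOffset_{\distrU})$ and of the remaining internal parameter of $\distrSingle$ are compatible with $(\eps,\gap)$-OSDP in Algorithm~\ref{alg:dp_random_framework}, and then minimize the two affine delay objectives $\E[\distrB] = \distrOffset_{\distrB} + \E[\distrSingle]$ and $\E[\distrU] = \distrOffset_{\distrU} + \E[\distrSingle]$ over that feasible set. Write $D := \distrOffset_{\distrB} - \distrOffset_{\distrU}$. Since $\distrB$ puts no mass below $\distrOffset_{\distrB}$ and $\distrU$ no mass below $\distrOffset_{\distrU}$, the absolute-continuity requirement implicit in $(\eps,\gap)$-OSDP (any output with positive probability on $\inpBatch$ has positive probability on $\inp$), applied to a neighbor in which a comment arriving up to $\gap$ later is moved back to join an earlier comment, forces $D \ge \gap$.

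The crucial structural step is to show that $(\eps,\gap)$-OSDP also forces the likelihood ratio $L := \sup_x f_{\distrB}(x)/f_{\distrU}(x)$ to satisfy $L \le e^{\eps/2}$; extracting this factor of two is the heart of the proof. I would use the smallest neighbor pair witnessing it: let $\inp$ consist of two comments on different papers by one reviewer, at times $\ts$ and $\ts + \ts_0$ with $\ts_0 \in (0,\gap]$ and both unbatched, and let $\inpBatch$ move the second comment back to $\ts$ so the two form a batch. In passing from $\inp$ to $\inpBatch$, \emph{both} comments switch their (independent) delay from $\distrU$ to $\distrB$, the second additionally shifting its base time back by $\ts_0$. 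Restricting the output to a product of two time intervals and maximizing the likelihood ratio independently in each coordinate, $(\eps,\gap)$-OSDP gives $r(0)\, r(\ts_0) \le e^{\eps}$ for every $\ts_0 \in (0,\gap]$, where $r(t) := \sup_x f_{\distrB}(x)/f_{\distrU}(x-t)$; letting $\ts_0 \downarrow 0$, for which $r(\ts_0) \to r(0) = L$ in both families, yields $L^2 \le e^{\eps}$. Conversely $r$ is non-increasing, so $\sup_{t \in [0,\gap]} r(t) = L$, and hence $\{D \ge \gap,\ L \le e^{\eps/2}\}$ is precisely $(\eps/2,\gap)$-one-sided indistinguishability, which by Theorem~\ref{thm:priv_alg} is sufficient for $(\eps,\gap)$-OSDP; so these two inequalities characterize the feasible region.

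It remains to translate $L \le e^{\eps/2}$ into the free parameter of $\distrSingle$. For the exponential, $\distrB$ and $\distrU$ are a common-rate exponential shifted by $\distrOffset_{\distrB}$ and $\distrOffset_{\distrU}$, so the density ratio on the shared support is the constant $e^{\lambda D}$ (rate $\lambda$); thus $\lambda D \le \eps/2$, and with $D \ge \gap$ this gives $\lambda \le \eps/(2\gap)$, i.e. $\E[\distrSingle] = 1/\lambda \ge 2\gap/\eps$. For the staircase, parameterized by rate $\beta$, period (taken to be $\gap$ to match Theorem~\ref{thm:privacy_distributions}) and shape $\gamma$, its one-sided density $h$ is non-increasing with $h(u+\gap) = e^{-\beta}h(u)$, so $\sup_u h(u)/h(u+\delta) = e^{\beta\lceil \delta/\gap \rceil}$ and $L = e^{\beta\lceil D/\gap \rceil}$; together with $D \ge \gap$ this forces $\beta \le \eps/2$. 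Using that $\E[\distrSingle]$ is minimized over $\gamma$ at $\gamma^\star = 1/(1+e^{\beta/2})$ with value $\gap\, e^{\beta/2}/(e^{\beta}-1)$~\cite{staircase}, which is decreasing in $\beta$, we get $\E[\distrSingle] \ge \gap\, e^{\eps/4}/(e^{\eps/2}-1)$.

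Both objectives are $\distrOffset_{\distrB} + \E[\distrSingle]$ and $\distrOffset_{\distrU} + \E[\distrSingle]$, so they are minimized \emph{simultaneously} by taking $\distrOffset_{\distrU}$ as small as allowed, then $\distrOffset_{\distrB} = \distrOffset_{\distrU} + \gap$ (the $D = \gap$ boundary), and $\E[\distrSingle]$ at its minimum — $\lambda = \eps/(2\gap)$ for the exponential, or $\beta = \eps/2$ and $\gamma = 1/(1+e^{\eps/4})$ for the staircase. These are exactly the distributions of Theorem~\ref{thm:privacy_distributions} with $\eps$ replaced by $\eps/2$, and the resulting expected delays match that theorem at argument $\eps/2$, which is the claim. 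The hardest step, I expect, is the structural one: identifying the two-comments-merging neighbor as the worst case and book-keeping the two independent $\distrU \!\to\! \distrB$ switches to extract the $e^{\eps/2}$, and — for the staircase — pushing the likelihood-ratio computation through the piecewise-geometric density while justifying the mean-minimizing shape parameter; the exponential case is comparatively routine, its likelihood ratio being constant on the shared support.
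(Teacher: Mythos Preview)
Your proposal is correct and follows essentially the same route as the paper: both establish $\distrOffset_\distrB - \distrOffset_\distrU \ge \gap$ via a support argument, extract the $e^{\eps/2}$ likelihood-ratio bound by considering the two-comment neighbor and letting $\ts_0 \to 0$ in the resulting product inequality (the paper packages this as Lemma~\ref{lem:privacy_condition} applied with $\out = \out'$ and $\ts_0 = 0$, after first reducing to $\distrOffset_\distrU = 0$, $\distrOffset_\distrB = \gap$ via a separate lemma), and then optimize the remaining rate parameter. The one place your write-up is thinner than the paper's is the staircase case: you fix the period to $\gap$ by fiat and optimize only over $\beta$ and $\gamma$, whereas the paper invokes Theorem~4 of~\cite{staircase} to handle optimality over all staircase parameters at once under the derived constraint $\Pr[\distrSingle \in \out - \gap]/\Pr[\distrSingle \in \out] \le e^{\eps/2}$.
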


The proof of the above theorem can be found in Appendix~\ref{appendix:exp_staircase_proof}. By Theorem~\ref{thm:privacy_distributions} and Theorem~\ref{thm:distr_params}, adding i.i.d. exponential or staircase noise plus a constant offset is strictly sub-optimal in minimizing expected delay as zero-inflated uniform noise can achieve lower delay at the same privacy level.

\begin{corollary}
Among $(\eps, \gap)$-OSDP mechanisms following the framework of Algorithm~\ref{alg:dp_random_framework}, taking $\distrB$ and $\distrU$ to be i.i.d. exponential or staircase distributions (with constant offsets) is strictly sub-optimal in minimizing $\E[\distrB]$ and $\E[\distrU]$ for any values of $\eps$ and $\gap$. In particular, using the zero-inflated uniform mechanism with appropriate choice of $\pUnif$ can achieve lower expected delay for both $\E[\distrB]$ and $\E[\distrU]$ at any values of privacy parameters $\eps$ and $\gap$.
\end{corollary}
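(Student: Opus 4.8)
The plan is to derive the Corollary directly from the closed-form expected delays in Theorems~\ref{thm:privacy_distributions} and~\ref{thm:distr_params}, with no new machinery. Theorem~\ref{thm:distr_params} tells us that the best i.i.d.-exponential-plus-offset and i.i.d.-staircase-plus-offset mechanisms realizable in the framework of Algorithm~\ref{alg:dp_random_framework} attain exactly the pairs $(\E[\distrB],\E[\distrU])$ listed in items~(1) and~(2) of Theorem~\ref{thm:privacy_distributions}, evaluated at the internal privacy parameter passed to the distributions (which is $\eps/2$ when the mechanism is $(\eps,\gap)$-OSDP). Since this substitution is identical for every mechanism under comparison, I will carry the internal parameter as $\eps$ throughout, so that it suffices to work at an arbitrary $\eps>0$ and $\gap>0$. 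Hence the whole statement reduces to: exhibit a single $\pUnif=\pUnif(\eps)\in(e^{-\eps},1]$ for which the zero-inflated uniform mechanism of item~(4) has strictly smaller $\E[\distrB]$ and strictly smaller $\E[\distrU]$ than both the exponential and the staircase mechanisms. Privacy of the resulting mechanism is then free from Theorem~\ref{thm:privacy_distributions}(4) together with Theorem~\ref{thm:priv_alg}.

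First I would remove the exponential from the picture, by checking that the staircase delays are componentwise no larger than the exponential delays. This is exactly $\tfrac{e^{\eps/2}}{e^\eps-1}\le\tfrac1\eps$, i.e.\ $2\sinh(\eps/2)\ge\eps$, which is immediate from the Taylor series of $\sinh$; so from here on the target is only the staircase pair $\big(\gap\,(1+\tfrac{e^{\eps/2}}{e^\eps-1}),\ \gap\,\tfrac{e^{\eps/2}}{e^\eps-1}\big)$.

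Next I would choose $\pUnif$. Writing $q=e^{-\eps}\in(0,1)$, the staircase unbatched delay equals $\gap\,\tfrac{\sqrt q}{1-q}$ and its batched delay equals $\gap\,(1+\tfrac{\sqrt q}{1-q})$. Minimizing the zero-inflated uniform unbatched delay $\tfrac{\gap}{2}\cdot\tfrac{\pUnif^2}{\pUnif-q}$ over $\pUnif\in(q,1]$ gives the unconstrained minimizer $\pUnif=2q$, so I would take $\pUnif=\min\{2q,1\}=\min\{2e^{-\eps},1\}$. For $\eps\ge\ln 2$ this is $\pUnif=2q$, which yields $\E[\distrU]=2q\,\gap$ and $\E[\distrB]=(1+q)\,\gap$; the two needed inequalities $2q<\tfrac{\sqrt q}{1-q}$ and $q<\tfrac{\sqrt q}{1-q}$ are equivalent to $2\sqrt q(1-q)<1$ and $\sqrt q(1-q)<1$, which hold for all $q\in(0,1)$ since, with $u=\sqrt q$, $\max_{u\in(0,1)}(u-u^3)=\tfrac{2}{3\sqrt3}$ and $\max_{u\in(0,1)}2(u-u^3)=\tfrac{4}{3\sqrt3}$ are both strictly below $1$. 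For $\eps<\ln 2$ the clipped choice is $\pUnif=1$ (plain uniform), giving $\E[\distrU]=\tfrac{\gap}{2(1-q)}$ and $\E[\distrB]=\tfrac{\gap}{2}\big(1+\tfrac1{1-q}\big)$; now $\E[\distrU]<\gap\tfrac{\sqrt q}{1-q}$ reduces to $\tfrac12<\sqrt q$, i.e.\ $q>\tfrac14$, and $\E[\distrB]<\gap(1+\tfrac{\sqrt q}{1-q})$ reduces to $\tfrac{1-2\sqrt q}{1-q}<1$, which again follows from $q>\tfrac14$ (the left side is then negative); and $\eps<\ln 2$ forces $q=e^{-\eps}>\tfrac12>\tfrac14$. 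This settles both regimes, the two values of $\pUnif$ agree at $\eps=\ln 2$, and the choice $\pUnif=\min\{2e^{-\eps},1\}$ therefore works uniformly.

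The step I expect to be the main obstacle is the small-$\eps$ regime $\eps<\ln 2$: there the unconstrained optimizer $2e^{-\eps}$ overshoots $1$ and is infeasible, so plain uniform must be used instead and the staircase comparison re-derived; the reason it still goes through is precisely the slack $q>\tfrac14$, which must be isolated and invoked in both coordinates. An alternative to this case split is to appeal to the full Pareto-frontier characterization of Section~\ref{sec:pareto}: the exponential and staircase mechanisms are feasible $(\eps,\gap)$-OSDP mechanisms in the framework but do not lie on the frontier, hence are strictly dominated, and one can then move down-and-to-the-left to a zero-inflated uniform frontier point that beats them in both coordinates. I would include this as a remark but prefer the elementary computation above, as it is self-contained and produces the explicit value of $\pUnif$.
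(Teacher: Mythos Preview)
Your proposal is correct and follows the same route the paper takes: the paper states the Corollary without proof, treating it as an immediate consequence of the closed-form expected delays in Theorems~\ref{thm:privacy_distributions} and~\ref{thm:distr_params}, and your argument is precisely that comparison made explicit. Your explicit choice $\pUnif=\min\{2e^{-\eps},1\}$ and the accompanying elementary inequalities (together with the reduction from exponential to staircase via $2\sinh(\eps/2)\ge\eps$) supply the details the paper omits; the alternative you mention---invoking the Pareto-frontier characterization of Section~\ref{sec:pareto}---is also valid but, as you note, less self-contained.
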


In this setting, the exponential and staircase distributions typically used in two-sided DP add significantly more delay than zero-inflated uniform noise, especially at small values of $\eps$. In Figure~\ref{fig:exp_delay}, we show the expected delay for the optimal exponential, staircase, uniform, and zero-inflated uniform at each setting of $\eps$. For both batched and unbatched comments, the uniform and zero-inflated uniform distributions add a factor of nearly two times less delay than the staircase and exponential at small values of $\eps$. For larger values of $\eps$, all of the aforementioned distributions add similar delay, with uniform adding the least delay to batched comments and the zero-inflated uniform adding the least delay to unbatched. In the next section, we formally prove that zero-inflated uniform noise is Pareto optimal and characterize the optimal choice of $\pUnif$ for any objective function that is a weighted sum of $\E[\distrU]$ and $\E[\distrB]$ based on the setting of $\eps$.

\begin{figure}
\centering
\begin{subfigure}{.5\textwidth}
  \centering
  \includegraphics[width=.8\linewidth]{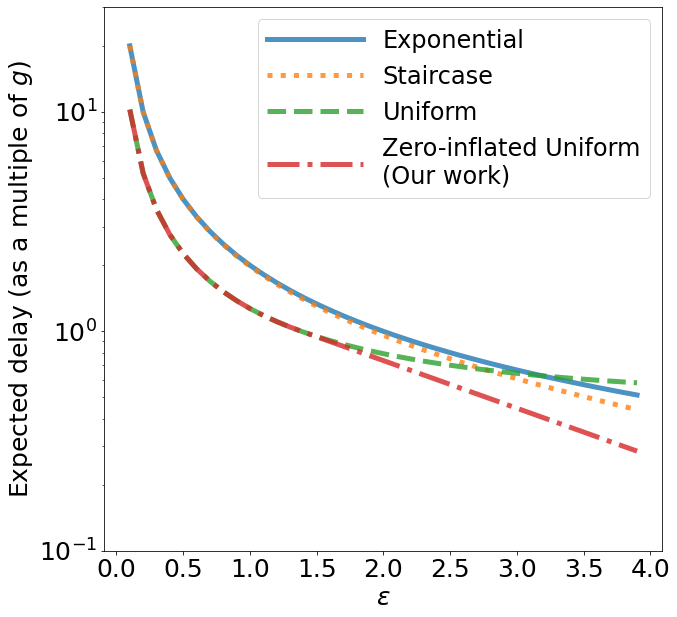}
  \caption{Expected delay to unbatched comments}
  \label{fig:unbatched_expected_delay}
\end{subfigure}%
\begin{subfigure}{.5\textwidth}
  \centering
  \includegraphics[width=.8\linewidth]{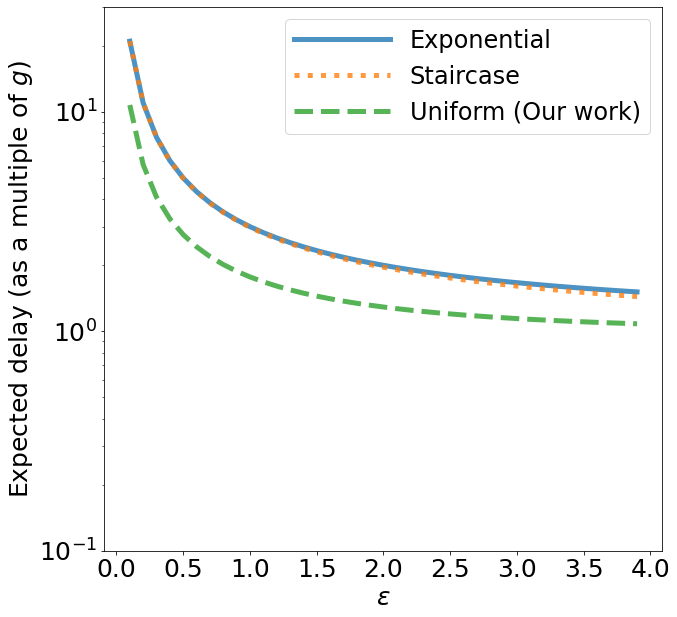}
  \caption{Expected delay to batched comments}
  \label{fig:batched_expected_delay}
\end{subfigure}
\caption{Expected delay (as a multiple of $\gap$) for Algorithm~\ref{alg:dp_random_framework} with the exponential, staircase, zero-inflated uniform and uniform distributions at varying values of $\eps$. All distributions use the optimal setting of parameters at a given $\eps$. The zero-inflated uniform parameter is chosen to minimize delay to unbatched comments. The delay (y axis) is plotted on a log scale.}
\label{fig:exp_delay}
\end{figure}

\subsection{Pareto-optimal Algorithm}
\label{sec:pareto}

\begin{algorithm}[tb]
\caption{Zero-Inflated Uniform Mechanism}
\label{alg:unif_mechanism}
\begin{algorithmic}
   \STATE {\bfseries Input:} privacy parameter $\eps > 0$, maximum time gap $\gap > 0$, weighting of expected delay to batched comments $\weightB \in [0,1]$
   \STATE 
   \STATE Set $\pUnif = \min\left\{e^{-\eps/2} \left(1 + \sqrt{1 + e^{\eps/2} \frac{\weightB}{1 - \weightB}} \right), 1 \right\}$
    \STATE
   \FOR{each comment arrival time $\ts$}
    \IF{a set of batched comments arrives}
    \STATE For each comment, independently sample $\delay \sim \Unif\left(\gap, \tfrac{\pUnif}{\pUnif-e^{-\eps/2}}  \gap\right)$ and post the comment at time $\ts + \delay$. 
    \ELSIF{an unbatched comment arrives}
    \STATE Post the comment at time $\ts + \delay$ where $\delay = 0$ with probability $1-\pUnif$ and $\delay \sim \Unif\left(0, \tfrac{\pUnif}{\pUnif-e^{-\eps/2}}  \gap\right)$ with probability $\pUnif$.
    \ENDIF 
    \ENDFOR 
\end{algorithmic}
\end{algorithm}

In this section, we derive the Pareto frontier (trading off the expected delay for batched and unbatched comments) of noise-addition distributions for a given $(\eps, \gap)$-one-sided indistinguishability constraint (Definition~\ref{defn:indisting_distr}). We show that adding zero-inflated uniform noise with appropriate choice of parameter $\pUnif$ achieves optimal expected delay among mechanisms that add independent noise to each comment. While the optimality result holds only within the class of mechanisms that adds \emph{independent} noise to each comment, this constraint allows for an algorithm to be implemented locally without requiring coordination by a centralized server. This constraint is a common property of many deployed privacy-preserving algorithms. For instance, local differential privacy \cite{local-dp} requires that randomization needed for privacy is added locally by each holder of a data-point, and the Tor anonymous network \cite{tor} protocol requires that initiators of connections choose the (random) path on which to send a message themselves.

Given an $(\eps, \gap)$-one-sided privacy constraint, our algorithmic framework (Algorithm~\ref{alg:dp_random_framework}) has many choices of noise-addition distributions that can guarantee privacy. In terms of delay, there are two quantities to optimize -- the delay incurred by batched comments and that incurred by unbatched comments. A natural utility objective to consider is a convex combination of the two expectations: 
\begin{align*}
\weightB  \E[\distrB] + (1-\weightB)  \E[\distrU],\qquad\text{for a given parameter }\weightB \in [0,1].
\end{align*}
The parameter $\weightB \in [0,1]$ determines how much weight is given to batched comments in the utility function. For example, a user of our algorithm may estimate the relative rate of batching in the system and set $\weightB$ to this value to optimize for the overall average expected delay across all comments. 

We present our main algorithm as Algorithm~\ref{alg:unif_mechanism}. Our algorithm follows our previously introduced framework (Algorithm~\ref{alg:dp_random_framework}). It chooses $\distrU$ as a zero-inflated uniform distribution with a carefully chosen value of parameter $\pUnif$ (dependent on $\eps$ and $\weightB$), and chooses $\distrB$ as a uniform distribution. The following theorem now  proves that for any privacy parameters our algorithm is indeed Pareto optimal -- it optimally trades off privacy and unbatched delay and batched delay.
\begin{theorem}[Pareto optimality of the Zero-Inflated Uniform Mechanism]
\label{thm:unif_opt}
Algorithm~\ref{alg:unif_mechanism} is Pareto optimal between expected delay to batched and unbatched comments at a given setting of $(\eps, \gap)$ among valid $(\eps, \gap)$-OSDP mechanisms that add independent noise to each comment. Further, given weight parameter $\weightB \in [0,1]$ and privacy parameters $(\eps, \gap)$ as input, Algorithm~\ref{alg:unif_mechanism} minimizes cost function $\weightB  \E[\distrB] + (1- \weightB)  \E[\distrU]$ at any given privacy level $(\eps, \gap)$ among mechanisms adding independent noise drawn from distributions $\distrB$ and $\distrU$ to batched and unbatched comments respectively.
\end{theorem}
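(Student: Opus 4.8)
The plan is to reduce the Pareto-optimality claim to a constrained optimization problem over the pair of one-sided indistinguishable distributions $(\distrB,\distrU)$, solve that problem, and show the minimizer is exactly the zero-inflated uniform pair with the stated $\pUnif$. First I would invoke Theorem~\ref{thm:priv_alg}: among mechanisms in the framework of Algorithm~\ref{alg:dp_random_framework} adding independent noise, guaranteeing $(\eps,\gap)$-OSDP is equivalent to choosing $(\distrB,\distrU)$ that are $(\eps/2,\gap)$-one-sided indistinguishable (I should also argue the converse direction — that \emph{any} valid independent-noise mechanism is captured by this framework, i.e. noise added to a batched comment must be supported on $[0,\infty)$ by the delay-only and no-fake-data constraints, and the indistinguishability constraint is necessary for OSDP on the relevant neighboring pairs, which is the content hinted at in the proof sketch of Theorem~\ref{thm:priv_alg} and in Section~\ref{sec:dp_impossibility}). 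So the problem becomes: minimize $\weightB\,\E[\distrB]+(1-\weightB)\,\E[\distrU]$ subject to, for all measurable $S$ and all $\ts_0\in[0,\gap]$, $\Pr[\distrB\in S]\le e^{\eps/2}\Pr[\distrU\in S-\ts_0]$, with $\distrB,\distrU$ non-negative.

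Next I would extract structure from the indistinguishability constraint. Taking $\ts_0=\gap$ and $S=[0,\gap)$ forces $\Pr[\distrB<\gap]\le e^{\eps/2}\Pr[\distrU<0]=0$, so $\distrB$ is supported on $[\gap,\infty)$; this recovers the ``batched comments delayed by at least $\gap$'' fact promised in Section~\ref{sec:pareto}. Then, writing $f_\distrB,f_\distrU$ for densities (handling an atom of $\distrU$ at $0$ separately — this is where the zero-inflation enters), the constraint with small $S$ around a point $x$ becomes a pointwise likelihood-ratio bound $f_\distrB(x)\le e^{\eps/2}f_\distrU(x-\ts_0)$ for every shift $\ts_0\in[0,\gap]$, i.e. $f_\distrB(x)\le e^{\eps/2}\inf_{\ts_0\in[0,\gap]}f_\distrU(x-\ts_0)$. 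The key observation is that to minimize the expected delays we want $f_\distrB$ and $f_\distrU$ to put mass as close to $0$ as the constraints allow; a standard exchange/rearrangement argument should show the optimal $f_\distrU$ is non-increasing on its support (shifting $\distrU$-mass leftward only relaxes the constraint and lowers $\E[\distrU]$), and that the optimal $f_\distrB$ saturates the binding constraint $f_\distrB(x)=e^{\eps/2}f_\distrU(x-\gap)$ on $[\gap,\cdot]$. Combining a non-increasing $\distrU$-density with the saturated relation and the fact that $\distrB$ must be a probability distribution pins the feasible shapes down to a one-parameter family, which turns out to be exactly the zero-inflated uniform family of Theorem~\ref{thm:privacy_distributions}(3); the expected-delay formulas there then reduce the objective to a one-dimensional convex function of $\pUnif$, and minimizing $\tfrac12\gap\bigl(\weightB(\pUnif+\tfrac{\pUnif}{\pUnif-e^{-\eps/2}})+(1-\weightB)\tfrac{\pUnif^2}{\pUnif-e^{-\eps/2}}\bigr)$ over $\pUnif\in(e^{-\eps/2},1]$ by calculus yields the stated $\pUnif=\min\{e^{-\eps/2}(1+\sqrt{1+e^{\eps/2}\tfrac{\weightB}{1-\weightB}}),1\}$ (the $\min$ clipping at $1$ handling the corner where the unconstrained stationary point exceeds $1$). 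Finally, the Pareto-optimality statement follows because sweeping $\weightB\in[0,1]$ traces out the whole frontier, using the convexity of the feasible region of $(\E[\distrB],\E[\distrU])$ (Lemma~\ref{lem:frontier_weighted_sum}) together with Boyd's result that every frontier point of a convex region optimizes some weighted sum.

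I expect the main obstacle to be the rearrangement/exchange step: rigorously showing that \emph{every} feasible $(\distrB,\distrU)$ is dominated (in both $\E[\distrB]$ and $\E[\distrU]$, or at least in the weighted objective) by one of the zero-inflated uniform form. The subtlety is that $\distrU$ may have an arbitrary atom at $0$ plus an arbitrary absolutely continuous part plus possibly other atoms, and the shift-over-all-$\ts_0\in[0,\gap]$ quantifier makes the constraint a genuine infimum rather than a single inequality; one must argue carefully that replacing $\distrU$ by the ``left-compressed'' distribution with the same total mass on each level set does not violate feasibility while not increasing $\E[\distrU]$, and simultaneously controls $\E[\distrB]$ via the saturated coupling $f_\distrB(\cdot)=e^{\eps/2}f_\distrU(\cdot-\gap)$. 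I would handle this by a layer-cake decomposition of $\distrU$ and a monotone-rearrangement lemma, reducing to non-increasing densities, and then a direct variational argument (perturb the density on two intervals, show a profitable move exists unless the density is constant on $[0,L]$ with an atom at $0$) to conclude the optimal shape is zero-inflated uniform; the remaining one-variable minimization and the clipping at $\pUnif=1$ are then routine.
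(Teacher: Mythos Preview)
Your overall architecture matches the paper's: reduce to an optimization over indistinguishable noise pairs, establish structural properties (support of $\distrB$ on $[\gap,\infty)$, $\distrU$ non-increasing, $\distrU$ determined by $\distrB$), conclude the optimum lies in the zero-inflated uniform family, then optimize the single parameter $\pUnif$. The technical machinery differs: the paper discretizes both densities into piecewise-constant step functions with mesh $\gap/i$ (following the staircase-mechanism template), proves the structural lemmas on the resulting probability density \emph{sequences} by explicit mass-shifting, and then takes $i\to\infty$. Your layer-cake / monotone-rearrangement route works directly on densities and is arguably cleaner, at the cost of the measure-theoretic care you flag; the paper's discretization trades that care for a limit argument (their Lemma~\ref{lem:piecewise} and Corollary~\ref{cor:weighted_sum}).

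Two concrete issues to fix. First, your saturated relation is the wrong shift. Once $f_\distrU$ is non-increasing on $(0,\infty)$, for $x>\gap$ the map $t_0\mapsto f_\distrU(x-t_0)$ is non-decreasing in $t_0$, so the \emph{infimum} over $t_0\in[0,\gap]$ is attained at $t_0=0$, not $t_0=\gap$. The binding constraint is therefore $f_\distrB(x)=e^{\eps/2}f_\distrU(x)$ on $[\gap,\infty)$, together with $f_\distrU(x)\ge e^{-\eps/2}f_\distrB(\gap)$ for $x\in(0,\gap)$; this is exactly what the paper's Lemma~\ref{lem:determines} records in discretized form ($u_k=e^{-\eps}b_k$ for $k\ge i$ and $u_k=e^{-\eps}b_i$ for $1\le k<i$). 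Your relation $f_\distrB(x)=e^{\eps/2}f_\distrU(x-\gap)$ would produce a $\distrB$ with support extending $\gap$ beyond that of $\distrU$, violating the $t_0=0$ constraint there, so following it literally would not land you on the zero-inflated uniform.

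Second, the reduction to $(\eps/2,\gap)$-indistinguishability is not simply ``the converse of Theorem~\ref{thm:priv_alg}.'' Lemma~\ref{lem:privacy_condition} shows that independent-noise $(\eps,\gap)$-OSDP is equivalent to the product condition $\Pr[\distrB\in S]\Pr[\distrB\in S']\le e^\eps\Pr[\distrU\in S]\Pr[\distrU\in S'-t_0]$, which a priori only forces $(\eps,\gap)$-indistinguishability (take $S=\mathbb{R}$). Getting the sharper $(\eps/2,\gap)$ bound requires first taking $S=S'$, $t_0=0$ to obtain $\Pr[\distrB\in S]\le e^{\eps/2}\Pr[\distrU\in S]$, and \emph{then} using that any Pareto-optimal $\distrU$ is non-increasing to extend this to all shifts $t_0\in[0,\gap]$; the paper does this explicitly in Section~\ref{sec:restrict_indisting}. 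You should make this two-step argument explicit rather than folding it into Theorem~\ref{thm:priv_alg}.
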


\noindent We give a proof sketch below, for the full proof see Appendix~\ref{appendix:optimal_proof}.
\begin{sproof}
Roughly, the proof proceeds as follows:
\begin{itemize}
    \item We consider any $(\eps, \gap)$-indistinguishable noise addition distributions $(\distrB, \distrU)$ added to batched and unbatched comments respectively. Using results from \cite{staircase}, we argue that for large enough $i\in \N$,  we can approximate $\distrB$ and $\distrU$ arbitrarily well with random variables that have piece-wise constant probability density functions and each constant interval has length $\gap/i$. 
    \item We establish properties of any Pareto optimal $(\distrB_i, \distrU_i)$ by directly proving that we can decrease the expectation of both $\E[\distrB_i]$ and $\E[\distrU_i]$ for any pair of distributions that violates these properties. Taken together the properties yield the exact form of any Pareto optimal $\distrB_i$ and $\distrU_i$. Taking limits as $i \to \infty$ gives that the Pareto frontier is realized by uniform and zero-inflated uniform distributions for some setting of $\eta$.
The proof follows by directly proving that we can decrease the expectation of both $\E[\distrB_i]$ and $\E[\distrU_i]$ for any pair of distributions that violates these properties.
\item Finally, we analytically solve for the value of parameter $\eta$ in the zero-inflated uniform distribution that minimizes weighted objective $\weightB \E[\distrB] + (1-\weightB) \E[\distrU]$ for any $\weightB \in [0,1]$. 
\end{itemize}
\end{sproof}

As shown in Figure~\ref{fig:pareto_frontier}, for smaller privacy budgets where $\eps \leq 2\ln(2)$, there is a single point on the Pareto frontier. Adding uniform noise with no inflated probability mass at $0$ minimizes $\E[\distrB]$ and $\E[\distrU]$ simultaneously. For larger $\eps$, it is possible to trade off between $\E[\distrB]$ and $\E[\distrU]$, achieving near-zero delay to unbatched comments. In practice, a user can decide what value of $\pUnif$ to use based on their preferred convex combination of $\E[B]$ and $\E[U]$. 

Note that our result holds for all mechanisms that add independent noise to each comment, as the delay added to comments must be $(\eps/2, \gap)$-one-sided indistinguishable to preserve privacy. Therefore, the zero-inflated uniform mechanism (Algorithm~\ref{alg:unif_mechanism}) is the Pareto optimal mechanism among this class of algorithms. It may be possible to add even less delay with mechanisms that can coordinate across comments and correlate noise addition. We leave this question open for future work.

\begin{figure}
    \centering
    \includegraphics[width=0.7\linewidth]{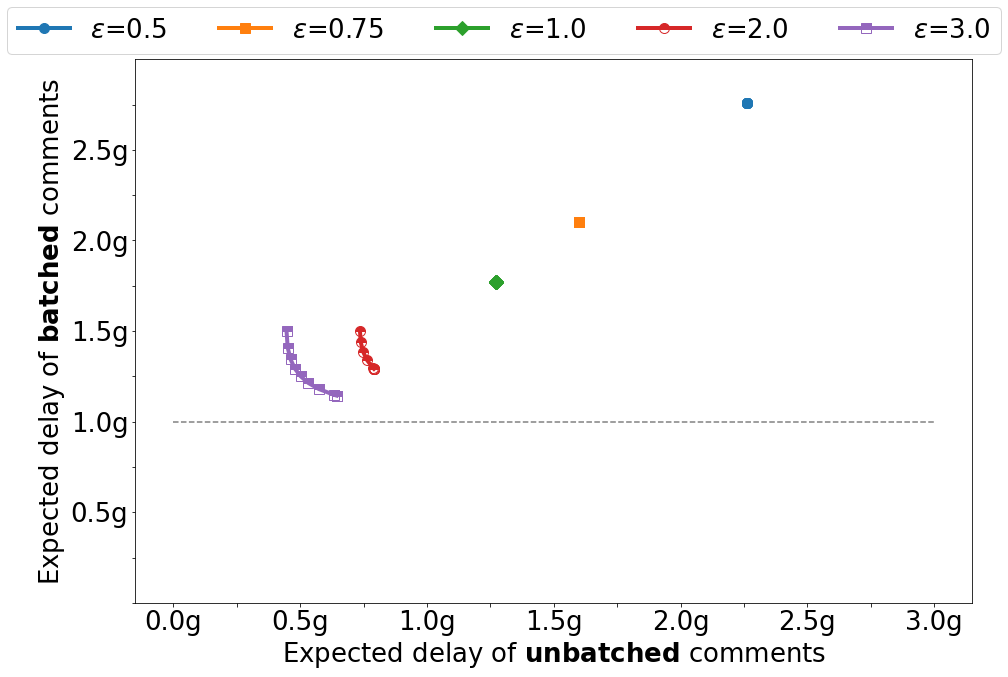}
    \caption{Pareto frontier for the expected delay added to batched and unbatched comments $(\E[\distrB]$ and $\E[\distrU]$) at different values of privacy parameter $\eps$.}
    \label{fig:pareto_frontier}
\end{figure}

\subsection{Impossibility of ``Two-Sided'' Differential Privacy}
\label{sec:dp_impossibility}

In the prior sections, we have characterized the privacy-utility trade-off for the one-sided relaxation of differential privacy. One might wish to obtain similar results for the standard two-sided definition of differential privacy, which would provide even stronger privacy guarantees. In this section, we prove the impossibility of guaranteeing two-sided differential privacy under the constraints of a valid comment postinging mechanism. These results motivate the further modeling assumptions on the adversary's prior knowledge about batching and attempted attacks that are used in the definition of $(\eps, \gap)$-OSDP.

First, we recall the standard definition of two-sided differential privacy. The key difference between this definition and our one-sided Definition \ref{defn:osdp} is in the formulation of ``neighboring'' inputs. In our one-sided definition, we use an asymmetric relation for neighboring inputs where one input with an additional batched pair of comments neighbors an input with one fewer pair. This captures the notion that batching is sensitive while the absence of batching is insensitive. For two-sided DP, we will give a definition with an abstract notion of neighbors and then concretely instantiate this definition with different possible notions of neighboring inputs. Critically, we will consider \emph{symmetric} relations for neighboring inputs in the definition of two-sided DP. This corresponds to preventing an adversary from inferring both whether batching occurred and whether batching did not occur.

Recall that $\mech_\finiteTime(\Arriv)$ denotes the output of the mechanism up to time $\finiteTime$. Then:

\begin{definition}[Two-Sided Differential Privacy for Batched Arrivals:]
For any $\eps \geq 0, \delta \in [0,1]$, a comment posting mechanism $\mech$ is $(\eps, \delta)$-differentially private if, for any time horizon $\finiteTime$ and for any subset  $\out \subseteq \text{Range}(\mech_\finiteTime)$ of possible outputs of the mechanism: \begin{align*}\Pr[\mech_\finiteTime(\inp') \in \out] \leq e^{\eps} \Pr[\mech_\finiteTime(\inp) \in \out],\end{align*} where $\inp$ and $\inp'$ are two \emph{``neighboring''} sets of comment arrivals.
\end{definition}

Now, we state our main impossibility result. We consider three natural definitions of neighboring sets of comment arrivals. The first definition adds or removes a comment from the set of comment arrivals corresponding to the notion of ``unbounded'' differential privacy in the literature \cite{dp_defn_unbounded}. The second definition moves a comment from being batched to unbatched in neighboring inputs by changing its timestamp, corresponding  to the notion of ``bounded'' differential privacy in the literature \cite{diff_privacy_defn}. Finally, the third definition restricts the second definition of neighbors further by placing a bound on how far a comment can move (which we call $\gap$), similar to the practice of constraining the domain of possible inputs to a differentially private mechanism. We show that it is not possible to guarantee privacy for any of these notions of neighbors:

\begin{theorem}[Impossibility of Two-Sided Differential Privacy]
\label{thm:impossibility}
For any of the following natural definitions of ``neighboring'' sets of comment arrivals, there is no two-sided differentially private, valid comment posting mechanism with delay scaling as $o(1/\delta)$:

\vspace{0.1in}

\begin{tabular}{|M{0.03\linewidth}M{0.4\linewidth}|M{0.4\linewidth}|}\hline
    & \textbf{Definition of ``Neighboring'' Sets of Comment Arrivals} & \textbf{Impossibility Result} 
    
    \\ \hline\hline 
    
     (1) & Add or remove a batched comment & No \emph{valid} $(\eps, \delta)$-DP posting mechanism for $\eps < \infty, \delta < 1$
    
    \vspace{0.5cm} \\ \hline
    
     (2) & Move a batched comment to another arrival time where it is no longer batched & No \emph{valid} $(\eps, \delta)$-DP posting mechanism for $\eps < \infty, \delta < 1$
    
    \vspace{0.5cm} \\ \hline 
    
      (3) & Move a batched comment by at most $\gap$ units of time to another arrival time where it is no longer batched & For any $\maxDelay \geq 0$, any \emph{valid} $(\eps, \delta)$-DP posting mechanism delays a comment by at least $\maxDelay$ with probability $\geq 1-2\delta \left(\frac{\maxDelay}{\gap} +1\right)$\\\hline
\end{tabular}
\end{theorem}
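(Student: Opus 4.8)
The plan is to treat the three rows of the table separately; all three exploit two of the validity constraints on a comment posting mechanism---\emph{No Fake Data} (a comment is never posted before it arrives) and \emph{Eventual Release} ($\Pr[\delay\le\maxDelay]\to 1$)---which together pin down the mechanism's behaviour at neighbours that relocate a comment. Throughout I use the standard reading of $(\eps,\delta)$-DP, $\Pr[\mech_\finiteTime(\inp')\in\out]\le e^{\eps}\Pr[\mech_\finiteTime(\inp)\in\out]+\delta$. For row (1), take an input $\inp$ containing a comment $c$ from reviewer $\reviewer$ on paper $\paper$ arriving at time $t$, and let $\inpBatch=\inp\cup\{c'\text{ at time }t\}$ add a second comment $c'$ from $\reviewer$ on a different paper, so that $\{c,c'\}$ is a batch in $\inpBatch$; let $\out$ be the set of outputs containing a posted copy of $c'$. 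By \emph{No Fake Data}, $\Pr[\mech_\finiteTime(\inp)\in\out]=0$ for every horizon $\finiteTime$ (since $c'$ never arrives in $\inp$), while by \emph{Eventual Release} $\Pr[\mech_\finiteTime(\inpBatch)\in\out]\to 1$ as $\finiteTime\to\infty$, so this exceeds $\delta$ for large $\finiteTime$ (using $\delta<1$)---contradicting $\Pr[\mech_\finiteTime(\inpBatch)\in\out]\le e^{\eps}\cdot 0+\delta$ for any finite $\eps$. Row (2) is the same argument with $\inp$ placing $\{c,c'\}$ batched at time $t$ and the neighbour $\inp'$ moving $c'$ to a far-future time $T$ where it is unbatched: with $\out=\{c'\text{ posted by time }T/2\}$, this event has probability $0$ under $\mech_{T/2}(\inp')$ by \emph{No Fake Data} and probability tending to $1$ under $\mech_{T/2}(\inp)$ as $T\to\infty$ by \emph{Eventual Release} applied to the fixed input $\inp$, again contradicting the DP inequality for $\delta<1$.

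Row (3) is the substantive case. A move of size $\le\gap$ only shifts an arrival by $\gap$, so a single neighbour pair forces only a delay of $\gap$: if $\inp$ has $c$ batched at time $t$ and a $\gap$-neighbour $\inp'$ has $c$ arriving at $t+\gap$, then $\{c\text{ posted before }t+\gap\}$ has probability $0$ under $\mech(\inp')$, hence $\le\delta$ under $\mech(\inp)$, i.e.\ $c$'s delay is $\ge\gap$ with probability $\ge 1-\delta$. To amplify to a threshold $\maxDelay$ I chain: build a sequence $\inp^{(0)}\sim\inp^{(1)}\sim\cdots\sim\inp^{(L)}$ with $L=\Theta(\maxDelay/\gap)$ in which a distinguished comment $c$ ``migrates'' forward by $\gap$ roughly every two hops. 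Because the row-(3) relation requires the moved comment to \emph{become} unbatched, the chain cannot simply slide $c$ along; instead it alternates between $\{c\text{ at }j\gap,\, d\text{ at }j\gap,\dots\}$ ($c$ batched with a companion $d$) and $\{c\text{ at }(j{+}1)\gap,\, d\text{ at }j\gap,\dots\}$ ($c$ moved by $\gap$, now unbatched), then moves $d$ to re-batch $c$ at $(j{+}1)\gap$---each hop moving exactly one comment by $\le\gap$ and toggling whether it is batched, so all consecutive pairs are genuine $\gap$-neighbours. At $\inp^{(L)}$ the comment $c$ arrives at time $\approx\maxDelay$, so by \emph{No Fake Data} the event $\{c\text{ posted in }[(k{-}1)\gap/2,\,k\gap/2)\}$ is impossible at an input a bounded number of hops before $\inp^{(L)}$; propagating each window-event back to $\inp^{(0)}$ and union-bounding over the $\Theta(\maxDelay/\gap)$ windows covering $[0,\maxDelay)$ bounds $\Pr[\text{delay of }c<\maxDelay\text{ under }\mech(\inp^{(0)})]$, giving $\Pr[\text{delay}\ge\maxDelay]\ge 1-2\delta(\maxDelay/\gap+1)$.

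The main obstacle lies entirely in row (3), on two fronts. First, one must lay out the chain so that every consecutive pair really is a $\gap$-neighbour under the restrictive ``move a batched comment to where it is no longer batched'' relation---this is exactly why the companion comments and the two-hop migration steps are needed, since a configuration with no batch at all is a neighbour of nothing. Second, and more delicate, is controlling how the $\delta$-slack accumulates when a window-event is carried back along the chain: naively iterating the two-sided DP inequality multiplies the slack by $e^{\eps}$ per hop and yields only an $\eps$-dependent, geometrically growing bound, so to reach the clean $\eps$-free bound $1-2\delta(\maxDelay/\gap+1)$ the accounting must stay additive---e.g.\ by ensuring each of the $\Theta(\maxDelay/\gap)$ windows is killed by a \emph{fresh} invocation of \emph{No Fake Data} at a nearby input rather than by chaining one fixed event the whole way, or by a monotonicity/telescoping argument on the posting-time CDFs. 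I expect that step to be the crux.
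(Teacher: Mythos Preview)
Your plan matches the paper's proof. Rows (1) and (2) are the same argument verbatim; for row (3) the paper builds the same alternating chain (using a supply of companion comments $\act_2,\act_3,\ldots$ parked at multiples of $\gap$ rather than a single companion $d$, but to identical effect) so that the distinguished comment $\act_1$ advances by $\gap$ every two hops, and then invokes \emph{No Fake Data} at the far end.

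The issue you flag as the crux is well placed, and in fact the paper is loose on precisely this point. It writes that after two hops the early-release probability is ``at most $2\delta$'' and after $2j$ hops ``less than $2j\delta$'', tacitly dropping the $e^{\eps}$ multipliers that the $(\eps,\delta)$-DP inequality introduces at each hop. So the $\eps$-free constant in the stated tail bound $1-2\delta(\maxDelay/\gap+1)$ is not fully justified by the paper's own argument either; what the chain rigorously delivers via group privacy is a bound of the form $1-C_\eps\,\delta(\maxDelay/\gap+1)$ for some $C_\eps$ growing with $\eps$, which still suffices for the headline claim that no valid mechanism has delay $o(1/\delta)$. Your proposed per-window or telescoping refinements do not obviously recover an $\eps$-free constant (each window event still has to be transported several hops back to $\inp^{(0)}$), but they are not needed for the qualitative conclusion you are after.
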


The proof of the above theorem can be found in Appendix \ref{appendix:impossibility}. Intuitively, we cannot guarantee privacy with definition (1) of neighbors because it would require creating a fake comment since a comment that exists in one input does not exist in the adjacent input. It is not possible to satisfy privacy with definition (2) of neighbors, as a comment could move arbitrarily far in time, requiring infinite delay to be added to comments. For definition (3) of neighbors, we show that we can define a sequence of neighboring inputs such that a comment is shifted $\gap$ units of time in the future on every other input in the sequence. Since the privacy guarantee must hold pairwise between each neighboring input in the sequence, the mechanism can only release comments within time $\maxDelay$ with probability of roughly $\delta \maxDelay / \gap$ in order to make inputs that are $\maxDelay / \gap$ neighbors away from each other in the sequence sufficiently indistinguishable from one another.

Note that even if we considered mechanisms acting on a finite time horizon, the proof above suggests the only mechanism admitted under two-sided DP using definition (2) is the trivial mechanism that releases all comments at the end of the time period: 
\begin{corollary}
Suppose comments are known to arrive only during a finite time horizon $\finiteTime$ after which no more comments will arrive or be posted. Then, any valid posting mechanism that satisfies two-sided $(\eps, \delta)$-DP using Definition (2) of neighbors in Theorem~\ref{thm:impossibility} posts all comments at time $\finiteTime$. 
\end{corollary}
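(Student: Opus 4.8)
The plan is to leverage the impossibility result from Theorem~\ref{thm:impossibility} under Definition~(2) of neighbors, but applied to the finite-horizon setting, and push the argument to its logical conclusion: not merely that comments must be delayed a lot, but that they must be delayed all the way to $\finiteTime$ with probability $1$. The key observation is that under Definition~(2), two inputs are neighbors whenever one can be obtained from the other by moving a single batched comment to any other arrival time where it is no longer batched --- with \emph{no} bound on how far it moves. On a finite horizon this means a comment that arrives at time $\ts$ in one input can arrive at any time $\ts' \in (\ts, \finiteTime]$ in a neighboring input (keeping the other coordinates fixed, choosing $\ts'$ so the comment is unbatched). I want to chain such neighboring moves together and use the fact that $(\eps,\delta)$-DP composes along a chain of neighbors.

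First I would set up the chain. Fix a time horizon $\finiteTime$ and consider a batched comment $c$ that arrives at some time $\ts_0 < \finiteTime$; build a finite sequence of inputs $\inp_0, \inp_1, \dots, \inp_m$ where $\inp_0$ has $c$ arriving at $\ts_0$ and each successive $\inp_{j+1}$ moves $c$ slightly later in time, with $\inp_m$ having $c$ arrive essentially at $\finiteTime$ (or arbitrarily close to it), and where consecutive inputs are Definition-(2) neighbors. By the delay-only constraint, in input $\inp_m$ the comment $c$ cannot be posted before time (essentially) $\finiteTime$; so on $\inp_m$ the posted timestamp of $c$ is $\finiteTime$ with probability~$1$. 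Now I would argue by contradiction: suppose that on $\inp_0$ the mechanism posts $c$ strictly before $\finiteTime$ with positive probability $p>0$. Applying the $(\eps,\delta)$-DP guarantee along the chain $\inp_m \to \inp_{m-1} \to \cdots \to \inp_0$ to the event ``$c$ is posted at some time $< \finiteTime$'', we get $\Pr[\text{posted} < \finiteTime \mid \inp_m] \ge e^{-\eps m}(p) - (\text{telescoping }\delta\text{ terms})$. But the left side is $0$. If $\delta = 0$ this is already an immediate contradiction for any $p>0$ and any finite $m$. For $\delta > 0$, I would instead use the contrapositive/complementary phrasing: the chain forces $\Pr[\text{posted} = \finiteTime \mid \inp_0]$ to be within a composed $(e^{m\eps}, \delta')$-factor of $\Pr[\text{posted} = \finiteTime \mid \inp_m] = 1$, and then note that actually $\delta$ doesn't help here because the relevant one-sided inequality $\Pr[A \mid \inp_j] \le e^\eps \Pr[A \mid \inp_{j+1}] + \delta$ applied to $A = \{\text{posted} < \finiteTime\}$, starting from $\Pr[A \mid \inp_m] = 0$, yields $\Pr[A \mid \inp_{m-1}] \le \delta$, then $\Pr[A\mid \inp_{m-2}] \le e^\eps \delta + \delta$, and after $m$ steps $\Pr[A \mid \inp_0] \le \delta \frac{e^{m\eps}-1}{e^\eps - 1}$ --- which is finite but nonzero, so this alone is not enough.

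To close that gap, the cleaner route is to not stop at a finite chain: I would instead observe that the argument of Theorem~\ref{thm:impossibility} part~(2) already shows there is \emph{no} valid $(\eps,\delta)$-DP mechanism at all for $\eps < \infty, \delta < 1$ under Definition~(2), because the unbounded move allows an infinite chain and the delay budget is finite (the ``Eventual Release'' constraint forces bounded-in-probability delay, contradicting the need for unbounded delay). So on a finite horizon, the only way to be $(\eps,\delta)$-DP under Definition~(2) with $\delta < 1$ is for the would-be obstruction to vanish --- and it vanishes exactly when every comment is posted at $\finiteTime$, since then moving $c$'s arrival time anywhere in $(\ts_0, \finiteTime]$ produces \emph{identical} output distributions (all mass on the configuration where every comment posts at $\finiteTime$), trivially satisfying DP for any $\eps \ge 0, \delta \ge 0$. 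Thus I would prove: (i) the all-comments-at-$\finiteTime$ mechanism is $(0,0)$-DP under Definition~(2) and is valid; and (ii) any valid mechanism that posts some comment before $\finiteTime$ with positive probability on some input fails $(\eps,\delta)$-DP for every finite $\eps$ and every $\delta<1$ --- by taking the chain length $m$ large enough that $\delta \frac{e^{m\eps}-1}{e^\eps-1} < p$ is violated... wait, that's the wrong direction, so instead I apply the one-sided inequality in the direction $\inp_0 \to \inp_m$: we need $\Pr[A \mid \inp_0] \le e^{m\eps}\Pr[A \mid \inp_m] + \delta\frac{e^{m\eps}-1}{e^\eps-1}$; here $\Pr[A\mid\inp_m]=0$ but the $\delta$ term grows, so this still does not contradict. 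The genuinely correct move, and the one I would actually execute, is to make the chain infinite and ride the ``Eventual Release'' constraint: pick $\maxDelay$ with $\Pr[\delay_c > \maxDelay]$ small for comment $c$ in $\inp_0$; then the input $\inp'$ in which $c$ arrives at time $\finiteTime - \maxDelay/2$ is reached from $\inp_0$ by a \emph{single} Definition-(2) neighbor step, yet in $\inp'$ the comment $c$ must be posted in $[\finiteTime - \maxDelay/2, \finiteTime]$, while in $\inp_0$ it is posted in $[\ts_0, \ts_0 + \maxDelay]$ with probability close to $1$ --- and for $\maxDelay$ small relative to $\finiteTime - \ts_0$ these two posting-time windows are disjoint, so a single $e^\eps$ factor plus $\delta < 1$ cannot bridge a probability gap approaching $1$. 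This single-step disjointness argument is the heart of it.

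\textbf{The main obstacle} I anticipate is handling the additive $\delta$ correctly: a naive finite chain gives only a bound like $\delta(e^{m\eps}-1)/(e^\eps-1)$, which is useless. The fix --- and the step requiring the most care --- is to exploit that Definition~(2) allows an \emph{unbounded single-step move} (unlike Definition~(3)'s $\gap$-bounded move), so that one can reach, in a \emph{single} neighbor step, an input whose posting-time support is forced (by the delay-only constraint) to be disjoint from the high-probability posting window of the original input, making the $(e^\eps,\delta)$ inequality fail for any $\delta<1$. Once that single-step disjointness is established for every comment and every input in which that comment is posted early with positive probability, the corollary follows: the only $(\eps,\delta)$-DP valid mechanism with $\delta<1$ posts every comment at $\finiteTime$.
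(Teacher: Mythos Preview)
Your proposal eventually lands on the right idea---a single neighbor step, exploiting that Definition~(2) allows moving a batched comment arbitrarily far---and this is exactly the paper's (implicit) approach: the Corollary is stated as a direct consequence of the proof of Theorem~\ref{thm:impossibility}(2), which is a one-step argument. But your route there is far more circuitous than needed, and the final execution has a genuine gap.

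The paper's argument is simply: for a batched comment $\act$ arriving at time $\ts < \finiteTime$ in $\Arriv$, take the Definition-(2) neighbor $\Arriv'$ in which $\act$ arrives (unbatched) at time $\finiteTime$. On $\Arriv'$ the delay-only constraint together with the finite horizon force $\act$ to be posted at exactly $\finiteTime$, so for $A = \{\act \text{ posted before } \finiteTime\}$ we have $\Pr[A \mid \Arriv'] = 0$, and one application of the DP inequality gives $\Pr[A \mid \Arriv] \le e^\eps \cdot 0 + \delta = \delta$. That is the whole argument; no chains, no windows.

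Your version instead moves $\act$ to $\finiteTime - \maxDelay/2$ and claims that on $\inp_0$ the comment lands in $[\ts_0, \ts_0 + \maxDelay]$ with probability close to $1$, citing ``Eventual Release'' to pick small $\maxDelay$. This fails in the finite-horizon setting: Eventual Release only guarantees the comment is posted by $\finiteTime$, not within any window $[\ts_0, \ts_0 + \maxDelay]$ for $\maxDelay < \finiteTime - \ts_0$. You have no leverage forcing the posting time on $\inp_0$ to be bounded away from $\finiteTime$---indeed, that is precisely what you are trying to prove. The disjoint-windows step is therefore circular. Just move the comment all the way to $\finiteTime$.

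One subtlety you correctly flag but do not resolve (and the paper glosses over): for $\delta > 0$ the one-step bound gives only $\Pr[A \mid \Arriv] \le \delta$, not $0$. No chain or disjoint-windows device closes this gap, since a mechanism that posts early with probability exactly $\delta$ and at $\finiteTime$ otherwise is $(\eps,\delta)$-DP under Definition~(2). The Corollary is exact for $\delta = 0$ and should be read as ``posts at $\finiteTime$ with probability at least $1-\delta$'' in general.
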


This is both intuitively and formally sufficient for preserving privacy from timing attacks since it eliminates all timing information, but is expensive in terms of delay incurred. In particular, in the peer review setting, releasing all comments simultaneously at the end of the review period eliminates potential for replies and ongoing discussion.

It follows from the impossibility of definition (3) of neighboring sets that there is no valid comment posting mechanism satisfying differential privacy with $\delta = 0$ for this notion of neighbors, since any differentially private mechanism would violate the property that valid comment posting mechanisms eventually release all comments. Additionally, even taking $\delta > 0$, the probability of experiencing a delay longer than $\maxDelay$ only decreases \emph{linearly} in $\delta$ and $\maxDelay$. Typically, $\delta$ is selected to be $o(1/n)$ \cite{dp_textbook}, where $n$ is the database size---in our case, the number of comments in the observed stream. For $\gap=O(1)$, this implies that any mechanism satisfying a two-sided DP guarantee for $\gap$-neighboring inputs (and choosing $\delta = o(1/n)$) has a non-negligible probability of delaying comments by $\Omega(n)$.

\section{Practical Considerations for Implementation}

In this section, we address two important practical considerations to putting into practice our privacy formulation and algorithm. First, we provide theoretically motivated heuristics for setting the parameter $\gap$ in the privacy definition. Second, we give simple extensions to the privacy model and algorithm that allow for handling the realistic setting where batched comments do not arrive all at the same exact time, but rather with a short duration in between. 

\subsection{Setting privacy parameters}
\label{sec:implementation_params}

Recall that our privacy definition includes a parameter $\gap$ that captures what types of inputs can be neighbors. In particular, $\gap$ 
bounds how far apart in time a pair of potentially-batched comments could arrive if batching had \emph{not} taken place. In this section, we provide a heuristic for setting $\gap$ in practice. 
We will argue that a reasonable way to set $\gap$ for a given comment is as a percentile of an empirical distribution of comment inter-arrival times. For example, in a peer-reviewed conference we might set $\gap$ to be the median inter-arrival time of comments at a similar prior conference. Alternatively, larger conferences commonly classify papers into tracks, so $\gap$ could be chosen for each track individually. We provide more examples of setting $\gap$ in practice in our experiments on Wikipedia and Bitcoin in Section~\ref{sec:experiments}.

First, we motivate this heuristic by modeling an adversary conducting a hypothesis test to determine if a comment was batched or not. The privacy parameters $\gap$ and $\eps$ can be chosen based on the desired (in)efficacy of this adversary's test. A natural way to model a privacy attack is to consider an adversary---say, a meta-reviewer who submitted a paper to a conference---who suspects that a comment $\act$ made on their paper may share a reviewer with one of the papers in the set $\anonSet$ of papers they are handling. The adversary conducts a hypothesis test to determine whether the comment they received arrived in a batch with any comment on papers in that set. 
Let $\ts_1$ denote the arrival time of $\act$ and let $\ts_2$ denote the arrival time of the comment in $\anonSet$ that arrives closest in time to $\ts_1$. The adversary knows that if the comments did not arrive in a batch, then they arrived with a gap $\ts_2 - \ts_1$ following some distribution $\distr$ (for instance, this might be the empirical distribution of comment inter-arrival times on the previous day). If the pair of comments does arrive in a batch, the adversary assumes they arrived simultaneously. Thus, the adversary wishes to distinguish between the following hypotheses:
\begin{align*}
   H_0: & \;\; \ts_2 - \ts_1 \sim \distr & (\textit{$\act$ is not batched with any comment in $\anonSet$}) \\ 
   H_1: & \;\; \ts_1 = \ts_2 & (\textit{$\act$ is batched with at least one comment in $\anonSet$})
\end{align*}

The adversary will observe the output of the mechanism and decide to either accept or reject the null hypothesis. If they reject the null hypothesis, they conclude that the comment was batched with a comment in $\anonSet$. Their hypothesis test is defined by ``rejection region'' $R$, or the set of outputs on which the adversary concludes that batching occurred. The quality of a given test is determined by the trade-off between its ``power'' and ``type I error'':
\begin{align*}
\textit{Power} &  = \Pr[\mech(\out) \in R ; H_1] \\ 
\textit{Type I Error} &  = \Pr[\mech(\out) \in R; H_0]
\end{align*}

Similar to prior work on differential privacy~\cite{wasserman_zho\distrU_2008}, \cite{kairouz_composition_2017}, we show that an adversary conducting a hypothesis test to determine if batching occurred will face a poor trade-off between power and type I error given an output of a mechanism that is OSDP with gap $\gap$: 

\begin{proposition}
\label{prop:percentile_osdp}
If a mechanism $\mech$ satisfies $(\eps, \gap)$-OSDP, then for any comment $\act$, set of comments $\anonSet$ arriving with inter-arrival time distribution $\distr$, and any hypothesis test deciding if $\act$ was batched with a consecutively arriving comment in $\anonSet$:
$$\textit{Power} \leq \frac{e^\eps}{\perc(\gap)}  (\textit{Type I Error})$$
where $\perc(\gap) = \Pr[|x| \leq \gap ; x \sim \distr]$ is the CDF of inter-arrival times.
\end{proposition}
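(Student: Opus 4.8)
The plan is to reduce the hypothesis-testing bound to a single application of the $(\eps,\gap)$-OSDP inequality, using a carefully chosen pair of $\gap$-neighboring inputs. First I would fix the comment $\act$ with arrival time $\ts_1$ and the comment $c^\star \in \anonSet$ that arrives closest to it, with arrival time $\ts_2$. Under $H_1$ the two comments arrive simultaneously ($\ts_1 = \ts_2$), so the corresponding input $\inpBatch$ has a batched pair; under $H_0$ the gap $\ts_2 - \ts_1$ is drawn from $\distr$, and conditioned on the event $\{|\ts_2 - \ts_1| \le \gap\}$, the resulting input $\inp$ is exactly a $\gap$-neighbor of $\inpBatch$ (the later comment of the unbatched pair collapses onto the earlier one to form the batch). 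The first step is therefore to write $\textit{Type I Error} = \Pr[\mech(\inp) \in R]$ where $\inp$ has its inter-arrival time drawn from $\distr$, and to condition on whether $|\ts_2 - \ts_1| \le \gap$.

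Concretely, by the law of total probability,
\begin{align*}
\textit{Type I Error} \;=\; \Pr[\mech(\inp)\in R] \;\ge\; \Pr[\mech(\inp)\in R \,\mid\, |\ts_2-\ts_1|\le\gap]\cdot\Pr[|\ts_2-\ts_1|\le\gap],
\end{align*}
and the second factor is exactly $\perc(\gap)$ by definition. For the first factor, note that conditioned on any fixed realization of $\ts_2 - \ts_1$ with $0 < |\ts_2 - \ts_1| \le \gap$, the input $\inp$ is $\gap$-neighboring to the batched input $\inpBatch$ (after relabeling so the later comment is the one that moves). Applying Definition~\ref{defn:osdp} with output set $\out = R$ gives $\Pr[\mech(\inpBatch)\in R] \le e^\eps \Pr[\mech(\inp)\in R]$ for each such realization; averaging over the conditional law of $\ts_2 - \ts_1$ given $|\ts_2-\ts_1|\le\gap$ preserves the inequality, so $\textit{Power} = \Pr[\mech(\inpBatch)\in R] \le e^\eps\,\Pr[\mech(\inp)\in R \mid |\ts_2-\ts_1|\le\gap]$. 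Combining the two displays yields $\textit{Power}\le \frac{e^\eps}{\perc(\gap)}\,\textit{Type I Error}$, as claimed.

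The main obstacle — and the step that needs the most care — is the measure-theoretic bookkeeping in the conditioning/averaging argument: I must make sure that $R$ is a fixed (data-independent) rejection region so that OSDP applies with the same $\out$ for every realization of the gap, and that the integration over the truncated distribution of $\ts_2 - \ts_1$ is valid (in particular handling the possibility $\ts_2 - \ts_1 = 0$ under $H_0$ as a measure-zero or degenerate case, and the direction of the neighbor relation so the later comment is always the one relocated). A secondary subtlety is that $\anonSet$ may contain several comments; but since the adversary only tests against the single closest comment $c^\star$ and any actual batch involving a comment in $\anonSet$ still realizes $H_1$, restricting attention to the nearest comment only strengthens the bound, so no additional work is needed there. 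Everything else is a direct substitution into the OSDP definition.
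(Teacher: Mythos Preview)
Your proposal is correct and follows essentially the same approach as the paper: condition the Type~I error on $|\ts_2-\ts_1|\le\gap$, apply the OSDP inequality pointwise for each realized gap (the paper writes this out as a sum over gap values $d$, splitting into the two cases $\ts_2>\ts_1$ and $\ts_1>\ts_2$ rather than relabeling), and then rearrange. The subtleties you flag---fixed rejection region, direction of the neighbor relation, restriction to the single nearest comment---are exactly the ones the paper handles, and your treatment of them is adequate.
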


The proof of this proposition can be found in Appendix~\ref{app:percentile}. This interpretation of the $(\eps, \gap)$-OSDP guarantee in terms of error rates of an attacker's hypothesis test motivates our heuristic to choose the parameter $\gap$. Previous work on timing attacks~\cite{shmatikov2006timing, levine2004timing} measures the success of attacks in terms of the trade-off between power and type I error. In particular, these works report a single number ``error crossover rate,'' the point at which $\textit{type I error} = 1 - \textit{power}$. 
We envision the system operator (i.e., the entity adding the delay) first specifying a tolerable error crossover rate; for example, consistent with prior work on timing attacks \cite{shmatikov2006timing, levine2004timing}, the operator might choose to tolerate an error crossover rate of $0.25$. 
Next, the system operator should choose a privacy parameter $\epsilon$. 
Since the interpretation of $\epsilon$ is similar to traditional two-sided DP, operators may use common heuristics for selecting $\epsilon$; for instance, our operator might choose $\epsilon=0.8$.
Given these parameters, Proposition \ref{prop:percentile_osdp} shows how to select $g$ to ensure that the desired error crossover rate is satisfied. In our running example, we would choose $g$ to be the 75th percentile of the inter-arrival time distribution. 

\subsection{Handling Non-Simultaneous Batching} 
\label{sec:extension_non_simultaenous}

In our basic model of batching, we make the idealized assumption that all comments in a batch arrive at the same exact clock time. In practice, in many settings, batched actions will not be taken at the \emph{exact} same time, but rather with some short delay between them. For example, it is natural for a Wikipedia editor to spend many minutes working on a revision, so revisions in a single batch may arrive with a few minutes of delay in between. Likewise, reviewers in peer review may comment on papers one after the other, leading to a short delay despite batching.

In this section, we describe a simple extension to our model and algorithm that allows us to handle non-simultaneity in practice. We introduce a new threshold $\batchParam$, below which we consider two comments to have been batched  --- if two comments come from the same reviewer within time $\batchParam$ we consider them to have arrived in a batch. We will assume that $\batchParam < \gap$, as we wish to capture scenarios where batching leads a comment to arrive earlier than it would have without batching. We can capture this scenario by replacing the notion of neighbors in our model with the following:

\begin{definition}[$\gap$-Neighboring Comment Arrival Sets with $\batchParam$-batching]
For $\batchParam < \gap$, a set of comment arrivals $\inpBatch$ is $\gap$-neighboring with $\batchParam$-batching to set $\inp$, if $\inpBatch$ can be obtained from $\inp$ by batching together a pair of comments that arrive separately within $\gap$ time units of one another in $\inp$, moving the later comment to within $\batchParam$ of the earlier comment. Specifically, $\exists (\act, \ts, \paper, \reviewer), (\act', \ts', \paper', \reviewer) \in \inp$ such that $\paper \not = \paper'$, $0 \leq \ts' - \ts \leq \gap$ and $\inpBatch = \inp \setminus \{\act'\} \cup \{(\act', \ts'', \paper', \reviewer)\}$ where $0 \leq |\ts'' - \ts| \leq \batchParam$. 
\end{definition}

We define privacy the same as in Definition~\ref{defn:osdp}, but with this modified notion of $\gap$-neighboring with $\batchParam$-batching. In what follows, we describe how we incorporate this relaxed notion of batching into our algorithm.

First, we propose a simple front-end change that can be employed in conjunction with any mechanism in our algorithmic framework of randomized delay mechanisms (Algorithm~\ref{alg:dp_random_framework}) if we trust users to accurately report when they will engage in batching. The solution is to ask users when they create a comment if they plan on creating more comments on their other papers within the next $\batchParam$ units of time (and hence will generate batched comments). If the user answers affirmatively, then we treat their current comment as well as any subsequent comments they make within $\batchParam$ time units as batched and add delay drawn from $\distrB$ to the batched comments. If not, we add delay from $\distrU$ to the unbatched comments. Here, we take $(\distrB, \distrU)$ to be one-sided $(\eps/2, \gap + \batchParam)$-indistinguishable. Since neighboring inputs can differ on two comments with arrival times at $(\ts, \ts - \batchParam)$ and $(\ts + \batchParam, \ts + \gap)$ respectively, it is now necessary to add noise from $(\eps/2, \gap + \batchParam)$-indistinguishable distributions to preserve privacy by the same reasoning as Theorem~\ref{thm:priv_alg}.

\begin{algorithm}[tb]
\caption{Framework for Handling Non-Simultaneous Batching}
\label{alg:dp_random_extended}
\begin{algorithmic}
   \STATE {\bfseries Input:} privacy parameter $\eps > 0$, maximum gap $\gap > 0$, batching threshold $0 \leq \batchParam < \gap$, noise addition distributions $\distrB$ and $\distrU$
   \FOR{comment arriving at time $\ts$}
    \STATE Hold the comment until time $\ts + \batchParam$. 
    \IF{the same reviewer batched another comment during time $\ts$ to $\ts + \batchParam$}
    \STATE Sample $\delay \sim B(\eps/2, \gap + \batchParam)$ and post the action at time $\ts + \batchParam + \delay$. 
    \ELSE
    \STATE Post the comment at time $\ts + \batchParam + \delay$ where $\delay \sim U(\eps/2, \gap + \batchParam)$
    \ENDIF 
    \ENDFOR 
\end{algorithmic}
\end{algorithm}

In settings where we do not expect users to reliably report that they will batch tasks,  we can use a simple extension to our algorithmic framework, described in Algorithm~\ref{alg:dp_random_extended}, where we delay all comments by an additional $\batchParam$ units of time, using that duration to determine whether or not the comment was batched. The algorithm pays an additional $\batchParam$ in overhead to decide whether a comment was batched or not. Privacy follows by the same reasoning as in Theorem~\ref{thm:priv_alg}. In general, our initial problem formulation captures the most essential features of the problem of preserving privacy in the presence of batching. As we have shown in this section, it is straightforward to extend our model to better capture the properties specific to a given application.

\section{Experiments}
\label{sec:experiments}

We conduct two sets of experiments using publicly available data on Wikipedia article revisions and Bitcoin transactions. 

\subsection{Wikipedia}

In a dataset of revisions on all Wikipedia articles from January 1st to 31st, 2022 obtained from the WikiMedia API, we aggregate over 3.5 million article revisions (after filtering out bot accounts), averaging roughly 80 revisions per minute. Due to the high baseline rate of editing, it would be difficult for an adversary to identify that two revisions are batched without narrowing down the set of possible articles they consider. Therefore, we focus on a subset of Wikipedia revisions within which an adversary tries to link editors. One natural clustering of articles likely to contain batched revisions is by category: each article on Wikipedia is associated with a set of categories capturing the main topics covered. In the following experiments, we analyze articles belonging to the category ``21-st century American Politicians.'' We chose this category because it contains potentially controversial political topics so editors may have privacy concerns in editing these pages. For instance, one news report describes how editors of Donald Trump's Wikipedia page (one of the pages captured in the category) ``are fighting a brutal, petty battle over every word \cite{mak_2019}.'' Additionally, this category receives a large number of revisions per month, yielding a sample size of 13,430 revisions. Among this set of revisions, roughly 20\% were generated in a batch with another revision on a page in the same category (where we consider revisions to be batched if they arrive within $5$ minutes of one another and are made by the same user). The threshold of $5$ minutes captures $92\%$ of pairs of immediately consecutive revisions by a single editor on different articles within this category.

\begin{figure}[tb]
\centering
\begin{subfigure}{.4\textwidth}
  \centering
  \includegraphics[width=.95\linewidth]{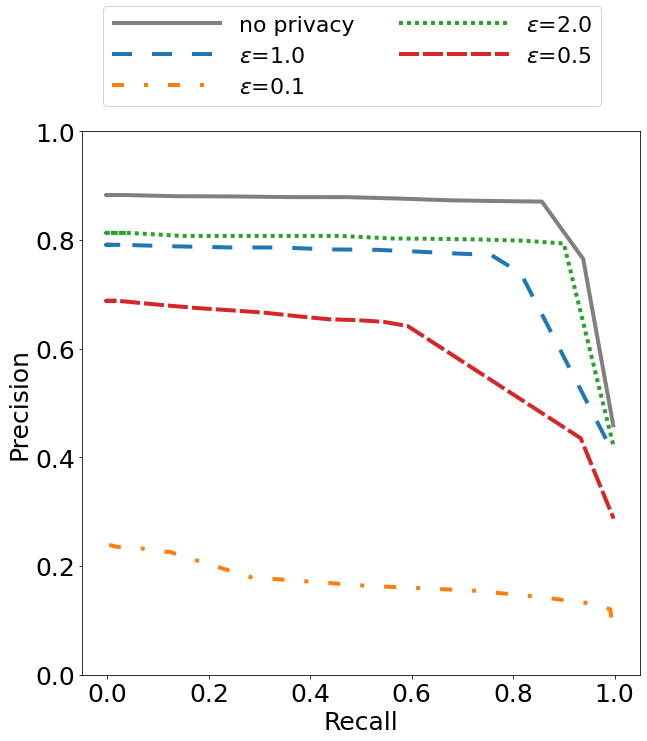}
  \caption{$\gap$ set to 11 minutes}
  \label{fig:wiki_p_25}
\end{subfigure}%
\begin{subfigure}{.4\textwidth}
  \centering
  \includegraphics[width=.95\linewidth]{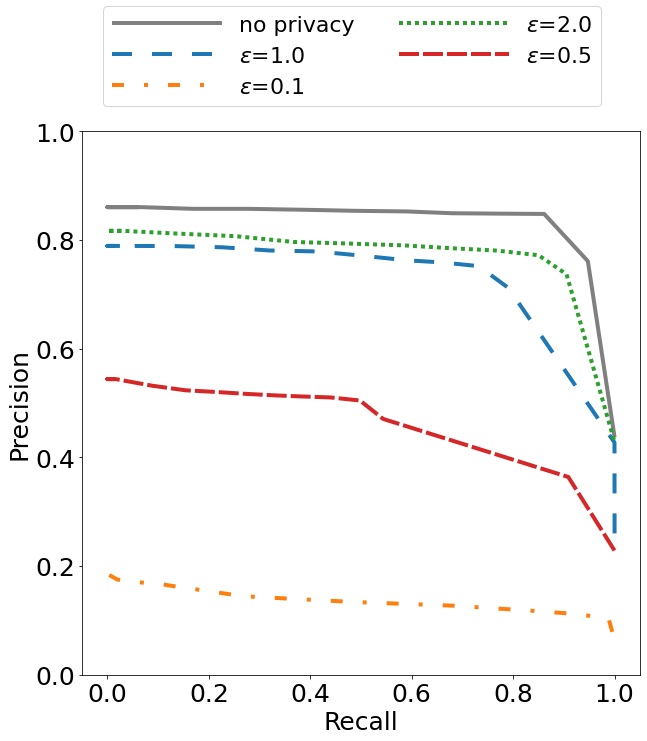}
  \caption{$\gap$ set to 36 minutes}
  \label{fig:wiki_p50}
\end{subfigure}
\caption{Accuracy in linking pairs of Wikipedia article revisions within the category ``21st-century American Politicians'' based on batched timing (averaged over $5$ runs of the randomized privacy mechanism).}
\label{fig:wiki_attack}
\end{figure}

\begin{table}[bt]
    \centering
    \begin{tabular}{|l|cccc|cccc|}
    \hline
         & \multicolumn{4}{c|}{Mean Delay} & \multicolumn{4}{c|}{Maximum Delay} \\\hline
    & $\eps = 0.1$ & $\eps = 0.5$ & $\eps = 1.0$ & $\eps=2.0$ &  $\eps = 0.1$ & $\eps = 0.5$ & $\eps = 1.0$ & $\eps=2.0$ \\ \hline
    (a) $\: \gap = 11$ & 118 & 31 & 20 & 15 & 229 & 54 & 33 & 22 \\\hline
    (b) $\: \gap = 36$ & 343 & 83 & 50 & 35 & 672 & 152 & 88 & 56
 \\\hline
    \end{tabular}
    \caption{Mean and maximum delay (in minutes) added to Wikipedia article revisions within the category ``21st-century American Politicians'' for $\gap$ set to the (a) $25$th and (b) $50$th percentile of the historical inter-arrival distribution.}
    \label{fig:delay_wiki}
\end{table}

While we do not have access to the true identities of editors who use multiple accounts, we can track all revisions made by the same account and identify when this account engages in batching. We simulate an attack where an adversary tries to link revisions to their creator on the basis of timing, while ignoring the usernames of editors. We consider a simple attack model that proves to be quite effective in the absence of any privacy-preserving mechanism. In the attack, the adversary tries to classify each pair of revisions on two different articles as either batched or unbatched.  The adversary chooses a cutoff $\advCutoff \geq 0$: if a pair of revisions are posted within $\advCutoff$ minutes of one another, the adversary classifies the pair as batched and concludes that the comments were made by the same person, and if not, the adversary classifies the pair as unbatched (in which case, the adversary draws no conclusion). When the adversary correctly classifies a batched pair as batched, we call this a true positive, while if the adversary incorrectly classifies an unbatched pair as batched, we call this a false positive. The adversary can trade off between false positives and true positives by choosing the value of $\advCutoff$ accordingly, with higher values of $\advCutoff$ yielding more true positives, but also more false positives, than smaller values of $\advCutoff$. The efficacy of the attack is measured in terms of its precision and recall where $\text{precision} = \frac{\text{number of pairs correctly classified as batched}}{\text{total number of pairs classified as batched}}$ and $\text{recall} = \frac{\text{number of pairs correctly classified as batched}}{\text{total number of pairs that were batched}}$. An effective attack has simultaneously high precision and recall. 

In Figure~\ref{fig:wiki_attack}, we show the precision and recall of this attack under various settings of privacy parameters $\eps$ and $\gap$. We find that attack efficacy is quite high when no privacy mechanism is deployed --- for instance, it is possible to obtain recall of $85\%$ at a precision of $80\%$. We then apply the zero-inflated uniform mechanism (Algorithm~\ref{alg:unif_mechanism}) and measure the reduction in attack efficacy over the ``no privacy'' baseline. We run the mechanism with $\weightB$ set to $1$, as this minimizes the worst-case expected delay added to any single comment in the system. Because batching is not perfectly simultaneous on Wikipedia --- editors take time between making each revision --- we simulate deployment of the user interface extension to Algorithm~\ref{alg:unif_mechanism} described in Section~\ref{sec:extension_non_simultaenous} where $\batchParam=5$ minutes. We set $\gap$ based on the heuristic from Section~\ref{sec:implementation_params} where $\gap$ is a percentile of the inter-arrival distribution of revisions made in the first week of the month. We then simulate deployment of the algorithm over the last three weeks of the month. Using this method, we can set $\gap = 11$ minutes by choosing the 25th percentile or $\gap = 36$ minutes at the 50th percentile. The experiment reveals that precision and recall are significantly improved by use of the mechanism as shown in Figure~\ref{fig:wiki_attack}. 
In terms of delay, Table~\ref{fig:delay_wiki} shows the
mean and maximum delay added to comments. We provide additional results, setting $\gap$ based on the 75th percentile of the inter-arrival distribution in Appendix~\ref{appendix:experiments_wiki}.

Thus we find that Algorithm~\ref{alg:unif_mechanism} renders the privacy attack much less effective while introducing reasonable delay. For instance, taking $\gap = 11$ and $\eps = 0.5$
corresponds to an average delay of roughly $1$ hour $20$ minutes and maximum delay of $2.5$ hours, but makes the attack substantially less accurate: the attack now achieves around $65\%$ recall at $60\%$ precision compared to the non-private baseline which achieves $85\%$ recall at $80\%$ precision. The heuristic attack used in Figure \ref{fig:wiki_attack} may not be optimal for an adversary who has knowledge of the zero-inflated uniform mechanism, but not access to the internal randomness of the mechanism. Identifying an optimal attack is beyond the scope of this work. However, since the same noise distribution $\distrB$ is added to all comments that arrive in a batch, we expect the heuristic attack to perform well in expectation.

\subsection{Bitcoin}

In Bitcoin, we wish to protect against linkage attacks on users of Bitcoin who use multiple addresses to transmit currency to the same recipient address at the same time. We aggregate data of all confirmed transactions broadcast to the Bitcoin peer-to-peer network in the week of August 1, 2022 to August 7, 2022, consisting of approximately 250,000 transactions per day. While we cannot tie different addresses to real-world identities, for the purposes of our experiments, we consider the following proxy: we define a ``batch'' to have occurred when two transactions from different input addresses are sent to the same output address within $1$ minute of one other. This represents a key use-case of our algorithm, wherein a person holding Bitcoin in multiple addresses wishes to draw from these different sources to complete a transfer to a single output address.  After filtering for transactions originating from addresses with unusually high volume of transactions that likely represent cryptocurrency exchanges, there are about 3,000 transactions per day arriving in a batch per our definition, representing 1.2\% of all transactions. 

We consider a privacy attack similar to the linkage attack described in the Wikipedia application. In the Bitcoin setting, an adversary tries to identify whether pairs of transactions arrived in a batch or not. The adversary observes the times at which transactions to the same output address are broadcast to the Bitcoin P2P network and applies a threshold to the time difference between the pair to decide whether the transactions arrived in a batch. In a ``basic'' attack, the adversary uses a single threshold for all transactions. In an ``informed'' attack, we assume the adversary knows the value of $\gap$ that was used by the privacy mechanism for each transaction (which the mechanism may vary by output address) and sets a per-address threshold as a linear function of the $\gap$ used for that address. In incorporating this additional information about the privacy mechanism, the adversary can obtain a better trade-off between false positives and true positives. We measure efficacy of the attack in terms of precision and recall. Since we define batching to occur when multiple inputs are sent to the same output address within $1$ minute of each other, the adversary can observe exactly when batching occurred if no privacy mechanism is deployed and obtain a precision and recall of $100\%$ in identifying whether transactions arrived at the same time or not (recall that in this experiment, we lack ground truth about batched transactions).

\begin{figure}[tb]
    \centering
    \includegraphics[width=0.8\linewidth]{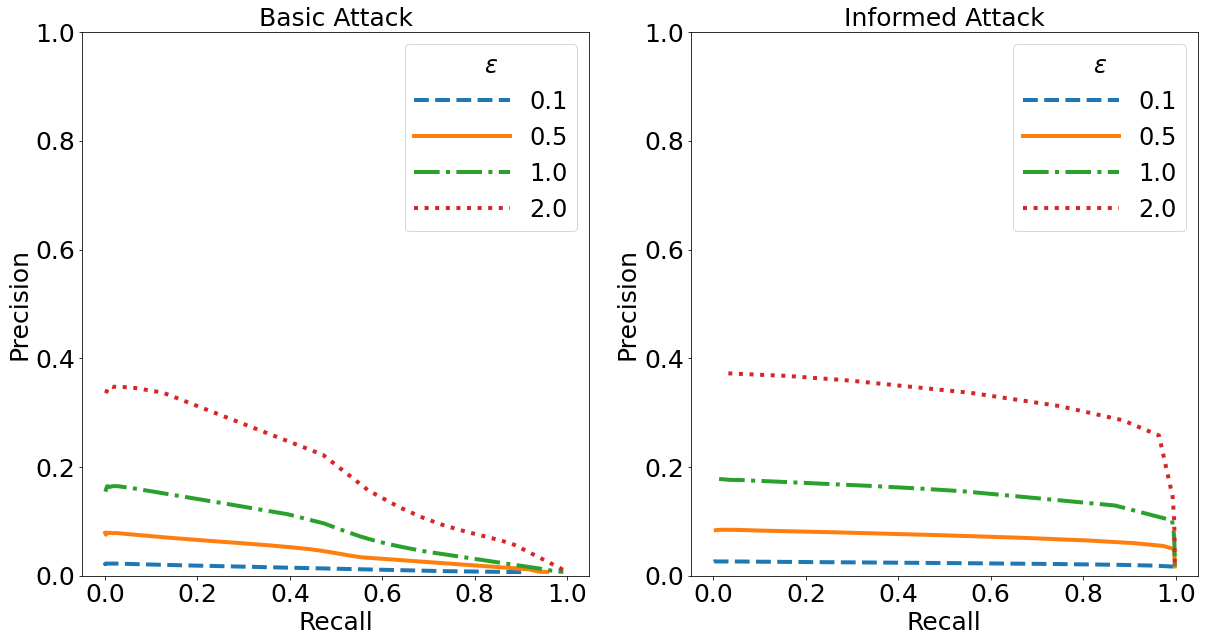}
    \caption{Performance of basic and informed linkage attacks on Bitcoin transactions when $\gap$ is set to the median historical inter-arrival time for an output address.}
    \label{fig:btc_success_50}
\end{figure}

\begin{figure}[tb]
    \centering
    \includegraphics[width=0.5\linewidth]{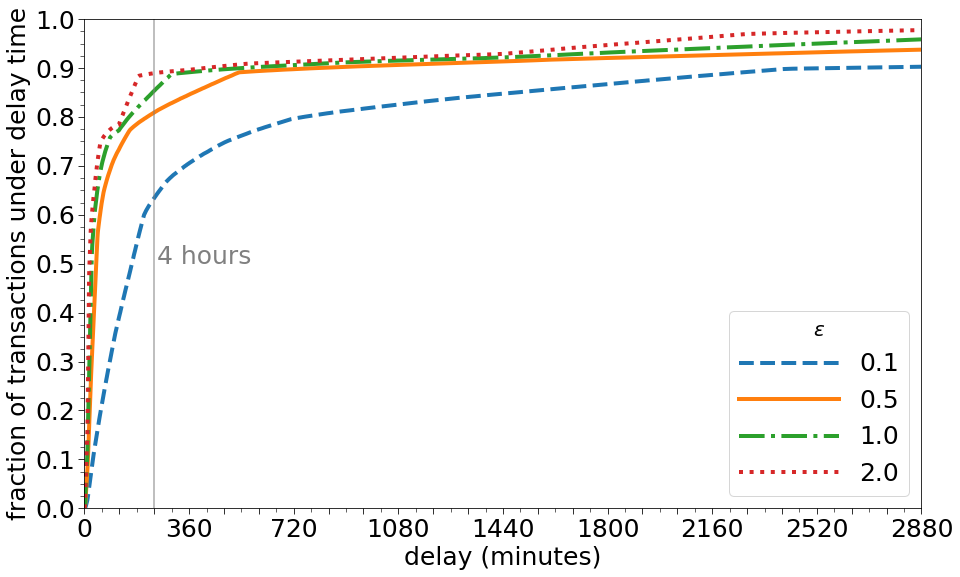}
    \caption{Cumulative distribution of delay added to batched Bitcoin transactions (averaged over 5 trials). Delay is drawn from a privacy-preserving uniform distribution with $\gap$ set to the median of the inter-arrival time of transactions to an output address within the past $7$ days.}
    \label{fig:btc_delay}
\end{figure}

To obscure the timing of transactions, we simulate the zero-inflated uniform mechanism (Algorithm~\ref{alg:unif_mechanism}) to add delay to the time at which transactions are broadcast to the Bitcoin P2P network. In order to select the value of $\gap$, we estimate the inter-arrival distribution of transactions to a given output address in the prior $7$ days and set $\gap$ to a percentile of this distribution. In particular, in this section we use the median of the inter-arrival distribution. In Appendix~\ref{appendix:experiments_btc}, we give additional results for experiments where $\gap$ is set to the 25th and 75th percentile of the inter-arrival distribution. If the output address of a transaction received no other transactions in the prior $7$ days, we set $\gap$ to $10$ minutes, as this is the baseline duration of time a Bitcoin user has to wait for a transaction to be confirmed on the blockchain. Most ($>90\%$) unbatched transactions are sent to output addresses with no recent transaction history, so we use the value of $\gap = 10$ for these transactions. However, roughly $80$\% of batched transactions are sent to output addresses with transaction history. 

The use of Algorithm~\ref{alg:unif_mechanism}, with $\gap$ set per output address, makes it difficult to identify whether transactions to the same output address arrived at the same time. For $\eps = 1$, even the informed attack has precision of only $20\%$ at high recall. The basic attack performs much worse, indicating that an adversary needs to incorporate additional information about baseline inter-arrivals of transactions in order to accurately identify batching. 

This improvement in privacy comes at the expense of added latency. In Figure~\ref{fig:btc_delay}, we show the cumulative density function of delay added to batched Bitcoin transactions averaged over $5$ samples from the privacy-preserving uniform distribution. In general, we can add delay of less than $4$ hours to most transactions. For the setting of $\eps = 1$, the mechanism adds delay of under $2$ hours to $70$\% of transactions. While this is slower than a  Bitcoin transaction when no privacy mechanism is used, it is still substantially faster than many other means of transferring money, like wire transfers. As such, privacy-sensitive users could realistically deploy this algorithm in their Bitcoin wallets to protect the unlinkability of their transactions.  

\section{Discussion}
\label{sec:discussion}

This work introduces the problem of anonymity compromise caused by task batching in pseudonymous forums. We propose defenses and theoretically and empirically establish the efficacy of these solutions.

\paragraph{Global Ordering. }

We find in empirical evaluations of Wikipedia data that the zero-inflated uniform mechanism is likely to release article revisions in a different order than they arrived. In our experiments, at reasonable settings of the privacy parameters, roughly 10\% of revisions were reordered within an article. This can create confusion when there are dependencies between article revisions. A similar problem arises in peer review, where comments may respond to one another. In Appendix~\ref{app:queue_mech}, we discuss a privacy-preserving queue-based mechanism that outputs delayed comments in the same order in which they arrived. While this algorithm does not satisfy the $(\eps, \gap)$-OSDP guarantee, it satisfies a different relaxation of differential privacy. An open question is whether the uniform zero-inflated mechanism can be extended to enforce ordering constraints for an appropriate privacy guarantee.

\paragraph{Partial adoption.}
In actual deployments, many participants may be privacy-insensitive and opt out of additional protections
that preserve anonymity at the cost of increased delay. Our privacy guarantee holds for any pair of events where each event uses the delay mechanism independently of what other users choose to do. So, for a single user who deploys the zero-inflated uniform mechanism on all events, it will be difficult for an adversary to tell whether any pair of their events is batched. However, there may be additional amplification of privacy that comes from widespread usage and permits lower setting of $\gap$ and $\eps$ with the same privacy guarantees in practice. Quantifying the dependence of adoption rate on privacy guarantees is an interesting open question.

\section*{Acknowledgments} 
This research was approved by the CMU Institutional Review Board (IRB). This work was supported in parts by NSF grants CIF: 1763734, 1705007 and RI: 2200410, ONR grant N000142212181, and the Air Force Office of Scientific Research grant FA9550-21-1-0090. The authors gratefully acknowledge the support of the Bill \& Melinda Gates Foundation and the Sloan Foundation. 

\bibliographystyle{abbrv}
\bibliography{bibtex}

~\\~\\~\\~\\

\begin{appendix}

\noindent{\LARGE \bf Appendices}

~\\

In Appendix~\ref{app:pfs}, we present proofs of results that were claimed but not proven in the main text. In Appendix~\ref{app:deanon_attack}, we detail the methods used to measure the prevalence of batching and resulting deanonymization risk in peer review. In Appendix~\ref{app:queue_mech}, we describe an alternative privacy formulation that gives rise to a queue-based mechanism which preserves the order of comment arrivals. Finally, in Appendix~\ref{app:experiments} we give additional empirical results of experiments on Wikipedia and Bitcoin for additional parameter settings not presented in the main text.

\section{Proofs}
\label{app:pfs}

In this section we present proofs of results that were claimed but not proven in the main text. Throughout we will use the following notation to denote element-wise addition and subtraction for a set: for any $S \subseteq \R, t \in \R$, we define $S - t := \{s - t | s \in S\}$.

\subsection{Proof of Theorem~\ref{thm:priv_alg} (Privacy of Random Delay Mechanisms with Indistinguishable Noise-Addition Distributions)}
\label{app:priv_alg}

First, we prove a general necessary and sufficient condition to guarantee $(\eps, \gap)$-OSDP when a mechanism adds independent noise from distributions $\distrB$ and $\distrU$ to batched and unbatched comments respectively.

\begin{lemma}
\label{lem:privacy_condition}
Let $\mech$ be any mechanism that adds independent random delay to comments with delay drawn from distribution $\distrB$ for batched comments and $\distrU$ for unbatched comments. Then, $\mech$ is $(\eps, \gap)$-OSDP if and only if $\forall \out, \out' \in \mathbb{R}$ and $\forall \ts_0 \in [0, \gap]$ it holds that
\begin{align*}
\Pr[\distrB \in \out]\Pr[\distrB \in \out'] \leq e^\eps \Pr[\distrU \in \out]\Pr[\distrU \in \out' - \ts_0].
\end{align*}
\end{lemma}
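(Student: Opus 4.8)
The plan is to unwind the definition of $(\eps,\gap)$-OSDP directly, reducing it to a statement purely about the one-dimensional noise distributions $\distrB$ and $\distrU$. First I would fix a pair of $\gap$-neighboring inputs $\inpBatch$ and $\inp$. By definition of neighboring, they agree on all comment arrivals except a designated pair: in $\inp$ two comments on different papers arrive at times $\ts$ and $\ts' = \ts + \ts_0$ with $0 < \ts_0 \le \gap$, while in $\inpBatch$ both arrive at time $\ts$ (so they form a batch). Since the mechanism adds \emph{independent} noise to every comment, the output distribution $\mech_\finiteTime(\cdot)$ factorizes as a product over comments. All comments on which $\inp$ and $\inpBatch$ agree contribute the \emph{same} factor to both $\Pr[\mech_\finiteTime(\inpBatch)\in\out]$ and $\Pr[\mech_\finiteTime(\inp)\in\out]$; I would note that these common factors can be handled by a conditioning/Fubini argument so that it suffices to check the inequality for the two comments that differ. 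Concretely, on $\inpBatch$ both differing comments are batched, so their posted times are $\ts + \delay_1$, $\ts + \delay_2$ with $\delay_1,\delay_2 \sim \distrB$ i.i.d.; on $\inp$ one comment is unbatched at $\ts$ with delay $\sim \distrU$ and the other is unbatched at $\ts+\ts_0$ with delay $\sim\distrU$, posted at $\ts + \ts_0 + \delay$. Writing the event that the first posted time lies in a set $\out$ (translated to be relative to $\ts$) and the second lies in $\out'$, we get exactly $\Pr[\distrB\in \out]\Pr[\distrB\in\out'] \le e^\eps \Pr[\distrU\in\out]\Pr[\distrU\in \out'-\ts_0]$.

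For the ``only if'' direction I would run this reduction forward: given an arbitrary $\out,\out'\subseteq\R$ and $\ts_0\in[0,\gap]$, construct a concrete pair of neighboring inputs (two comments differing as above, plus possibly no other comments, or a finite horizon chosen large enough) and a rectangle-shaped output event that isolates precisely the product $\Pr[\distrB\in\out]\Pr[\distrB\in\out']$ on one side and $\Pr[\distrU\in\out]\Pr[\distrU\in\out'-\ts_0]$ on the other; the OSDP inequality applied to this pair yields the claimed condition. For the ``if'' direction I would argue that an arbitrary measurable output event $\out\subseteq\text{Range}(\mech_\finiteTime)$ can be analyzed by conditioning on the realized delays of all the \emph{common} comments: conditionally, the output event becomes an event about the two differing comments' posted times, which is a (measurable) subset of $\R^2$; by a standard product-measure argument (Fubini, approximating by finite unions of measurable rectangles $\out_j\times\out_j'$) the pointwise rectangle inequality integrates up to give $\Pr[\mech_\finiteTime(\inpBatch)\in\out]\le e^\eps\Pr[\mech_\finiteTime(\inp)\in\out]$, and then averaging back over the common comments' delays (which contribute identical weights on both sides) completes the proof.

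The main obstacle I anticipate is the measure-theoretic bookkeeping in the ``if'' direction: going from the pointwise product inequality on rectangles to the inequality for an arbitrary measurable event about the pair of posted times, and simultaneously handling the (possibly infinitely many, in the infinite-horizon case, though only finitely many matter up to time $\finiteTime$) common comments whose delays must be integrated out. I would address this by first reducing to the finite-horizon statement (only comments arriving by time $\finiteTime$ can affect $\mech_\finiteTime$, and there are finitely many), then using that product measures are generated by measurable rectangles together with a monotone-class / $\pi$-$\lambda$ argument to lift the rectangle inequality, and finally invoking Fubini to peel off the common factors. A secondary subtlety is the direction of the shift: because neighboring is asymmetric (the \emph{later} comment moves earlier to form the batch), the gap $\ts_0$ appears on the $\distrU$ side as $\out' - \ts_0$ and the range $\ts_0\in[0,\gap]$ is exactly what the definition of $\gap$-neighboring permits — I would make sure the sign and the endpoint $\ts_0=0$ (degenerate/limiting case) are treated consistently with Definition~\ref{defn:indisting_distr}.
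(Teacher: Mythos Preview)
Your overall strategy---factorizing by independence, canceling the common comments, and reducing to a two-coordinate inequality---is exactly what the paper does, and your ``only if'' direction is fine (you get to choose the neighboring pair, so a batch of size two suffices). However, there is a genuine gap in your ``if'' direction.

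You assume that on $\inp$ \emph{both} of the differing comments receive $\distrU$-noise. That is only true when the batch in $\inpBatch$ has size exactly two. If the batch in $\inpBatch$ contains three or more comments, then after moving $\act'$ out to time $\ts+\ts_0$ in $\inp$, the comment $\act$ at time $\ts$ is \emph{still batched} with the remaining comments and therefore still receives $\distrB$-noise, not $\distrU$-noise. In that case the inequality you must verify is
\[
\Pr[\distrB\in\out]\Pr[\distrB\in\out'] \;\le\; e^{\eps}\,\Pr[\distrB\in\out]\Pr[\distrU\in\out'-\ts_0],
\]
which is not literally the product condition in the lemma but rather the single-factor bound $\Pr[\distrB\in\out']\le e^{\eps}\Pr[\distrU\in\out'-\ts_0]$. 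The paper obtains this by specializing the lemma's hypothesis to $\out=\mathbb{R}$ (so the first factor on each side is $1$), and then handles the two batch-size cases separately. You should add this case split; without it your argument does not cover all $\gap$-neighboring pairs.

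As a minor remark, your measure-theoretic bookkeeping (monotone class to pass from rectangles to general events, Fubini to peel off common comments) is more careful than the paper's treatment, which simply writes a multiplicative constant $k$ for the common factor and works with product events directly. Your level of rigor is fine and arguably better; just make sure the case analysis above is incorporated before you invoke it.
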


\begin{proof}
First, let $\mech$ be any $(\eps, \gap)$-OSDP mechanism adding independent random delay to comments with delay drawn from distributions $\distrB$ and $\distrU$. Suppose for the sake of contradiction that there exists some $\out, \out' \in \R$ and $\ts_0 \in [0, \gap]$ such that $\Pr[\distrB \in \out]\Pr[\distrB \in \out'] > e^\eps \Pr[\distrU \in \out]\Pr[\distrU \in \out' - \ts_0]$. Let $\inp$ and $\inpBatch$ be $\gap$-neighboring inputs differing in the arrival time of a single comment. In $\inpBatch$, a pair of comments $\act$ and $\act'$ arrive in a batch at time $0$.  In $\inp$, comment $\act$ arrives unbatched at time $0$ and comment $\act'$ arrives unbatched at time $\ts_0$. All other comments arrive at the same times in $\inp$ and $\inpBatch$. Let $O$ denote the set of possible outputs where $\act$ is posted at a time in $\out$ and $\act'$ is posted at a time in $\out'$ and all other comments are posted at any time in $\R$. Then, since delay is added independently to each comment: $\Pr[\mech(\inpBatch) \in O] = \Pr[\distrB \in \out] \Pr[\distrB \in \out']$ and $\Pr[\mech(\inp) \in O] = \Pr[\distrU \in \out] \Pr[\distrU \in \out' - \ts_0]$. However, by the initial assumption $\Pr[\distrB \in \out]\Pr[\distrB \in \out'] > e^\eps \Pr[\distrU \in \out]\Pr[\distrU \in \out' - \ts_0]$ contradicting the $(\eps, \gap)$-OSDP of $\mech$.

Now, we prove the other direction. Let $\mech$ be any mechanism adding independent random delay to comments with delay drawn from distributions $\distrB$ and $\distrU$ such that $\forall \out, \out' \in \mathbb{R}$ and $\forall \ts_0 \in [0, \gap]$ it holds that
\begin{align}
    \Pr[\distrB \in \out]\Pr[\distrB \in \out'] \leq e^\eps \Pr[\distrU \in \out]\Pr[\distrU \in \out' - \ts_0].
\end{align} Note that taking $\out = \mathbb{R}$, $\Pr[\distrB \in \out] = \Pr[\distrU \in \out] = 1$ so it must hold that $\forall \out' \in \mathbb{R}, \ts_0 \in [0,\gap]$
\begin{align}
    \Pr[\distrB \in \out'] \leq e^{\eps} \Pr[\distrU \in \out' - \ts_0].
\end{align}

Let $\inp$ and $\inpBatch$ be any $\gap$-adjacent comment arrival sets. Let $\act, \act'$ denote the pair of comments that arrive in a batch together in $\inpBatch$ but do not arrive in a batch together in $\inp$. In $\inp$, the two comments both arrive at time $\ts$, while in $\inpBatch$ comment $\act$ arrives at time $\ts$ and comment $\act'$ arrives at time $\ts + \ts_0$ with $\ts_0 \in [0, \gap]$ by the definition of $\gap$-adjacency. All other comments arrive at the same time in $\inp$ and $\inpBatch$. Let $o$ and $o'$ denote the randomized times at which the mechanism $\mech$ releases comments $\act$ and $\act'$ respectively.

Let $O$ be any set of possible outputs of the mechanism during time horizon $\finiteTime$ and let $S$ denote the values of $o$ in $O$ and $S'$ the values of $o'$ in $S'$. Then, because $\mech$ adds noise independently to each comment and all comments other than $\act$ and $\act'$ are equivalent in $\inp$ and $\inpBatch$, the probabilities factor as \begin{align*}
\Pr[\mech(\inpBatch) \in O] = k  \Pr[o \in \out ; \inpBatch]  \Pr[o' \in \out' ; \inpBatch]  \\
\text{and} \; \Pr[\mech(\inp) \in O] = k  \Pr[o \in \out ; \inpBatch]  \Pr[o' \in \out' ; \inpBatch] 
\end{align*} where $k$ captures the probability that all comments other than $\out$ and $\out'$ are posted at post times in the set of outputs $O$.

Consider two cases for the size of the batch in which $\act$ and $\act'$ arrive in $\inpBatch$. First, suppose the batch has $2$ comments. Then, on input $\inp$, both comments are unbatched, so for some $t_0 \in [0, \gap]$ and $\delay_\distrU, \delay'_\distrU \simiid \distrU$: $o = \ts + \delay_\distrU$ and $o' = \ts + \ts_0 + \delay'_\distrU$ . On input $\inpBatch$, $o = \ts + \delay_\distrB$ and $o' = \ts + \delay'_\distrB$ where $\delay_\distrB, \delay'_\distrB \simiid \distrB$. Therefore, \begin{align*}
\Pr[\mech(\inpBatch) \in O] = k \Pr[\ts + \delay_\distrB \in \out]  \Pr[\ts + \delay'_\distrB \in \out'] = k \Pr[\distrB \in \out - \ts] \Pr[\distrB \in \out' - \ts]    \\ 
\text{and} \; \Pr[\mech(\inp) \in O] =  k \Pr[\ts + \delay_\distrU \in \out]  \Pr[\ts + \ts_0 + \delay'_\distrU \in \out'] = k \Pr[\distrU \in \out - \ts] \Pr[\distrU \in \out' - \ts - \ts_0]. \\ 
\end{align*}
So, by Inequality (1) we have $\Pr[\mech(\inpBatch) \in O] \leq e^\eps \Pr[\mech(\inp) \in O]$. In the case where the batch containing $\act$ and $\act'$ has only two comments, the probability of the output on input $\inpBatch$ remains the same, but on $\inp$, $o = \ts + \delay_\distrB$, since comment $\act$ is still treated as batched, so 
$\Pr[\mech(\inp) \in O] = k \Pr[\distrB \in \out - \ts] \Pr[\distrU \in \out' - \ts - \ts_0]$ and by Inequality (2), $\Pr[\mech(\inpBatch) \in O] \leq e^\eps \Pr[\mech(\inp) \in O]$ so $\mech$ is $(\eps, \gap)$-OSDP.
\end{proof}

Let $\mech$ be any mechanism adding independent delay from $\distrB$ and $\distrU$ to batched and unbatched comments respectively, where $\distrB$ and $\distrU$ are $(\eps/2, \gap)$-one-sided indistinguishable distributions. Consider any $\out, \out' \in \R$ and $\ts_0 \in [0,\gap]$. Then, by indistinguishability $\Pr[\distrB \in \out] \leq e^{\eps/2} \Pr[\distrU \in \out]$ and $\Pr[\distrB \in \out' ] \leq e^{\eps/2} \Pr[\distrU \in \out - \ts_0]$, so $\Pr[\distrB \in \out]\Pr[\distrB \in \out'] \leq e^{\eps} \Pr[\distrU \in \out] \Pr[\distrU \in \out' - \ts_0]$. Applying Lemma~\ref{lem:privacy_condition} we conclude that $\mech$ is $(\eps, \gap)$-OSDP completing the proof.

\subsection{Proof of Theorem~\ref{thm:privacy_distributions} (Privacy-preserving distributions)}
\label{app:pf_indistinguish}

First, note that by the definition of one-sided indistinguishability (Definition~\ref{defn:indisting_distr}), if $\distrB$ and $\distrU$ have probability density functions $b$ and $u$ respectively then $\distrB$ and $\distrU$ are $(\eps, \gap)$-one-sided indistinguishable if and only if $\frac{b(\ts)}{u(\ts-\ts_0)} \leq e^\eps \;\; \forall \ts \geq 0, \ts_0 \in [0, \gap]$ for which $b(\ts) > 0$. So,

\begin{enumerate}[label=(\arabic*)]
    \item Exponential: for any $\ts < \gap$, $b(\ts) = 0$ while for any $\ts \geq \gap, \ts_0 \in [0, \gap]$ it holds that $\frac{b(\ts)}{u(\ts-\ts_0)} = \frac{\exp\{- \eps (\ts - \gap) / \gap\}}{\exp\{-\eps (\ts - \ts_0) / \gap)\}} \leq  \frac{\exp\{- \eps (\ts - \gap) / \gap\}}{\exp\{-\eps \ts / \gap)\}} = e^{\eps}$.
    \item Staircase: by indistinguishability of the staircase distribution proven in \cite{staircase}.
    \item Uniform: for any $\ts \in [\gap, \frac{1}{1-e^{-\eps}} \gap]$ , $\ts_0 \in [0, \gap]$ we have that $\frac{b(\ts)}{u(\ts-\ts_0)} = \frac{(1-e^{-\eps}) / (e^{-\eps} \gap)}{(1-e^{-\eps}) / \gap} = e^{\eps}$ and $b(\ts) = 0$ for all other values of $\ts$ so $b(\ts) = 0 \leq e^\eps u(\ts - \ts_0)$ for all other values of $\ts$. 
    \item Zero-inflated uniform: for any closed interval $[a,b] \subset [0, \gap)$ or $[a,b] \subset [\frac{\pUnif}{\pUnif - e^{-\eps}} \gap, \infty)$, we have that $\Pr[\distrB \in S] = 0$. For any interval $[a,b] \subseteq [\gap, \frac{\pUnif}{\pUnif - e^{-\eps}} \gap ]$, we have that $\Pr[B \in [a,b]] = (b-a) \frac{p - e^{-\eps}}{ e^{-\eps} \gap}$ while for any $\ts_0 \in [0, \gap] $ we have that $\Pr[U \in [a - \ts_0, b - \ts_0]] \geq (b-a) \frac{p - e^{-\eps}}{ \gap}$ so the ratio $\frac{\Pr[B \in S]}{\Pr[U \in S - t_0]}$ is bounded by $e^{\eps}$ for any measurable set $S$. 
\end{enumerate}

\subsection{Proof of Theorem~\ref{thm:distr_params} (Optimal choice of parameters for exponential and staircase distributions)}
\label{appendix:exp_staircase_proof}

First, in the following two lemmas we argue that the offset terms $\distrOffset_\distrB$ and $\distrOffset_\distrU$ must be set to $\distrOffset_\distrB = \gap$ and $\distrOffset_\distrU = 0$ in any expectation-minimizing pair of distributions that guarantees privacy. 

\begin{lemma}
\label{lem:offsets_tight}
Let $\distrB, \distrU$ be non-negative noise-addition distributions that guarantee $(\eps, \gap)$-OSDP when used in Algorithm~\ref{alg:dp_random_framework} where $\distrB = \distrOffset_\distrB + \distrB_0$ and $\distrU = \distrOffset_\distrU + \distrU_0$ for constants $\distrOffset_\distrB, \distrOffset_\distrU > 0$ and random variables $\distrB_0$ and $\distrU_0$ with support $[0, \infty)$. Then, $\distrOffset_\distrB - \distrOffset_\distrU \geq \gap$.
\end{lemma}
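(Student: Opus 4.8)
The target conclusion $\distrOffset_\distrB - \distrOffset_\distrU \geq \gap$ involves only the two offsets, not $\eps$, so the plan is to show that any strict violation $\distrOffset_\distrB - \distrOffset_\distrU < \gap$ lets me build a single $\gap$-neighboring pair of inputs on which the OSDP inequality fails, regardless of the value of $\eps$. First I would isolate the only two facts needed about $\distrB_0$ and $\distrU_0$. Since the support of $\distrU_0$ is $[0,\infty)$, we have $\distrU_0 \geq 0$ almost surely, hence $\distrU = \distrOffset_\distrU + \distrU_0 \geq \distrOffset_\distrU$ almost surely. Since $0$ lies in the support of $\distrB_0$ and $\distrB_0 \geq 0$ almost surely, for every $x > 0$ we get $\Pr[\distrB_0 \in [0,x)] = \Pr[\distrB_0 \in (-x,x)] > 0$, and therefore $\Pr[\distrB \in [\distrOffset_\distrB, \distrOffset_\distrB + x)] > 0$ for every $x>0$.

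Next I would pass from OSDP to a one-sided comparison between the two noise distributions. The mechanism in question adds independent noise ($\distrB$ to each comment of a batch, $\distrU$ to each unbatched comment) and is $(\eps, \gap)$-OSDP, so Lemma~\ref{lem:privacy_condition} applies; setting $o = \R$ there yields that for every measurable $S \subseteq \R$ and every $\ts_0 \in [0,\gap]$, $\Pr[\distrB \in S] \leq e^{\eps}\Pr[\distrU \in S - \ts_0]$. (Equivalently, one can avoid the lemma and argue directly with the $\gap$-neighboring pair $\inp, \inpBatch$ in which a comment $\act'$ arrives unbatched at time $\gap$ in $\inp$, alongside another comment on a different paper at time $0$, and is moved back to time $0$ to form a batch in $\inpBatch$; applying the OSDP inequality to the output event ``$\act'$ is posted at a time in $S$'' gives exactly $\Pr[\distrB \in S] \leq e^{\eps}\Pr[\distrU \in S - \gap]$, using that all other comments are posted a.s. by the eventual-release property.)

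Then I would close by contradiction. Suppose $\distrOffset_\distrB - \distrOffset_\distrU < \gap$, i.e. $\distrOffset_\distrB - \gap - \distrOffset_\distrU < 0$. Choose $x > 0$ small enough that $\distrOffset_\distrB - \gap - \distrOffset_\distrU + x \leq 0$, and take $S = [\distrOffset_\distrB, \distrOffset_\distrB + x)$ and $\ts_0 = \gap$. Then $\Pr[\distrU \in S - \gap] = \Pr[\distrU_0 \in [\distrOffset_\distrB - \gap - \distrOffset_\distrU,\, \distrOffset_\distrB - \gap - \distrOffset_\distrU + x)] = 0$, since that interval lies entirely below $0$ while $\distrU_0 \geq 0$ almost surely. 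But $\Pr[\distrB \in S] > 0$ by the first paragraph, contradicting $\Pr[\distrB \in S] \leq e^{\eps}\Pr[\distrU \in S - \gap]$. Hence $\distrOffset_\distrB - \distrOffset_\distrU \geq \gap$.

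There is no serious obstacle here; the two points that require a little care are (i) turning ``support equals $[0,\infty)$'' into the quantitative statements above, in particular that $\distrB_0$ assigns positive probability to every interval $[0,x)$ (which uses both $0 \in \mathrm{supp}(\distrB_0)$ and $\distrB_0 \geq 0$ a.s.), and (ii) using the one-sidedness in the correct direction, namely that it is the \emph{batched} distribution $\distrB$ that must be dominated, after a forward shift of up to $\gap$, by the \emph{unbatched} distribution $\distrU$ — so the mass of $\distrB$ near its lower endpoint $\distrOffset_\distrB$ is the quantity that forces $\distrU$ (shifted by $\gap$) to already have support there.
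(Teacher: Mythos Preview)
Your proposal is correct and follows essentially the same approach as the paper's proof: both invoke Lemma~\ref{lem:privacy_condition}, assume $\distrOffset_\distrB - \distrOffset_\distrU < \gap$ for contradiction, pick an interval starting at $\distrOffset_\distrB$ where $\distrB$ has positive mass, and show that the $\gap$-shifted interval lies below $\distrOffset_\distrU$ so $\distrU$ assigns it zero mass. The only cosmetic differences are that you first reduce to the single-set inequality by taking $o=\R$ (the paper keeps both sets and takes $\out=\out'$) and you choose a small interval $[\distrOffset_\distrB,\distrOffset_\distrB+x)$ whereas the paper takes the full interval $[\distrOffset_\distrB,\distrOffset_\distrU+\gap)$; neither change is substantive.
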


\begin{proof}
By Lemma~\ref{lem:privacy_condition}, in order for privacy to hold it must be that $\forall \out, \out' \subseteq \text{support}(\distrB), \ts_0 \in [0,\gap]$:
\begin{align*}
  \frac{\Pr[\distrB \in \out]  \Pr[\distrB \in \out' ]}{\Pr[\distrU \in \out]  \Pr[\distrU \in \out' - \ts_0]} \leq e^\eps.
\end{align*}
so taking $\ts_0 = \gap$ and $\distrB = \distrOffset_\distrB + \distrB_0$ and $\distrU = \distrOffset_\distrU + \distrU_0$ we have that
\begin{align*}
  \frac{\Pr[\distrB_0 \in \out - \distrOffset_\distrB]  \Pr[\distrB_0 \in \out' - \distrOffset_\distrB ]}{\Pr[\distrU_0 \in \out - \distrOffset_\distrU]  \Pr[\distrU_0 \in \out' - (\distrOffset_\distrU + \gap)]} \leq e^\eps.
\end{align*}

Suppose for the sake of contradiction that $\distrOffset_\distrB < \distrOffset_\distrU + \gap$. Then, taking $\out = \out' = [\distrOffset_\distrB, \distrOffset_\distrU + \gap)$ we have that
$\Pr[\distrB_0 \in \out' - \distrOffset_\distrB] = \Pr[\distrB_0 \in [0, \distrOffset_\distrU + \gap - \distrOffset_\distrB)] > 0$, but $\Pr[\distrU_0 \in \out' - (\distrOffset_\distrU + \gap)] = \Pr[\distrU_0 \in [\distrOffset_\distrB - \distrOffset_\distrU - \gap , 0)] = 0$ so the likelihood ratio is unbounded yielding a contradiction.  
\end{proof}

\begin{lemma}
\label{lem:offset_gap}
Let $\distrB, \distrU$ be non-negative noise-addition distributions that guarantee $(\eps, \gap)$-OSDP when used in Algorithm~\ref{alg:dp_random_framework} where $\distrB = \distrOffset_\distrB + \distrB_0$ and $\distrU = \distrOffset_\distrU + \distrU_0$ for constants $\distrOffset_\distrB, \distrOffset_\distrU > 0$ and random variables $\distrB_0$ and $\distrU_0$ where either $\distrB_0$ and $\distrU_0$ are both exponential random variables or staircase random variables. Then, $\distrB' = \gap + \distrB_0$ and $\distrU'= 0 + \distrU_0$ guarantee $(\eps, \gap)$-OSDP when used in Algorithm~\ref{alg:dp_random_framework}.
\end{lemma}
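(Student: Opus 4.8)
The plan is to pass to the likelihood-ratio characterization of $(\eps,\gap)$-OSDP from Lemma~\ref{lem:privacy_condition}, and then use the single structural fact that both the exponential density and the (absolute-value) staircase density are non-increasing on $[0,\infty)$ and supported on all of $[0,\infty)$. No other property of these two distributions is needed.

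First I would normalize the offsets. By Lemma~\ref{lem:offsets_tight}, $\Delta := \distrOffset_\distrB - \distrOffset_\distrU \geq \gap$, and by hypothesis $\distrB_0$ and $\distrU_0$ are the common base variable $\distrSingle$ of Theorem~\ref{thm:distr_params}; write $X := \distrSingle$. The condition of Lemma~\ref{lem:privacy_condition}, namely $\Pr[\distrB\in\out]\Pr[\distrB\in\out']\le e^{\eps}\Pr[\distrU\in\out]\Pr[\distrU\in\out'-\ts_0]$ for all $\out,\out'\subseteq\R$ and all $\ts_0\in[0,\gap]$, is invariant under translating $\out$ and $\out'$ by a common constant, so subtracting $\distrOffset_\distrU$ from both distributions shows that $(\Delta+X,\,X)$ is $(\eps,\gap)$-OSDP, and both shifted distributions are still non-negative since $X\ge 0$ and $\Delta+X\ge\Delta>0$. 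It therefore suffices to prove: if $(\Delta+X,X)$ satisfies the Lemma~\ref{lem:privacy_condition} condition with $\Delta\ge\gap$, then so does $(\gap+X,X)$ — which is exactly the pair $(\distrB',\distrU')$.

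For the core step, fix $\out,\out'\subseteq\R$ and $\ts_0\in[0,\gap]$. Since $\gap+X$ has support $[\gap,\infty)$, replacing $\out,\out'$ by $\out\cap[\gap,\infty)$ and $\out'\cap[\gap,\infty)$ leaves the left-hand side unchanged and only shrinks the right-hand side, so I may assume $\out,\out'\subseteq[\gap,\infty)$. Then $\Pr[\gap+X\in\out]=\Pr[X\in\out-\gap]=\Pr[\Delta+X\in\out+(\Delta-\gap)]$, and likewise for $\out'$; applying the assumed bound for $(\Delta+X,X)$ to the test sets $\out+(\Delta-\gap)$, $\out'+(\Delta-\gap)$ with shift $\ts_0$ gives
\[
\Pr[\gap+X\in\out]\,\Pr[\gap+X\in\out']\;\le\;e^{\eps}\,\Pr[X\in\out+(\Delta-\gap)]\,\Pr[X\in(\out'-\ts_0)+(\Delta-\gap)].
\]
Now $\out\subseteq[0,\infty)$ and $\out'-\ts_0\subseteq[\gap-\ts_0,\infty)\subseteq[0,\infty)$, so writing $c:=\Delta-\gap\ge0$ and letting $f$ denote the non-increasing density of $X$, monotonicity gives $\Pr[X\in A+c]=\int_A f(x+c)\,dx\le\int_A f(x)\,dx=\Pr[X\in A]$ for every $A\subseteq[0,\infty)$; applying this with $A=\out$ and $A=\out'-\ts_0$ bounds the right-hand side by $e^{\eps}\Pr[X\in\out]\Pr[X\in\out'-\ts_0]$. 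By Lemma~\ref{lem:privacy_condition}, $(\gap+\distrB_0,\,0+\distrU_0)$ is $(\eps,\gap)$-OSDP.

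The main thing to get right is the set bookkeeping in the core step: tracking which translated sets remain inside $[0,\infty)$ so that monotonicity of $f$ can legitimately be invoked, and checking that restricting to $\out,\out'\subseteq[\gap,\infty)$ is lossless on the left but only weakens the right. The only place the two named distributions enter is the non-increasing-density fact — immediate for the exponential, and for the staircase read off the piecewise-constant form in~\cite{staircase} (each period steps down by a factor $e^{-\eps}$ and the value matches across period boundaries, so the absolute-value staircase density is non-increasing on $[0,\infty)$). Since the argument uses nothing else, the statement in fact extends verbatim to any non-negative base noise with non-increasing density; I would state it for the exponential and staircase as needed by Theorem~\ref{thm:distr_params} and remark on this generalization in passing.
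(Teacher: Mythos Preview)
Your proof is correct and follows essentially the same two-step strategy as the paper: first translate both distributions by $-\distrOffset_\distrU$ (using shift-invariance of the Lemma~\ref{lem:privacy_condition} condition together with Lemma~\ref{lem:offsets_tight}), then use that the exponential and staircase densities are non-increasing on $[0,\infty)$ to reduce the remaining offset from $\Delta$ to $\gap$. The paper's version is terser and phrases the monotonicity step as a direct comparison on the $\distrB$-side, whereas you translate the test sets by $\Delta-\gap$ and apply monotonicity on the $\distrU$-side; your bookkeeping (restricting $\out,\out'$ to $[\gap,\infty)$ and checking $\out'-\ts_0\subseteq[0,\infty)$) makes the domain issues explicit, and your closing remark that only the non-increasing-density property is used is a fair generalization. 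One small caveat: you invoke ``$\distrB_0$ and $\distrU_0$ are the common base variable $\distrSingle$ of Theorem~\ref{thm:distr_params}'', which is how the lemma is used but is slightly stronger than what the lemma statement literally asserts; since only the non-increasing density of $\distrU_0$ is needed for your monotonicity step, the argument goes through regardless.
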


\begin{proof}
Let $\distrB = \distrOffset_\distrB + \distrB_0$ and $\distrU = \distrOffset_\distrU + \distrU_0$ be distributions that satisfy $(\eps, \gap)$-OSDP when used in Algorithm~\ref{alg:dp_random_framework}. First, define $\distrB' = \distrB - \distrOffset_\distrU$ and $\distrU' = \distrU - \distrOffset_\distrU$. Note that by Lemma~\ref{lem:offsets_tight}, $\distrOffset_\distrB > \distrOffset_\distrU$ so $\distrB'$ is still non-negative. Since we both random variables are shifted by the same constant offset, $\distrB'$ and $\distrU'$ still satisfy the sufficient condition to guarantee privacy in Lemma~\ref{lem:privacy_condition}. Now, suppose that $\distrOffset_\distrU = 0$ and $\distrOffset_\distrB > \gap$. Note that both the staircase distribution or the exponential distribution have monotonically decreasing probability density functions above $0$ so $\Pr[\distrB_0 \in \out - \distrOffset_\distrB] \geq \Pr[\distrB_0 \in \out - \gap]$. Therefore, setting $\distrB' = \gap + \distrB_0$ the sufficient condition for privacy in Lemma~\ref{lem:privacy_condition} still holds.
\end{proof}

Now, taking $\distrB = \gap + \distrSingle$ and $\distrU = \distrSingle$, by Lemma~\ref{lem:privacy_condition}, distribution $\distrSingle$ must satisfy the condition that $\forall \out, \out' \subseteq [\gap, \infty), \ts_0 \in [0,\gap)$
\begin{align*}
  \frac{\Pr[\distrSingle \in \out - \gap]  \Pr[\distrSingle \in \out' - \gap]}{\Pr[\distrSingle \in \out]  \Pr[\distrSingle \in \out' - \ts_0]} \leq e^\eps.
\end{align*}
Taking $\ts_0 = 0$ and $\out = \out'$, the privacy constraint requires that $\forall \out \subseteq [\gap, \infty)$:
\begin{align}
  \frac{\Pr[\distrSingle \in \out - \gap]}{\Pr[\distrSingle \in \out]} \leq e^{\eps/2}.
\end{align}
So, if $\distrSingle$ is an exponential distribution with rate parameter $\lambda$, then $\forall x \in [\gap, \infty)$
\begin{align*}
\frac{\lambda \exp\{-\lambda (x - \gap) \}}{\lambda \exp\{-\lambda x\}} = \exp \{ \lambda \gap \} \leq \exp\{\eps/2\}
\end{align*}
Then, the expectation of $\distrB$ and $\distrU$ is minimized by taking $\lambda = \frac{\eps}{2\gap}$.

If $\distrSingle$ is a staircase distribution, it follows from the proof of optimality in $\cite{staircase}$ (Theorem 4), that the staircase distribution with parameters $(\eps', \Delta, \gamma)$ set to $\eps' = \eps/2$, $\Delta = \gap$ and $\gamma = \frac{1}{1+e^{\eps/2}}$ respectively is optimal in minimizing the expectation of $\distrSingle$ while respecting Inequality (3) completing the proof.  

\subsection{Proof of Theorem~\ref{thm:unif_opt} (Pareto frontier)}
\label{appendix:optimal_proof}

The proof will proceed in three parts. First, in Section~\ref{sec:restrict_indisting} we argue that we can restrict attention to distributions $\distrB$ and $\distrU$ such that $\distrB$ and $\distrU$ are $(\eps/2, \gap)$-one-sided indistinguishable. Second, in Section~\ref{sec:pareto_proof_sec}, we prove that among $(\eps/2, \gap)$-one-sided indistinguishable distributions any Pareto optimal pair of distributions must be zero-inflated uniform distributions. Finally, in Section~\ref{sec:param_choice} we derive the optimal choice of parameters of the zero-inflated uniform distribution as a function of privacy parameters $\eps, \gap$ and choice of weighted utility function $\weightB \E[\distrB] + (1-\weightB) \E[\distrU]$.

\subsubsection{Restricting attention to $(\eps/2, \gap)$-one-sided indistinguishable distributions}
\label{sec:restrict_indisting}

We being by arguing that we can restrict attention to finding optimal noise addition distributions $\distrB, \distrU$ such that $\distrB$ and $\distrU$ are $(\eps/2, \gap)$-one-sided indistinguishable distributions (Definition~\ref{defn:indisting_distr}) and then use these distributions within the framework of Algorithm~\ref{alg:dp_random_framework} to design an optimal mechanism.

\begin{lemma}
Let $\mech$ be any valid $(\eps, \gap)$-OSDP comment posting mechanism that adds independent noise drawn from distributions $\distrB$ and $\distrU$ to batched and unbatched comments respectively. Then, $(\distrB, \distrU)$ must be $(\eps, \gap)$-one-sided indistinguishable.
\end{lemma}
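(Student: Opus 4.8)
The plan is to deduce the claim directly from the exact likelihood-ratio characterization of $(\eps,\gap)$-OSDP that has already been established as Lemma~\ref{lem:privacy_condition}. First I would observe that, since $\mech$ is a \emph{valid} comment posting mechanism, the delay-only constraint forces the delays drawn from $\distrB$ and $\distrU$ to be non-negative random variables, so the hypothesis of Definition~\ref{defn:indisting_distr} is met and it is meaningful to ask whether $(\distrB,\distrU)$ is one-sided indistinguishable. Because $\mech$ adds \emph{independent} noise to each comment, with the batched/unbatched status of a comment dictating whether its delay is drawn from $\distrB$ or from $\distrU$, Lemma~\ref{lem:privacy_condition} applies and tells us that the $(\eps,\gap)$-OSDP of $\mech$ is \emph{equivalent} to the statement that for all measurable $\out,\out' \subseteq \R$ and all $\ts_0 \in [0,\gap]$,
\[
\Pr[\distrB \in \out]\,\Pr[\distrB \in \out'] \;\le\; e^{\eps}\,\Pr[\distrU \in \out]\,\Pr[\distrU \in \out' - \ts_0].
\]

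The single substantive step is then to instantiate this inequality at $\out = \R$. Since $\Pr[\distrB \in \R] = \Pr[\distrU \in \R] = 1$, the products collapse and we are left with $\Pr[\distrB \in \out'] \le e^{\eps}\,\Pr[\distrU \in \out' - \ts_0]$ for every measurable $\out' \subseteq \R$ and every $\ts_0 \in [0,\gap]$; relabelling $\out'$ as $S$, this is verbatim the definition of $(\distrB,\distrU)$ being $(\eps,\gap)$-one-sided indistinguishable, and the lemma follows. If one wished to avoid invoking Lemma~\ref{lem:privacy_condition}, the same bound can be produced from scratch: take $\inpBatch$ to consist of a single two-comment batch $\{\act,\act'\}$ on two different papers arriving at time $0$, let $\inp$ be the $\gap$-neighbor in which $\act$ stays at time $0$ but $\act'$ arrives unbatched at time $\ts_0 \in (0,\gap]$, and consider the output event ``$\act'$ is posted at a time lying in $S$ (with everything else arbitrary)''; by independence this event has probability $\Pr[\distrB \in S]$ under $\inpBatch$ and $\Pr[\distrU \in S - \ts_0]$ under $\inp$, so OSDP gives the inequality, and letting $\ts_0$ range over $(0,\gap]$ and $S$ over measurable sets (appealing to Lemma~\ref{lem:privacy_condition}, or a continuity argument, for the boundary case $\ts_0 = 0$) recovers the full condition.

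There is no genuinely hard step here; the only points requiring a little care are (i) verifying that the validity constraints make the noise non-negative, so that Definition~\ref{defn:indisting_distr} is the right target; (ii) the $\ts_0 = 0$ boundary case, if one takes the from-scratch route rather than citing Lemma~\ref{lem:privacy_condition}; and (iii) checking that the neighboring pair used is legitimately $\gap$-neighboring under the paper's asymmetric definition --- i.e.\ $\inpBatch$ is obtained from $\inp$ by moving the later of two different-paper, same-reviewer comments onto the earlier one, which holds since $0 < \ts_0 \le \gap$. This lemma is the first of the three ingredients of Theorem~\ref{thm:unif_opt}: having it in hand lets the remainder of the Pareto-optimality proof restrict attention to mechanisms of the Algorithm~\ref{alg:dp_random_framework} form in which $(\distrB,\distrU)$ ranges only over $(\eps,\gap)$-one-sided indistinguishable pairs.
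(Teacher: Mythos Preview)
Your proposal is correct and takes essentially the same approach as the paper: invoke Lemma~\ref{lem:privacy_condition} and specialize to $\out = \R$ so that the product inequality collapses to the one-sided indistinguishability condition. The additional remarks you make (non-negativity from validity, the from-scratch alternative, and the $\ts_0=0$ boundary) are sound elaborations but not needed for the paper's argument.
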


\begin{proof}
By Lemma~\ref{lem:privacy_condition}, in order for privacy to hold for a mechanism that adds independent noise drawn from distributions $\distrB$ and $\distrU$ respectively, it must be that $\forall \out, \out' \subseteq \R$ such that $\Pr[\distrB \in \out] > 0$ and $\Pr[\distrB \in \out'] > 0$ and $\forall \ts_0 \in [0,\gap]$:
\begin{align*}
  \frac{\Pr[\distrB \in \out]  \Pr[\distrB \in \out' ]}{\Pr[\distrU \in \out]  \Pr[\distrU \in \out' - \ts_0]} \leq e^\eps.
\end{align*}
Then, taking $\out=\R$, $\frac{\Pr[\distrB \in \out]}{\Pr[\distrU \in \out]} = 1$, so $\frac{\Pr[\distrB' \in \out' ]}{\Pr[\distrU' \in \out' - \ts_0]} \leq e^\eps \; \forall \out' \subseteq \R, \ts_0 \in [0,\gap]$.
\end{proof}

Note that Algorithm~\ref{alg:dp_random_framework} (of which optimal Algorithm~\ref{alg:unif_mechanism} is an instance) adds noise from $(\eps/2, \gap)$-indistinguishable distributions, which is a stronger condition than requiring $(\eps, \gap)$-indistinguishable distributions. We will prove below (in Lemma~\ref{lem:non_increasing_step}) that for any Pareto optimal $(\eps, \gap)$-indistinguishable distributions $(\distrB, \distrU)$, $\distrU$ must be monotonically non-increasing above $0$. It follows that the distributions must be $(\eps/2, \gap)$-indistinguishable in order for Algorithm~\ref{alg:dp_random_framework} to be $(\eps, \gap)$-OSDP:

\begin{lemma}
Let $\mech$ be any valid $(\eps, \gap)$-OSDP comment posting mechanism that adds independent noise drawn from distributions $\distrB$ and $\distrU$ to batched and unbatched comments respectively where $\distrU$ is monotonically non-increasing (above $0$). Then, $(\distrB, \distrU)$ must be $(\eps/2, \gap)$-one-sided indistinguishable (Definition~\ref{defn:indisting_distr}).
\end{lemma}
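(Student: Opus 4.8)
The plan is to bootstrap off the fact, established just above, that any valid $(\eps,\gap)$-OSDP mechanism adding independent noise $(\distrB,\distrU)$ forces $(\distrB,\distrU)$ to be $(\eps,\gap)$-one-sided indistinguishable, and to upgrade the constant from $e^{\eps}$ to $e^{\eps/2}$ using two ingredients: the \emph{product} form of the privacy condition in Lemma~\ref{lem:privacy_condition}, and the hypothesis that $\distrU$ is non-increasing above $0$. The first step would be to record that $\distrB$ is supported on $[\gap,\infty)$: applying $(\eps,\gap)$-one-sided indistinguishability to any measurable $S\subseteq(-\infty,\gap)$ with shift $\ts_0=\gap$ gives $\Pr[\distrB\in S]\le e^{\eps}\Pr[\distrU\in S-\gap]=0$, since $S-\gap\subseteq(-\infty,0)$ and $\distrU$ is non-negative. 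Consequently it suffices to verify the inequality of Definition~\ref{defn:indisting_distr} for sets $S\subseteq[\gap,\infty)$: for a general measurable $S$, replacing $S$ by $S\cap[\gap,\infty)$ leaves $\Pr[\distrB\in S]$ unchanged while only shrinking $\Pr[\distrU\in S-\ts_0]$, so the restricted sets form the extremal case.

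Next I would extract the zero-shift bound directly from Lemma~\ref{lem:privacy_condition}: taking $\out=\out'=S$ and $\ts_0=0$ yields $\Pr[\distrB\in S]^2\le e^{\eps}\Pr[\distrU\in S]^2$, hence $\Pr[\distrB\in S]\le e^{\eps/2}\Pr[\distrU\in S]$ for every measurable $S$. To absorb a nonzero shift $\ts_0\in[0,\gap]$ I would use monotonicity of $\distrU$: for $S\subseteq[\gap,\infty)$ the set $S-\ts_0$ still lies in $[0,\infty)$, and writing $\Pr[\distrU\in S-\ts_0]$ via the change of variables $x=y-\ts_0$ shows it equals $\int_{S}u(y-\ts_0)\,dy$ (plus a nonnegative atom contribution that can only arise in the boundary case $\ts_0=\gap$), and since the density $u$ of $\distrU$ on $(0,\infty)$ is non-increasing and $y-\ts_0\le y$, this is at least $\int_{S}u(y)\,dy=\Pr[\distrU\in S]$. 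Chaining the two bounds gives $\Pr[\distrB\in S]\le e^{\eps/2}\Pr[\distrU\in S]\le e^{\eps/2}\Pr[\distrU\in S-\ts_0]$ on $[\gap,\infty)$, and by the first-step reduction this extends to all measurable $S$, which is exactly $(\eps/2,\gap)$-one-sided indistinguishability.

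The argument is short and the steps are essentially forced, so there is no serious obstacle; the only point requiring care is the measure-theoretic bookkeeping when $\distrU$ carries an atom at $0$ (as the zero-inflated uniform does). One must check that this atom never contributes to the left-hand side $\Pr[\distrU\in S]$ — it does not, since $0\notin S\subseteq[\gap,\infty)$ — and that on the right-hand side $\Pr[\distrU\in S-\ts_0]$ it can only appear in the degenerate case $\ts_0=\gap$, where it only helps the inequality; the value of $u$ at the single point where $y-\ts_0=0$ is irrelevant as it has Lebesgue measure zero.
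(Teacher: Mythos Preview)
Your proposal is correct and follows essentially the same route as the paper: apply Lemma~\ref{lem:privacy_condition} with $\out=\out'$ and $\ts_0=0$ to extract the $e^{\eps/2}$ bound, then use that $\distrU$ is non-increasing to absorb the shift. Your explicit reduction to $S\subseteq[\gap,\infty)$ via the support of $\distrB$ is a bit more careful than the paper's version (which asserts $\Pr[\distrU\in\out-\ts_0]\ge\Pr[\distrU\in\out]$ without noting this needs $\out-\ts_0\subseteq[0,\infty)$), but the argument is otherwise identical.
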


\begin{proof}
By Lemma~\ref{lem:privacy_condition}, it must be that $\forall \out, \out' \subseteq \R$ such that $\Pr[\distrB \in \out] > 0$ and $\Pr[\distrB \in \out'] > 0$ and $\forall \ts_0 \in [0,\gap]$:
\begin{align*}
  \frac{\Pr[\distrB \in \out]  \Pr[\distrB \in \out' ]}{\Pr[\distrU \in \out]  \Pr[\distrU \in \out' - \ts_0]} \leq e^\eps.
\end{align*}
Taking $\out = \out'$ and $\ts_0 = 0$ gives $\frac{\Pr[\distrB \in \out]}{\Pr[\distrU \in \out]} \leq e^{\eps/2} \; \forall \out \subseteq \R$. Since $\distrU$ is non-increasing, $\Pr[\distrU \in \out - \ts_0] \geq \Pr[\distrU \in \out]$ for $\ts_0 \geq 0$, so $\frac{\Pr[\distrB \in \out]}{\Pr[\distrU \in \out - \ts_0]} \leq e^{\eps / 2}$ as well and $\distrB$ and $\distrU$ are $(\eps/2, \gap)$-one-sided indistinguishable. 
\end{proof}

\subsubsection{Pareto optimal distributions}
\label{sec:pareto_proof_sec}

The main portion of this proof characterizes Pareto optimal distributions $(\distrB, \distrU)$ such that $\distrB$ and $\distrU$ are $(\eps, \gap)$-one-sided indistinguishable. From Section \ref{sec:restrict_indisting}, we can then choose $(\eps/2, \gap)$-indistinguishable distributions for use in Algorithm~\ref{alg:dp_random_framework} to obtain an optimal mechanism.

Let $\mathcal{P}_{\eps, \gap}$ denote the set of all pairs of $(\eps, \gap)$-one-sided indistinguishable distributions (Definition~\ref{defn:indisting_distr}). To derive the Pareto frontier of $\mathcal{P}_{\eps, \gap}$, we follow the high-level approach of~\cite{staircase}, which derives the optimal \emph{two-sided} differential privacy noise-addition distribution. The proof proceeds by showing that if $\distrB$ and $\distrU$ are $(\eps, \gap)$-one-sided indistinguishable distributions added to batched and unbatched comments respectively, then:
\begin{enumerate}
    \item $\distrB$ and $\distrU$ can be approximated arbitrarily well by a random variable defined by an appropriately chosen piece-wise constant probability density function.
    \item We derive various properties of Pareto optimal $\distrB$ and $\distrU$ by showing that we can shift probability mass around in the piece-wise constant approximations to $\distrB$ and $\distrU$, such that we decrease expected delay while maintaining indistinguishability. In particular, we show that $\distrB$ must place $0$ probability mass below $\gap$ and any Pareto optimal $\distrB$ must be monotonically non-increasing above $\gap$. We show that $\distrU$ is uniquely defined by $\distrB$ to put as little probability mass at each point as possible to maintain indistinguishability with $\distrB$ and put any excess probability mass at $0$. We then prove that these properties imply that the zero-inflated uniform distribution is Pareto optimal. 
\end{enumerate}

For a random variable $X$ and for any positive integer $i > 0$, define a random variable $X_i$ that approximates $X$ where $X_i$ has probability density function $f^{(X)}_i(\cdot)$ with constant density over intervals of length $\frac{\gap}{i}$:
\begin{align}
   f^{(X)}_i(t) = \begin{cases}
    \frac{\Pr\left(X \in \left[k\tfrac{\gap}{i}, (k+1)\tfrac{\gap}{i}\right)\right)}{\tfrac{\gap}{i}} & \text{if }  t \in[k \tfrac{\gap}{i}, (k+1) \frac{\gap}{i}) \text{ for } k \in \N \\ 
    0 & \text{if } t < 0.
   \end{cases}   
\end{align} Given $(\distrB, \distrU) \in \mathcal{P}_{\eps, \gap}$, for any positive integer $i > 0$ define $(\distrB_i, \distrU_i)$ to be the random variables with probability density functions $f^{(B)}_i(\cdot)$ and $f^{(U)}_i(\cdot)$ taken to be the step-function approximations to $\distrB$ and $\distrU$ defined in Equation (3). Since the probability density function of each distribution is piece-wise constant, we define a ``probability density sequence'' of each distribution ($\{b^{(i)}_k\}_{k=0}^\infty$ and $\{u^{(i)}_k \}_{k=0}^\infty$ respectively) to be the sequence of values of the pdf for each constant interval of length $\gap / i$. For instance, $b^{(i)}_0$ corresponds to the constant probability density for values in range $0$ to $\gap / i$ while $b^{(i)}_i$ corresponds to the probability density over range $\gap$ to $(\gap+1) / i$. 

\begin{lemma}[Piecewise Constant Approximation]
\label{lem:piecewise}
For any $B,U \in \mathcal{P}_{\eps, \gamma}$ and $i \in \N$ the following properties hold for piece-wise constant approximations $(\distrB_i, \distrU_i)$ to $(\distrB, \distrU)$ with probability density functions $f^{(B)}_i$ and $f^{(U)}_i$ respectively:
\begin{enumerate}[label=(\roman*)]
    \item (Valid Probability Distributions) $f^{(B)}_i$ and $f^{(U)}_i$ are non-negative functions that integrate to $1$.
    \item (Indistinguishability) $(\distrB_i, \distrU_i) \in \mathcal{P}_{\eps, \gamma}$.
    \item (Convergence of Expected Value) $\lim_{i \to \infty} (\E[\distrB_i], \E[\distrU_i]) =  (\E[\distrB], \E[\distrU])$.
\end{enumerate}
\end{lemma}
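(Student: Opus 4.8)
The plan is to verify the three claimed properties directly from the definition of the step-function approximation in Equation~(3). For property~(i), I first observe that $f^{(X)}_i$ is non-negative by construction, since each value $\frac{\Pr(X \in [k\gap/i, (k+1)\gap/i))}{\gap/i}$ is a probability divided by a positive length. To see that $f^{(X)}_i$ integrates to $1$, I compute $\int_0^\infty f^{(X)}_i(t)\,dt = \sum_{k=0}^\infty \frac{\gap}{i} \cdot \frac{\Pr(X \in [k\gap/i, (k+1)\gap/i))}{\gap/i} = \sum_{k=0}^\infty \Pr\bigl(X \in [k\gap/i, (k+1)\gap/i)\bigr) = \Pr(X \in [0,\infty)) = 1$, using that $X$ is non-negative. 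Applying this with $X = \distrB$ and $X = \distrU$ gives~(i).

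For property~(ii), I need to show that $(\distrB_i, \distrU_i)$ is $(\eps,\gap)$-one-sided indistinguishable, i.e., $\Pr[\distrB_i \in S] \le e^\eps \Pr[\distrU_i \in S - \ts_0]$ for every measurable $S$ and every $\ts_0 \in [0,\gap]$. The key idea is that shifting by any $\ts_0 \in [0,\gap]$ corresponds to shifting by at most one full interval of length $\gap/i$ plus a sub-interval offset, but this is delicate — I would instead work at the level of the probability density sequences. Writing $b^{(i)}_k$ and $u^{(i)}_k$ for the constant densities of $\distrB_i$ and $\distrU_i$ on the $k$-th interval, I claim that one-sided indistinguishability of $(\distrB_i, \distrU_i)$ is equivalent to a pointwise inequality $b^{(i)}_k \le e^\eps u^{(i)}_{k-j}$ for all $k$ and all $j \in \{0, 1, \dots, i\}$ (with the convention that $u^{(i)}_\ell = 0$ for $\ell < 0$, which would force $b^{(i)}_k = 0$ for $k < i$). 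This pointwise inequality on the sequences follows from the corresponding inequality for $(\distrB, \distrU)$: since $b^{(i)}_k$ is the average of the density of $\distrB$ over an interval of length $\gap/i$ and $u^{(i)}_{k-j}$ is the average of the density of $\distrU$ over the interval shifted back by $j\gap/i \le \gap$, and $(\distrB,\distrU)$ satisfies the likelihood-ratio bound $e^\eps$ for any shift in $[0,\gap]$, averaging preserves the bound. I expect this translation between the set-based indistinguishability definition and the sequence-level pointwise condition to be the main obstacle: one must be careful that an arbitrary shift $\ts_0 \in [0,\gap]$ of a piecewise-constant function is dominated (up to the $e^\eps$ factor already available) by an integer-interval shift, and that the worst case over measurable $S$ reduces to checking individual intervals because both densities are constant on each interval.

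For property~(iii), convergence of expectations, I would argue that $\distrB_i \to \distrB$ in distribution (since the CDF of $\distrB_i$ agrees with that of $\distrB$ at every gridpoint $k\gap/i$ and interpolates linearly between, so it converges pointwise to the CDF of $\distrB$ at continuity points), and then upgrade convergence in distribution to convergence of means. The latter requires a uniform integrability / tail-control argument: here I would invoke the structure established (or to be established) for the problem — in particular that any relevant $\distrB$ and $\distrU$ in $\mathcal{P}_{\eps,\gap}$ have exponentially decaying tails because of the likelihood-ratio constraint $\Pr[\distrB_i \in S - \gap] \le e^{\eps/2}\Pr[\distrB_i \in S]$ forces geometric decay of the sequence $\{b^{(i)}_k\}$, giving a uniform-in-$i$ exponential tail bound. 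Alternatively, since $\E[\distrB_i] = \sum_k \int_{k\gap/i}^{(k+1)\gap/i} t\, f^{(B)}_i(t)\,dt$ and $\int_{k\gap/i}^{(k+1)\gap/i} t\,dt$ differs from $(k\gap/i)\cdot(\gap/i)$ by at most $(\gap/i)^2$, one gets $|\E[\distrB_i] - \sum_k (k\gap/i)\Pr(\distrB \in [k\gap/i,(k+1)\gap/i))| \le (\gap/i)\E[\text{something bounded}]$, and the sum telescopes to within $\gap/i$ of $\E[\distrB]$; letting $i \to \infty$ gives the result, and identically for $\distrU$. This final step is mostly routine once the tail bound is in hand; the genuine work is property~(ii).
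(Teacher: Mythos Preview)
Your approach to parts~(i) and~(ii) is correct and essentially matches the paper's proof. The paper verifies the density-sequence inequality $b^{(i)}_k / u^{(i)}_{k-\ell} \le e^\eps$ for each $\ell \in \{0,\dots,\min(i,k)\}$ directly from the indistinguishability of $(\distrB,\distrU)$ applied to the corresponding grid intervals, exactly as you outline, and then asserts this gives the pointwise density bound for all $t_0 \in [0,\gap]$. Your worry about non-integer shifts is not a real obstacle: since both $f^{(\distrB)}_i$ and $f^{(\distrU)}_i$ are constant on each grid cell, for any $t$ and any $t_0 \in [0,\gap]$ the ratio $f^{(\distrB)}_i(t)/f^{(\distrU)}_i(t-t_0)$ equals $b^{(i)}_k/u^{(i)}_m$ for some $m \in \{k-i,\dots,k\}$, which is already covered by the sequence-level inequality.

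For part~(iii), the paper simply cites Lemma~19 of the staircase paper~\cite{staircase} for $\lim_{i\to\infty}\E[X_i]=\E[X]$. Your second, direct argument is correct and more self-contained: since $\distrB_i$ and $\distrB$ place the same mass on each interval $[k\gap/i,(k+1)\gap/i)$ and any two points of that interval differ by at most $\gap/i$, one gets $\bigl|\E[\distrB_i]-\E[\distrB]\bigr|\le \gap/i$, and no tail control is needed. Your first proposed route, however, has a genuine gap: the one-sided indistinguishability constraint relates $\distrB$ to $\distrU$, not $\distrB$ to a shift of itself, so there is no inequality of the form $\Pr[\distrB \in S-\gap]\le e^{\eps/2}\Pr[\distrB \in S]$ available for general $(\distrB,\distrU)\in\mathcal{P}_{\eps,\gap}$. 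Indeed $\distrB=\distrU$ with \emph{any} non-negative distribution (including heavy-tailed ones) lies in $\mathcal{P}_{\eps,\gap}$, so no exponential tail follows from membership alone. Stick with your second argument.
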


\begin{proof} We prove each claim separately:

\begin{enumerate}[label=(\roman*)]
\item For any random variable $X$ with approximation $X_i$ we have $$\int_{0}^{\infty} f_i^{(X)}(t) dt =  \sum_{k=0}^{\infty} \int_{\left[\tfrac{k\gap}{i}, \tfrac{(k+1)\gap}{i}\right)} f_i^{(X)}(t) dt = \sum_{k=0}^{\infty} \Pr(X \in [\tfrac{k\gap}{i}, \tfrac{(k+1)\gap}{i})) = 1.$$
\item For any $\ell \in \{0,\ldots,\min(i, k) \}$: 
\begin{align*}
\frac{\bSeq}{u^{(i)}_{k-\ell}} = \frac{ \Pr(\distrB \in [k\gap/i, (k+1)\gap/i])}{\Pr(\distrU \in [(k- \ell)\gap/i, (k-\ell+1)\gap/i])} \leq e^{\eps}
\end{align*}
 by indistinguishability of $\distrB$ and $\distrU$ and since the interval in the denominator is the same length interval as the numerator shifted by at most $g$ to the left. Hence, for any $t \in [0, \infty), t_0 \in [0,\gap]$: $\frac{\distrB_i(t)}{\distrU_i(t - t_0)} \leq e^\eps$ so $(\distrB_i, \distrU_i) \in \mathcal{P}_{\eps, \gap}$.
\item In~\cite{staircase} Lemma 19 in Appendix B proves that for any random variable $X$ and approximation $X_i$ defined as above, $\lim_{i\to \infty} \E[X_i] = \E[X]$. So, $\lim_{i \to \infty} (\E[\distrB_i], \E[\distrU_i]) = (\lim_{i \to \infty} \E[\distrB_i], \lim_{i \to \infty} \E[\distrU_i]) = (\E[\distrB], \E[\distrU])$.
\end{enumerate}
\end{proof}

It follows from from parts (ii) and (iii) of Lemma~\ref{lem:piecewise} that
\begin{corollary}
\label{cor:weighted_sum}
For any fixed $[w_\distrB, w_\distrU] \in [0,1]^2$ with $w_\distrB + w_\distrU = 1$: \begin{align*}\inf_{(\distrB_i, \distrU_i) \in \bigcup_{i=1}^{\infty} \mathcal{P}^{(i)}_{\eps,\gap}} w_\distrB \E[\distrB_i] + w_\distrU \E[\distrU_i] = \inf_{(\distrB, \distrU) \in \mathcal{P}_{\eps, \gap}} w_\distrB \E[\distrB] + w_\distrU \E[\distrU].\end{align*}
\end{corollary}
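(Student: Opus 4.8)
The plan is to prove the two inequalities separately; both fall out immediately from Lemma~\ref{lem:piecewise}. Write $c(\distrB,\distrU) := w_\distrB\E[\distrB] + w_\distrU\E[\distrU]$ for the weighted objective, and recall that $\mathcal{P}^{(i)}_{\eps,\gap}$ is the collection of piecewise-constant (at scale $\gap/i$) pairs in $\mathcal{P}_{\eps,\gap}$, so that $\bigcup_{i=1}^{\infty}\mathcal{P}^{(i)}_{\eps,\gap} \subseteq \mathcal{P}_{\eps,\gap}$. Since the infimum of $c$ over a subset is at least its infimum over the ambient set, this containment alone gives $\inf_{\bigcup_i \mathcal{P}^{(i)}_{\eps,\gap}} c \;\geq\; \inf_{\mathcal{P}_{\eps,\gap}} c$.

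For the reverse inequality, fix an arbitrary $(\distrB,\distrU) \in \mathcal{P}_{\eps,\gap}$ and let $(\distrB_i,\distrU_i)$ be its resolution-$\gap/i$ step-function approximations. By part (ii) of Lemma~\ref{lem:piecewise} each $(\distrB_i,\distrU_i)$ is itself $(\eps,\gap)$-one-sided indistinguishable and piecewise constant at scale $\gap/i$, hence lies in $\mathcal{P}^{(i)}_{\eps,\gap} \subseteq \bigcup_i \mathcal{P}^{(i)}_{\eps,\gap}$; and by part (iii) of Lemma~\ref{lem:piecewise}, $\E[\distrB_i] \to \E[\distrB]$ and $\E[\distrU_i] \to \E[\distrU]$, so $c(\distrB_i,\distrU_i) \to c(\distrB,\distrU)$ because $c$ is a fixed linear, hence continuous, functional of the two expectations. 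Therefore $\inf_{\bigcup_i \mathcal{P}^{(i)}_{\eps,\gap}} c \leq \inf_i c(\distrB_i,\distrU_i) \leq c(\distrB,\distrU)$, and taking the infimum over all $(\distrB,\distrU) \in \mathcal{P}_{\eps,\gap}$ yields $\inf_{\bigcup_i \mathcal{P}^{(i)}_{\eps,\gap}} c \leq \inf_{\mathcal{P}_{\eps,\gap}} c$. Combining the two bounds proves the equality.

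I do not expect any real obstacle: the entire substantive content is already carried by Lemma~\ref{lem:piecewise}, and what remains is a short soft argument. The only two points needing a little care are orienting the subset/infimum inequality correctly, and handling pairs with $\E[\distrB]=\infty$ or $\E[\distrU]=\infty$ --- such a pair contributes $+\infty$ to the right-hand side and so never affects the infimum, so it is enough to run the ``$\leq$'' argument over pairs of finite objective, where part (iii) of Lemma~\ref{lem:piecewise} supplies ordinary real-valued convergence. If desired, finiteness of the common value follows by exhibiting one finite-expectation element of $\mathcal{P}_{\eps,\gap}$, e.g.\ the uniform pair from Theorem~\ref{thm:privacy_distributions}.
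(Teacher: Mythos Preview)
Your proof is correct and takes essentially the same approach as the paper, which simply states that the corollary ``follows from parts (ii) and (iii) of Lemma~\ref{lem:piecewise}'' without spelling out the details. You have fleshed out exactly the intended argument: the containment $\bigcup_i \mathcal{P}^{(i)}_{\eps,\gap} \subseteq \mathcal{P}_{\eps,\gap}$ gives one inequality, and approximating an arbitrary pair by its step-function discretizations (part~(ii)) together with convergence of expectations (part~(iii)) gives the other.
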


Now, we show that deriving the Pareto frontier of $\mathcal{P}_{\eps, \gap}$ is equivalent to optimizing any weighted sum of $\E[\distrB]$ and $\E[\distrU]$ because the feasible region is convex. Therefore, we can focus on characterizing $\mathcal{P}^{(i)}_{\eps, \gap}$ that are optimal for the weighted sum objective and take the limit as $i \to \infty$ to derive the entire Pareto frontier of $\mathcal{P}_{\eps, \gap}$. 

\begin{lemma}
\label{lem:frontier_weighted_sum}
If $(\distrB, \distrU)$ is Pareto optimal, then it minimizes some weighted sum of $\E[\distrB]$ and $\E[\distrU]$:  $\exists (w_\distrB, w_\distrU) \in [0,1]^2$ with $w_\distrB + w_\distrU = 1$ such that $$(\distrB, \distrU) \in \argmin_{(B',U') \in \mathcal{P}_{\eps, \gap}} w_\distrB\E[\distrB'] + w_\distrU\E[\distrU'].$$
\end{lemma}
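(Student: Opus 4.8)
The plan is to reduce the statement to the standard convex-analytic scalarization fact already cited in the paper (Boyd, Chapter 4.7): every Pareto optimal point of a convex subset of $\R^2$ minimizes some non-negative weighted sum of its coordinates. Concretely, I would introduce the image set $\mathcal{F} = \{(\E[\distrB], \E[\distrU]) : (\distrB, \distrU) \in \mathcal{P}_{\eps,\gap}\} \subseteq \R^2$, restricting to pairs with finite expected delay (implicit anyway, since otherwise the weighted objective is infinite and the claim is vacuous), and observe that $(\distrB,\distrU)$ being Pareto optimal is, by definition, equivalent to its image being a componentwise-minimal point of $\mathcal{F}$. It then suffices to show $\mathcal{F}$ is convex and invoke the scalarization fact.

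The first genuine step is proving convexity of $\mathcal{F}$. Given two feasible pairs $(\distrB_1,\distrU_1),(\distrB_2,\distrU_2) \in \mathcal{P}_{\eps,\gap}$ and $\lambda \in [0,1]$, I would form the mixture random variables $\distrB_\lambda$ and $\distrU_\lambda$: flip a single $\lambda$-biased coin, and on heads output the pair $(\distrB_1,\distrU_1)$'s draws, on tails the pair $(\distrB_2,\distrU_2)$'s draws. These are non-negative, and one-sided indistinguishability is inherited because for every measurable $S$ and every $\ts_0 \in [0,\gap]$ the quantities $\Pr[\distrB_\lambda \in S]$ and $\Pr[\distrU_\lambda \in S-\ts_0]$ are the same $\lambda$-convex combinations of the corresponding probabilities for the two components, so the factor $e^\eps$ survives the averaging; and by linearity of expectation $\E[\distrB_\lambda] = \lambda\E[\distrB_1] + (1-\lambda)\E[\distrB_2]$, likewise for $\distrU$. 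Hence every convex combination of two points of $\mathcal{F}$ again lies in $\mathcal{F}$.

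The second step is the hyperplane argument. I would pass to the upward closure $\mathcal{F} + \R_{\ge 0}^2$, which is still convex, note that the image $p^\ast = (\E[\distrB],\E[\distrU])$ of a Pareto optimal pair lies on its boundary (it cannot be an interior point, since every neighborhood of an interior point contains a coordinatewise-smaller feasible point, contradicting Pareto optimality), and apply the supporting hyperplane theorem to get a nonzero normal $(w_\distrB, w_\distrU)$ with $w_\distrB x + w_\distrU y \ge w_\distrB \E[\distrB] + w_\distrU \E[\distrU]$ on $\mathcal{F}+\R_{\ge0}^2$; testing against $p^\ast+(1,0)$ and $p^\ast+(0,1)$ forces $w_\distrB, w_\distrU \ge 0$, and rescaling gives $w_\distrB + w_\distrU = 1$. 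Restricting the inequality back to $\mathcal{F} \subseteq \mathcal{F}+\R_{\ge0}^2$ and observing it is tight at $p^\ast$ yields $(\distrB,\distrU) \in \argmin_{(\distrB',\distrU') \in \mathcal{P}_{\eps,\gap}} w_\distrB\E[\distrB'] + w_\distrU\E[\distrU']$. The only nontrivial point is the convexity of $\mathcal{F}$ — i.e., getting the mixing coupling right so that indistinguishability is inherited — since the remaining reasoning is a direct application of the standard convex Pareto/weighted-sum correspondence.
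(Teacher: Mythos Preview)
Your proposal is correct and follows essentially the same approach as the paper: both prove convexity of the feasible expectation region via mixture distributions (your $\distrB_\lambda,\distrU_\lambda$ is exactly the paper's $\distrB_3,\distrU_3$) and then invoke the standard convex scalarization fact. The only difference is that you spell out the supporting-hyperplane argument behind that fact, whereas the paper simply cites Boyd~4.7 for it.
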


\begin{proof}
We argue that the feasible region $\{(\E[\distrB], \E[\distrU]) \;|\; (\distrB, \distrU) \in \mathcal{P}_{\eps, \gap}\}$ is convex. Take $(\distrB_1,\distrU_1) , (\distrB_2, \distrU_2) \in \mathcal{P}_{\eps, \gap}$ with $E_1 = (\E[\distrB_1], \E[\distrU_1])$ and $E_2 = (\E[\distrB_2], \E[\distrU_2])$. For any $p \in [0,1]$ define random variable $\distrB_3$ to be the random variable that samples $\distrB_1$ with probability $p$ and $\distrB_2$ with probability $(1-p)$ and define $\distrU_3$ accordingly with respect to $\distrU_1,\distrU_2$. Then, for any measurable set $S \subseteq \R$, $$\Pr[\distrB_3 \in S] = p \Pr[\distrB_1 \in S] + (1-p) \Pr[\distrB_2 \in S] \leq e^{\eps} p \Pr[\distrU_1 \in S] + (1-p) e^\eps \Pr[\distrU_2 \in S] = e^{\eps} \Pr[\distrU_3 \in S]$$ so $(\distrB_3, \distrU_3) \in \mathcal{P}_{\eps,\gap}$ and have expectations $p  E_1 + (1-p)  E_2$. Then, we apply the fact that all points in the Pareto frontier of a convex feasible region are solutions to a weighted sum optimization problem (see, for instance, Boyd~\cite[Chapter 4.7]{boyd2004convex}).
\end{proof}

\subsubsection*{Properties of Pareto Optimal $\distrB_i, \distrU_i$:}
\label{sum:properties}

Below, we establish the following properties of any Pareto optimal $(\distrB_i, \distrU_i) \in \mathcal{P}_{\eps, \gap}^{(i)}$ for any $i \in \N$ with probability density sequences $\{\bSeq\}_{k=0}^\infty$ and $\{\uSeq\}_{k=0}^\infty$ respectively:
\begin{enumerate}[label=(\arabic*)]
    \item $\bSeq=0$ for all $k < i$, $\{\bSeq\}$ is non-increasing for all $k \geq i$, and $\bSeq$ is bounded by $\bSeq \leq \frac{1 - e^{-\eps}}{e^{-\eps} \gap}$ for all $k$.
    \item $\uSeq$ is fully determined by choice of $\bSeq$, that is, $\uSeq = e^{-\eps} b^{(i)}_i$ for all $k \in [1, i)$, $\uSeq = e^{-\eps} \bSeq$ for all $k \geq i$, and $u^{(i)}_0 = \frac{i}{\gap} ( 1 - e^{-\eps} - \tfrac{(i-1)}{i} \gap e^{-\eps} b^{(i)}_i) \geq \frac{1-e^{-\eps}}{\gap}$.
\end{enumerate}

\begin{lemma}[Support of $\distrB$]
\label{lem:support_\distrB}
Let $\distrB$ and $\distrU$ be any $(\eps, \gap)$-one-sided indistinguishable distributions. Then, $\Pr[\distrB < \gap] = 0$.
\end{lemma}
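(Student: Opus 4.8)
The plan is to invoke the definition of one-sided indistinguishability (Definition~\ref{defn:indisting_distr}) directly, with a carefully chosen measurable set $S$ and shift $\ts_0$, exploiting the fact that $\distrU$ is supported on $[0,\infty)$. The key observation is that if we can push the shifted set $S - \ts_0$ entirely into the negative reals, then $\Pr[\distrU \in S - \ts_0] = 0$, which forces $\Pr[\distrB \in S] = 0$ as well.

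First I would fix an arbitrary $\delta \in (0, \gap)$ and apply the indistinguishability inequality with $S = [0, \gap - \delta]$ and $\ts_0 = \gap$ (note $\ts_0 = \gap \in [0,\gap]$, so this is an admissible choice). Then $S - \ts_0 = [-\gap, -\delta] \subseteq (-\infty, 0)$, and since $\distrU$ is a non-negative random variable, $\Pr[\distrU \in S - \ts_0] = 0$. Hence Definition~\ref{defn:indisting_distr} gives $\Pr[\distrB \in [0,\gap-\delta]] \leq e^\eps \cdot 0 = 0$.

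Next I would pass from the half-open interval to the closed-below half-open interval $[0,\gap)$ by writing it as the increasing union $[0, \gap) = \bigcup_{n \geq 1} [0, \gap - 1/n]$ and applying continuity of probability measures from below: $\Pr[\distrB \in [0,\gap)] = \lim_{n\to\infty} \Pr[\distrB \in [0, \gap - 1/n]] = 0$. Since $\distrB \geq 0$ almost surely, $\Pr[\distrB < \gap] = \Pr[\distrB \in [0,\gap)] = 0$, which is the claim.

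The only subtlety — and the reason for the limiting argument rather than directly taking $S = [0, \gap]$ — is that with $S = [0,\gap]$ the shifted set $S - \gap = [-\gap, 0]$ contains the point $0$, and $\Pr[\distrU = 0]$ may be strictly positive (indeed it is, for the zero-inflated uniform $\distrU$), so that bound would be vacuous. Choosing $S$ bounded strictly below $\gap$ and then taking a countable union circumvents this. I do not anticipate any other obstacle; the argument is short once the right $S$ and $\ts_0$ are identified.
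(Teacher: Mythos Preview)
Your proof is correct and uses the same idea as the paper: choose $S$ and $\ts_0 = \gap$ so that $S - \ts_0$ lies in the strictly negative reals, forcing $\Pr[\distrU \in S - \ts_0] = 0$. The paper's version is slightly more direct---it takes $S = [0,\gap)$ in one shot, so that $S - \gap = [-\gap, 0)$ already excludes the point $0$, making your limiting argument over $[0,\gap - 1/n]$ unnecessary.
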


\begin{proof}
By indistinguishability $\frac{\Pr(\distrB \in [0,\gap))}{\Pr(\distrU \in [-\gap, 0))} \leq e^\eps$, but by non-negativity, $\Pr(\distrU \in [-\gap, 0)) = 0$. So, $\Pr(\distrB \in [0, \gap)) = 0$.
\end{proof}

Note that by definition of $\distrB_i$, the above lemma proves that $\bSeq=0$ for all $k < i$, since any interval below $i$ corresponds to the density of the random variable at a value below $\gap$. 

\begin{lemma}[Upper bound on $b$]
\label{lem:bound_\distrB}
For any, $(\distrB, \distrU) \in \mathcal{P}_{\eps, \gap}$, if $\distrB$ has probability density function $b$, then: $$b(t) \leq \frac{(1-e^{-\eps})}{e^{-\eps} \gap} \;\; \forall t \in [0, \infty).$$
\end{lemma}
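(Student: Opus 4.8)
The plan is to bound the density of $\distrB$ by comparing it, level by level, against the density of $\distrU$, while exploiting that $\distrB$ puts no mass below $\gap$ (Lemma~\ref{lem:support_\distrB}). Write $b,u$ for the densities of $\distrB,\distrU$. As noted in the proof of Theorem~\ref{thm:privacy_distributions}, $(\distrB,\distrU)\in\mathcal{P}_{\eps,\gap}$ is equivalent to $b(\ts)\le e^{\eps}u(\ts-\ts_0)$ for all $\ts\ge 0$ and $\ts_0\in[0,\gap]$ with $b(\ts)>0$. Substituting $\ts = s+\ts_0$, this gives $u(s)\ge e^{-\eps}b(\ts)$ for every $s\ge 0$ and every $\ts\in[s,s+\gap]$, hence
\begin{align*}
u(s)\;\ge\; e^{-\eps}\,\bar b(s),\qquad\text{where}\quad \bar b(s):=\operatorname{ess\,sup}_{\ts\in[s,\,s+\gap]}b(\ts).
\end{align*}
Integrating over $s$ and using $\int_0^\infty u = 1$ yields $\int_0^\infty \bar b(s)\,ds\le e^{\eps}$. (If one is uncomfortable manipulating densities pointwise, the same argument runs on the step-function approximations $(\distrB_i,\distrU_i)$ of Lemma~\ref{lem:piecewise}, whose densities are genuine step functions, followed by $i\to\infty$.)

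It therefore suffices to prove $\int_0^\infty \bar b(s)\,ds\;\ge\; 1 + \gap\,\|b\|_\infty$, for then $1+\gap\|b\|_\infty\le e^{\eps}$, i.e.\ $\|b\|_\infty\le (e^{\eps}-1)/\gap = (1-e^{-\eps})/(e^{-\eps}\gap)$, which is the claimed bound. I would prove this inequality via the layer-cake representations $\int_0^\infty b = \int_0^\infty \lambda(\{b>c\})\,dc = 1$ and $\int_0^\infty \bar b = \int_0^\infty \lambda(\{s\ge 0:\bar b(s)>c\})\,dc$, where $\lambda$ is Lebesgue measure. Fix a level $c$ with $\lambda(L_c)>0$, where $L_c:=\{b>c\}$, and let $L_c^{+}:=\{s\ge 0:\bar b(s)>c\}$. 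The crux is the claim
\begin{align*}
\lambda\big(L_c^{+}\big)\;\ge\;\lambda(L_c)+\gap .
\end{align*}
To see this, note first that $L_c^{+}\supseteq L_c$ up to a null set (almost every point of $L_c$ is a Lebesgue density point, so the length-$\gap$ window to its right meets $L_c$ in positive measure); and second, letting $\alpha:=\operatorname{ess\,inf}L_c$, Lemma~\ref{lem:support_\distrB} gives $\alpha\ge\gap$, the interval $(\alpha-\gap,\alpha)\subseteq[0,\infty)$ is disjoint from $L_c$ up to a null set (by definition of $\alpha$), and every $s\in(\alpha-\gap,\alpha)$ lies in $L_c^{+}$: indeed $s<\alpha<s+\gap$, so $[s,s+\gap]\supseteq[\alpha,s+\gap]$, and $[\alpha,s+\gap]$ meets $L_c$ in positive measure (again by the definition of $\alpha$ as an essential infimum). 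Since $\lambda(L_c)>0$ precisely when $c<\|b\|_\infty$, integrating the claim over $c\in(0,\|b\|_\infty)$ gives $\int_0^\infty\bar b\ge\int_0^\infty\lambda(L_c)\,dc+\gap\|b\|_\infty = 1+\gap\|b\|_\infty$, as needed.

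The main obstacle is the set-expansion inequality $\lambda(L_c^{+})\ge\lambda(L_c)+\gap$ together with the idea of accumulating these gains across levels. A naive approach that peels off only one near-maximizer $t$ of $b$ and splits $[0,\infty)$ around $[t-\gap,t]$ loses a term equal to $\Pr[\distrB\in[t-\gap,t]]$, which can be as large as $1$, yielding only the weaker bound $\|b\|_\infty\le e^{\eps}/\gap$; the layer-cake formulation is exactly what lets the extra length $\gap$ recur at every level $c<\|b\|_\infty$ and sum to $\gap\|b\|_\infty$. The remaining work is routine measure-theoretic bookkeeping (Lebesgue density points, essential infimum/supremum), all of which is avoided if one instead argues on the piecewise-constant approximations from Lemma~\ref{lem:piecewise}.
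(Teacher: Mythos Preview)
Your proof is correct, but it takes a considerably more elaborate route than the paper's. The paper's argument is a direct three-piece split of $\int_0^\infty u$: pick a (near-)maximizer $t^*$ of $b$, and write
\[
1=\int_0^{t^*-\gap}u+\int_{t^*-\gap}^{t^*}u+\int_{t^*}^\infty u
\;\ge\; e^{-\eps}\!\int_0^{t^*-\gap}\! b(t+\gap)\,dt \;+\; \gap\,e^{-\eps}b(t^*) \;+\; e^{-\eps}\!\int_{t^*}^\infty\! b(t)\,dt.
\]
After the substitution $s=t+\gap$ in the first integral, the first and third pieces combine to $e^{-\eps}\int_\gap^\infty b = e^{-\eps}$, and one reads off $1\ge e^{-\eps}+\gap e^{-\eps}b(t^*)$ directly. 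The trick is that the paper uses \emph{different} shifts $\ts_0$ on the three pieces: shift $\gap$ on $[0,t^*-\gap]$, shift $t^*-t$ on $[t^*-\gap,t^*]$, and shift $0$ on $[t^*,\infty)$.

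Your closing remark that the one-maximizer approach ``loses a term equal to $\Pr[\distrB\in[t-\gap,t]]$'' and therefore yields only $\|b\|_\infty\le e^{\eps}/\gap$ is true only if one insists on a single shift throughout; the paper's asymmetric choice of shifts is exactly what avoids this loss. Your layer-cake argument achieves the same effect by accumulating the ``extra $\gap$'' across all levels $c$, which is elegant but heavier. One small advantage your approach does buy: it shows $\|b\|_\infty<\infty$ as a byproduct (since $\int\bar b\le e^{\eps}$ forces the level sets to shrink), whereas the paper's opening sentence ``since $b$ is non-negative and integrates to $1$ it must be bounded'' is not literally true for arbitrary densities and needs the same argument applied to a near-supremum.
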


\begin{proof}
Since $b(\cdot)$ is non-negative and integrates to $1$ it must be bounded. Take any $t^* \in \argmax_{t \in [\gap, \infty)} b(t) $. Then,
\begin{align}
 1 & = \int_{0}^{\infty} u(t) \; dt \\
 & = \int_{0}^{t^* - \gap} u(t) \; dt + \int_{t^* - \gap}^{t^*} u(t) \; dt + \int_{t^*}^{\infty} u(t) \; dt \\ 
 & \geq  \int_{0}^{t^* - \gap} e^{-\eps}b(t + \gap) \; dt  + \int_{t^* - \gap}^{t^*} e^{-\eps} b(t^*) \; dt + \int_{t^*}^{\infty} e^{-\eps} b(t) \; dt \\ 
 & = \gap  e^{-\eps} b(t^*) +  \int_{\gap}^{\infty} e^{-\eps}b(t) \; dt\\
 & = \gap  e^{-\eps} b(t^*) + e^{-\eps},
\end{align}
where (3) follows from the indistinguishability definition and (1) and (5) follow since $\distrB$ and $\distrU$ both must integrate to $1$ to be valid probability density functions. Then, $\max\limits_{t \in [0, \infty)} b(t) = b(t^*) \leq \frac{1 - e^{-\eps}}{e^{-\eps} \gap}$.

\end{proof}

\begin{lemma}[$\distrB_i$ determines $\distrU_i$]
\label{lem:determines}
For any $i \in \N$, let $(\distrB_i,\distrU_i) \in \mathcal{P}^{(i)}_{\eps, \gap}$ be Pareto optimal distributions (within $\mathcal{P}^{(i)}_{\eps, \gap}$) with probability density sequences $b_0^{(i)},b_1^{(i)},\ldots$ and $u_0^{(i)},u_1^{(i)},\ldots$ respectively. Then, $\forall k \in \Ints_{> 0}$ it holds that $\uSeq = \max\limits_{j \in [0,i]} e^{-\eps} b^{(i)}_{k+j}$ and $u^{(i)}_0 = \frac{i}{\gap}\left(1 - \sum\limits_{k=1}^{\infty} \frac{\gap}{i} \uSeq \right)$.
\end{lemma}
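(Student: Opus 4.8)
The plan is to exploit the fact that, for step-function densities with interval width $\gap/i$, the one-sided indistinguishability condition $(\distrB_i,\distrU_i)\in\mathcal{P}^{(i)}_{\eps,\gap}$ is equivalent to a family of one-directional lower bounds on the sequence $\{u^{(i)}_k\}$ expressed purely in terms of the sequence $\{b^{(i)}_k\}$. Concretely, rewriting the density form of Definition~\ref{defn:indisting_distr} (exactly as already done in the proof of Lemma~\ref{lem:piecewise}(ii)), a pair of such step functions is $(\eps,\gap)$-one-sided indistinguishable if and only if $u^{(i)}_k \ge e^{-\eps} b^{(i)}_{k+j}$ for every $k\ge 0$ and every $j\in\{0,1,\dots,i\}$, i.e. $u^{(i)}_k \ge \max_{j\in[0,i]} e^{-\eps} b^{(i)}_{k+j}$. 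Two features of these constraints drive the proof: they constrain $\distrU_i$ only through $\distrB_i$, and they never force any $u^{(i)}_k$ to be \emph{small}; the only relation tying the $u^{(i)}_k$'s together is that $\distrU_i$ must integrate to $1$, i.e. $\sum_{k\ge 0}\tfrac{\gap}{i}u^{(i)}_k=1$.

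Granting this, the key step is a mass-shifting argument showing a Pareto optimal pair makes every $u^{(i)}_k$ with $k\ge 1$ as small as the lower bounds allow. Suppose for contradiction that $(\distrB_i,\distrU_i)$ is Pareto optimal within $\mathcal{P}^{(i)}_{\eps,\gap}$ but that for some $k^*\ge 1$ we have $u^{(i)}_{k^*} > \max_{j\in[0,i]}e^{-\eps}b^{(i)}_{k^*+j}=:\ell_{k^*}$. I would build $\distrU_i'$ by lowering the density on interval $k^*$ to exactly $\ell_{k^*}$ and reassigning the freed mass $\delta:=\tfrac{\gap}{i}(u^{(i)}_{k^*}-\ell_{k^*})>0$ to interval $0$ (raising $u^{(i)}_0$ by $\tfrac{i}{\gap}\delta$). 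Then $\distrU_i'$ is still a piecewise-constant density with the same interval structure, nonnegative, and integrating to $1$; it still satisfies every indistinguishability inequality, since those are lower bounds depending only on the unchanged $\{b^{(i)}_k\}$, the value on interval $k^*$ now equals its own bound $\ell_{k^*}\ge 0$, and the value on interval $0$ was only increased from an already-feasible value. Since $\distrB_i$ is untouched, $\E[\distrB_i]$ is unchanged, while $\E[\distrU_i']<\E[\distrU_i]$ because positive mass $\delta$ moved from interval $k^*$ (mean $\tfrac{(2k^*+1)\gap}{2i}$) to interval $0$ (mean $\tfrac{\gap}{2i}$), strictly decreasing the expectation. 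Thus $(\distrB_i,\distrU_i')$ Pareto-dominates $(\distrB_i,\distrU_i)$, a contradiction; hence $u^{(i)}_k=\max_{j\in[0,i]}e^{-\eps}b^{(i)}_{k+j}$ for all $k\ge 1$.

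The formula for $u^{(i)}_0$ is then immediate: $u^{(i)}_0$ is simply the density value of the valid distribution $\distrU_i$ on $[0,\gap/i)$, and solving $\sum_{k\ge 0}\tfrac{\gap}{i}u^{(i)}_k=1$ for it gives $u^{(i)}_0=\tfrac{i}{\gap}\bigl(1-\sum_{k\ge 1}\tfrac{\gap}{i}u^{(i)}_k\bigr)$, exactly the claimed expression. I expect the only genuinely delicate point to be checking that the modified $\distrU_i'$ stays in $\mathcal{P}^{(i)}_{\eps,\gap}$ — one must confirm that lowering $u^{(i)}_{k^*}$ breaks no constraint and that raising $u^{(i)}_0$ keeps its own constraint $u^{(i)}_0\ge e^{-\eps}\max_{j\in[0,i]}b^{(i)}_j$ intact — but both follow from the one-sided nature of the constraints and the fact that the $b$-sequence is unchanged. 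Conceptually, the content is that along the Pareto frontier $\distrU_i$ is completely pinned down by $\distrB_i$, so that all remaining design freedom (to be exploited by the subsequent lemmas) lies in the choice of $\distrB_i$.
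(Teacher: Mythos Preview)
Your proposal is correct and follows essentially the same mass-shifting argument as the paper: assume some $u^{(i)}_{k^*}$ with $k^*\ge 1$ strictly exceeds its lower bound $\max_{j\in[0,i]}e^{-\eps}b^{(i)}_{k^*+j}$, move the excess mass to interval $0$, and observe that the resulting $\distrU_i'$ stays feasible with strictly smaller expectation, contradicting Pareto optimality. Your treatment is slightly more explicit than the paper's in verifying feasibility of the modified density and in quantifying the expectation decrease, but the idea and structure are the same.
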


\begin{proof}
Informally, this proof will argue that if $\distrU_i$ has any ``excess'' probability mass in an interval greater than $0$, we can move that probability mass to the interval at $0$ and reduce the expectation of $\distrU_i$. 
By Lemma~\ref{lem:piecewise}, $\distrB_i$ and $\distrU_i$ are $(\eps, \gap)$-one-sided indistinguishable so $\forall k \in \Ints_{>0}$ it must be that $\uSeq \geq \max\limits_{j \in [0,i]} e^{-\eps} b^{(i)}_{k+j}$. Assume for the sake of contradiction that there is some value $\ell > 0$ for which $u^{(i)}_{\ell} > \max\limits_{j \in [0,i]} e^{-\eps} b^{(i)}_{\ell+j} =: M$. Then, define $\distrU_i'$ to have $u'^{(i)}_{\ell} = M$, $u'^{(i)}_0 = u^{(i)}_{0} + u^{(i)}_{\ell} - M$
and $u'^{(i)}_k = \uSeq$ for all other values of $k$. Then, $\distrU_i'$ is still a valid probability distribution and is $(\eps, \gap)$-indistinguishable from $\distrB_i$, but has lower expected value than $U'$ contradicting the Pareto optimality of $(\distrB_i,\distrU_i)$. The value of $u^{(i)}_0$ follows by requiring that the probability densities integrate to $1$.
\end{proof}

\begin{lemma}[$\{b^{(i)}_k\}$ and $\{u^{(i)}_k\}$ are non-increasing]
\label{lem:non_increasing_step}
For any $i \in \N$, let $(\distrB_i,\distrU_i) \in \mathcal{P}^{(i)}_{\eps, \gap}$ be Pareto optimal distributions (within $\mathcal{P}^{(i)}_{\eps, \gap}$) with probability density sequences $b^{(i)}_0,b^{(i)}_1,\ldots$ and $u^{(i)}_0,u^{(i)}_1,\ldots$ respectively. Then, $\forall k \geq i\;$ it must be that $\bSeq \geq b^{(i)}_{k+1}$ and $\forall k \geq 0$ it must be that $u^{(i)}_{k} \geq u^{(i)}_{k+1}$.
\end{lemma}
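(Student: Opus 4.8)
The plan is to prove the two monotonicity claims in turn, obtaining the statement for $\{u^{(i)}_k\}$ as a fairly direct consequence of the statement for $\{b^{(i)}_k\}$.

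\textbf{Non-increasingness of $\{b^{(i)}_k\}_{k\ge i}$.} I would argue by contradiction, using an ``ironing'' exchange. Suppose $(\distrB_i,\distrU_i)$ is Pareto optimal but $b^{(i)}_{k^*}<b^{(i)}_{k^*+1}$ for some $k^*\ge i$. Let $\bar b:=\tfrac12\big(b^{(i)}_{k^*}+b^{(i)}_{k^*+1}\big)$, let $\tilde\distrB_i$ be the distribution whose density sequence agrees with that of $\distrB_i$ except that $b^{(i)}_{k^*}$ and $b^{(i)}_{k^*+1}$ are both replaced by $\bar b$, and let $\tilde\distrU_i$ be obtained from $\tilde\distrB_i$ via the best-response formula of Lemma~\ref{lem:determines}, namely $\tilde u^{(i)}_k=\max_{j\in[0,i]}e^{-\eps}\tilde b^{(i)}_{k+j}$ for $k\ge1$ with $\tilde u^{(i)}_0$ fixed by normalization. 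One first checks, as in the proof of Lemma~\ref{lem:determines}, that $(\tilde\distrB_i,\tilde\distrU_i)\in\mathcal{P}^{(i)}_{\eps,\gap}$ (the only delicate point is $\tilde u^{(i)}_0\ge0$, which follows once we establish $\sum_{k\ge1}\tilde u^{(i)}_k\le\sum_{k\ge1}u^{(i)}_k$ below, using $u^{(i)}_0\ge0$). Since ironing moves mass from interval $k^*+1$ to the earlier interval $k^*$ while preserving total mass, $\E[\tilde\distrB_i]<\E[\distrB_i]$ strictly; so if I can also show $\E[\tilde\distrU_i]\le\E[\distrU_i]$, then $(\tilde\distrB_i,\tilde\distrU_i)$ Pareto-dominates $(\distrB_i,\distrU_i)$, a contradiction, forcing $\{b^{(i)}_k\}_{k\ge i}$ to be non-increasing.

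\textbf{The crux: $\E[\distrU]$ does not increase under ironing.} A direct computation gives $\E[\distrU_i]=\tfrac{\gap}{2i}+\tfrac{\gap^2}{i^2}\sum_{k\ge1}k\,u^{(i)}_k$ (and the same formula for $\tilde\distrU_i$), so it suffices to prove $\sum_{k\ge1}k\,\tilde u^{(i)}_k\le\sum_{k\ge1}k\,u^{(i)}_k$ together with $\sum_{k\ge1}\tilde u^{(i)}_k\le\sum_{k\ge1}u^{(i)}_k$ (the latter also supplying the validity check above). The determining formula couples $u^{(i)}_k$ to the values of $b^{(i)}$ over the width-$(i{+}1)$ window $[k,k+i]$, so ironing changes $u^{(i)}_k$ only for $k\in[k^*-i,k^*+1]$. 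For every window containing \emph{both} $k^*$ and $k^*+1$ the window-max can only weakly decrease (the pair $b^{(i)}_{k^*},b^{(i)}_{k^*+1}$ is replaced by $\bar b,\bar b$ with $\bar b<b^{(i)}_{k^*+1}$), and the same holds for the single window $k=k^*+1$ that contains $k^*+1$ but not $k^*$; the one index whose $u^{(i)}_k$ can \emph{increase} is $k^*-i$. In the case where it does increase, a short argument shows that the window $[k^*-i+1,k^*+1]$ then had maximum exactly $b^{(i)}_{k^*+1}$ and now has maximum $\bar b$, so the (weak) decrease at index $k^*-i+1$ is at least the increase at index $k^*-i$; since this compensating decrease sits at a strictly larger index, both $\sum_{k\ge1}u^{(i)}_k$ and $\sum_{k\ge1}k\,u^{(i)}_k$ weakly decrease. (The boundary case $k^*=i$ is easier: then index $k^*-i=0$ is not governed by the max formula, no $u^{(i)}_k$ with $k\ge1$ increases, and $u^{(i)}_0$ merely rises.) I expect this bookkeeping around the long-range window-max to be the main obstacle: a purely local exchange in $\distrB_i$ must be shown not to inflate $\E[\distrU_i]$ through coordinates as far as $i$ indices away.

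\textbf{Non-increasingness of $\{u^{(i)}_k\}_{k\ge0}$.} Given the first part together with $b^{(i)}_k=0$ for $k<i$ (Lemma~\ref{lem:support_\distrB}), Lemma~\ref{lem:determines} immediately yields $u^{(i)}_k=e^{-\eps}b^{(i)}_k$ for $k\ge i$ (the forward max of a non-increasing sequence is its first term), so $\{u^{(i)}_k\}_{k\ge i}$ is non-increasing, and $u^{(i)}_k=e^{-\eps}b^{(i)}_i$ for $1\le k<i$ (the window $[k,k+i]$ has maximum $b^{(i)}_i$), so $u^{(i)}_1=\dots=u^{(i)}_i=e^{-\eps}b^{(i)}_i$. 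Finally, normalization gives $u^{(i)}_0=\tfrac{i}{\gap}(1-e^{-\eps})-(i-1)e^{-\eps}b^{(i)}_i$, and the inequality $u^{(i)}_0\ge u^{(i)}_1=e^{-\eps}b^{(i)}_i$ is equivalent to $b^{(i)}_i\le\tfrac{1-e^{-\eps}}{e^{-\eps}\gap}$, which is exactly the pointwise bound of Lemma~\ref{lem:bound_\distrB}. Chaining these gives $u^{(i)}_0\ge u^{(i)}_1\ge u^{(i)}_2\ge\cdots$, completing the proof.
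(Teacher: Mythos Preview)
Your proposal is correct and takes a genuinely different route from the paper. The paper proves monotonicity of $\{b^{(i)}_k\}$ by a \emph{global} construction: it sorts the entire sequence $\{b^{(i)}_k\}_{k\ge i}$ into non-increasing order via an explicit permutation $\pi$, defines $\distrU'_i$ from the sorted $\distrB'_i$ by the determining formula, and then shows both expectations weakly drop. The comparison $\E[\distrU'_i]\le\E[\distrU_i]$ is handled by tracking the total ``excess mass'' $M=\sum_{k\ge i}(u^{(i)}_k-u'^{(i)}_k)$ that the sort frees up above interval $i$, noting that the same $M$ must be absorbed below $i$, and bounding the two contributions to the expectation separately (using, along the way, an auxiliary observation that the original $u^{(i)}_k$ is non-decreasing on $[1,i-1]$). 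You instead use a \emph{local} adjacent-pair ironing and analyze exactly which width-$(i{+}1)$ windows in the determining formula are affected, showing that the single compensating decrease at index $k^*-i+1$ dominates the only possible increase at $k^*-i$. The paper's sort is one-shot but needs the global mass-accounting; your ironing is more elementary and localizes the bookkeeping, at the cost of only removing one inversion (which suffices for the contradiction). Your derivation of the $\{u^{(i)}_k\}$ monotonicity as a corollary of $\{b^{(i)}_k\}$ monotonicity via Lemmas~\ref{lem:determines} and~\ref{lem:bound_\distrB} is also cleaner than the paper, which obtains it only as a byproduct of the sorted construction.

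One small point to patch: you write that ``the only delicate point'' in verifying $(\tilde\distrB_i,\tilde\distrU_i)\in\mathcal P^{(i)}_{\eps,\gap}$ is $\tilde u^{(i)}_0\ge 0$. In the boundary case $k^*=i$, ironing \emph{raises} $\tilde b^{(i)}_i$ to $\bar b>b^{(i)}_i$, so the indistinguishability constraint at index $0$ actually requires the stronger inequality $\tilde u^{(i)}_0\ge e^{-\eps}\bar b$, not just nonnegativity. It does hold: since in this case every $\tilde u^{(i)}_k$ with $k\ge1$ weakly decreases, one has $\tilde u^{(i)}_0\ge u^{(i)}_0+(u^{(i)}_1-\tilde u^{(i)}_1)\ge e^{-\eps}b^{(i)}_i+e^{-\eps}(b^{(i)}_{i+1}-\bar b)=e^{-\eps}\bar b$, using $u^{(i)}_0\ge e^{-\eps}b^{(i)}_i$ from the original constraint and $u^{(i)}_1=e^{-\eps}b^{(i)}_{i+1}$, $\tilde u^{(i)}_1=e^{-\eps}\bar b$ from the determining formula. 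This is a one-line addition, but you should include it.
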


\begin{proof}
Suppose that $(\distrB_i, \distrU_i) \in \mathcal{P}^{(i)}_{\eps, \gap}$ are a Pareto optimal pair of distributions with density sequences $\{b^{(i)}_0, b^{(i)}_1,\ldots\}$ and $\{u^{(i)}_0, u^{(i)}_1,\ldots\}$ respectively. We will construct new random variables $(B'_i, U'_i)$ with monotonically non-increasing density sequences $\{b'^{(i)}_0, b'^{(i)}_1,\ldots\}$ and $\{u'^{(i)}_0, u'^{(i)}_1,\ldots\}$ and argue that $\E[\distrB_i'] \leq \E[\distrB]$ and $\E[\distrU_i'] \leq \E[\distrU_i]$. We construct the new density sequences and a permutation $\pi: \N \to \N$ mapping $\{b^{(i)}_k\}$ to $\{b'^{(i)}_k\}$ as follows.

$\Pr[\distrB'_i < \gap] = 0$ by Lemma~\ref{lem:support_\distrB}, so:
\begin{align*}
    &  b'^{(i)}_k = \bSeq = 0, \;\; \forall k \in \Ints, 0 \leq k \leq (i-1)  \\ 
   & \pi(k) = k,  \;\; \forall k \in \Ints, 0 \leq k \leq (i-1). 
\end{align*}

Then, we sort $\{\bSeq\}$ by moving the interval with highest probability mass in $\{\bSeq\}$ (breaking ties to the left) as far to the left as possible in $\{b_k'^{(i)}\}$:
\begin{align*}
    & \forall m \in \Ints, m \geq i:  \\ 
   & \;\; I_m = \argmax_{k \in \N \setminus \{\pi(j) | j < m\}} \bSeq\\ 
   & \;\; \pi(m) = \min_{n \in I_m} n\\ 
   & \;\; b'^{(i)}_{m} = b^{(i)}_{\pi(m)}.
\end{align*}

Finally, by Lemma~\ref{lem:determines}, $\{u'_k\}$ must be determined by $\{b'_k\}$ in order to be Pareto optimal, so take:
\begin{align*}
    u'^{(i)}_k = \begin{cases}
     e^{-\eps} b'^{(i)}_{k} & k \geq i  \\ 
       e^{-\eps} b'^{(i)}_{i} & 1 \leq k \leq (i-1) \\ 
     \frac{i}{\gap}(1-e^{-\eps}) - (i-1) e^{-\eps} b'^{(i)}_{i} & k = 0 
    \end{cases}
\end{align*}

First, we argue that $(\distrB_i', \distrU_i') \in \mathcal{P}^{(i)}_{\eps, \gap}$. $\{b_k'\}$ defines a valid probability distribution since $\{b'^{(i)}_k\}$ is a permutation of $\{\bSeq\}$ so the distribution integrates to $1$. Then, by construction, $\{u'^{(i)}_k\}$ is also a valid probability density sequence and integrates to $1$. By Lemma~\ref{lem:bound_\distrB}, $b^{(i)}_{\pi(i)} \leq \frac{1 - e^{-\eps}}{e^{-\eps} \gap }$ so $u'^{(i)}_0 \geq e^{-\eps} b^{(i)}_{\pi(i)} = e^{-\eps} b'^{(i)}_i$. Hence, the two distributions satisfy the $(\gap, \eps)$-indistinguishability constraint by construction since $b'$ is non-increasing above interval $i$, and $u'^{(i)}_k \geq e^{-\eps} b'^{(i)}_i \; \forall k \leq i$ and $u'^{(i)}_k = e^{-\eps} b'^{(i)}_k \; \forall k \geq i$.

Now, we argue that $\E[\distrB_i'] \leq \E[\distrB_i]$ since $\{b'^{(i)}_k\}$ is a permutation of $\{\bSeq\}$ that shifts probability mass to the left. By construction $\forall t \in [0, \infty)$, it holds that $\Pr[\distrB_i' \leq t] \geq \Pr[\distrB_i \leq t]$. So, 
$$ \E[\distrB_i'] = \int_{0}^{\infty} 1 - \Pr[\distrB_i' \leq t] \; dt \leq \int_{0}^{\infty} 1- \Pr[\distrB_i \leq t] \; dt = \E[\distrB_i].$$

Finally, we want to show that $\E[\distrU_i'] \leq \E[\distrU_i]$. We will analyze the contribution to the expectation coming from intervals below $i$ and above $i$  separately.

Note that the expectation $\E[\distrU_i] = \sum_{k=0}^{\infty} \left(u_k^{(i)} \frac{\gap}{i} \right) \left(\frac{2k+1}{2} \frac{\gap}{i}\right)$ so we can split the difference between the expectations as follows:
\begin{align*}
        2\left(\tfrac{i}{g}\right)^2 (\E[\distrU_i] - \E[\distrU_i']) &= \sum_{k=i}^{\infty}  (2k+1) (\uSeq - u'^{(i)}_k) + \sum_{k=0}^{i-1}  (2k+1) (\uSeq - u'^{(i)}_k).
\end{align*}
Now, we state the following two observations, which we will apply repeatedly in the remainder of the proof:
\begin{enumerate}[label=(\roman*)]
    \item$\forall k \geq i: \; u^{(i)}_{\pi(k)} \geq e^{-\eps} b^{(i)}_{\pi(k)} = u'^{(i)}_k$, by indistinguishability of $\distrB$ and $\distrU$ and the definition of $\distrU'$.
    \item $\pi(\cdot)$ is a bijection on $[i, \infty)$ so $\sum_{k=i}^{\infty}  u^{(i)}_k = \sum_{k=i}^{\infty} u^{(i)}_{\pi(k)}$.
\end{enumerate}
By properties (i) and (ii) above, there is ``excess probability density'' above interval $i$ in $\distrU$ compared to $\distrU'$ of 
\begin{align*}
    M = \sum_{k = i}^{\infty} \uSeq - u'^{(i)}_k =  \sum_{k = i}^{\infty} u^{(i)}_{\pi(k)} - u'^{(i)}_k \geq 0.
\end{align*}
Since $\sum_{0}^\infty u'^{(i)}_k = \sum_{0}^\infty \uSeq$, by symmetry there is excess probability mass of $M$ below $i$ in $U'$ compared to $U$:
\begin{align*} M = \sum_{k = 0}^{i-1} u'^{(i)}_k - \uSeq. \end{align*}

Since $\{u'^{(i)}_k\}_{k=1}^{\infty}$ is non-increasing and by properties (i) and (ii) above the $\{u'^{(i)}_k\}$ are a permutation of $\{u^{(i)}_k\}$ with some values increased, the difference in expectations between $\distrU$ and $\distrU'$ above interval $i$ is minimized by putting all of the excess probability mass $M$ in interval $i$, so:
\begin{align*}
 \sum_{k=i}^{\infty} (2k+1) (\uSeq - u'^{(i)}_k)   \geq M(2i+1).
\end{align*}

To analyze the difference in expectations coming from intervals in $k \in [0, i-1]$, we first argue that $\distrU'$ puts more probability mass on $0$ than $\distrU$, that is $u'^{(i)}_0 \geq u^{(i)}_0$
In particular, we will argue that $\sum_{k=1}^{\infty} u'^{(i)}_k \leq \sum_{k=1}^{\infty} \uSeq$. By indistinguishability, $\forall k \geq 1 \; \uSeq \geq e^{-\eps} b^{(i)}_{k+i}$ and $\uSeq \geq e^{-\eps} b^{(i)}_{k}$ so \begin{align*}\sum_{k=1}^{\infty} \uSeq & \geq \sum_{k=1}^{\pi(i)-i-1} e^{-\eps} b^{(i)}_{k+i} + \sum_{k=\pi(i)-i}^{\pi(i)-1} e^{-\eps} b^{(i)}_{i} + \sum_{k=\pi(i)}^\infty e^{-\eps} \bSeq \\ 
& = \sum_{k=i}^{\pi(i)-1} e^{-\eps} \bSeq + \sum_{k=\pi(i)}^{\infty} e^{-\eps} \bSeq + (i-1) e^{-\eps} \distrB_i = \sum_{k=1}^{\infty} u'^{(i)}_k. \end{align*}

Next, we argue that $u^{(i)}_k$ is non-decreasing on $[1, i-1]$. By Lemma~\ref{lem:determines}, for $\forall k \in \Ints, (i-1) \geq k \geq 1: \; \uSeq = \max\limits_{j \in [i, i+k]} e^{-\eps} b^{(i)}_j \leq \max\limits_{j \in [i, i+k+1]} e^{-\eps} b^{(i)}_j = u^{(i)}_{k+1}$. Therefore, putting the excess probability mass $M$ in $U'$ compared to $U$ as far to the right as possible gives \begin{align*}\sum_{k=0}^{i-1} (2k+1) \uSeq \geq u'^{(i)}_0 + \sum_{k=0}^{i-1} \left( (2k+1) \left(u'^{(i)}_k - \frac{M}{i-1}\right) \right) = \left(\sum_{k=0}^{i-1} (2k+1) u'^{(i)}_k\right) - M(i+1),\end{align*}
so \begin{align*}\sum_{k=0}^{i-1} (2k+1) (u'^{(i)}_k - \uSeq) \leq M(i+1).\end{align*}
Thus, we conclude that
\begin{align*}
        2\left(\tfrac{i}{g}\right)^2 (\E[\distrU] - \E[\distrU']) &= \sum_{k=i}^{\infty}  (2k+1) (\uSeq - u'^{(i)}_k) + \sum_{k=0}^{i-1}  (2k+1) (\uSeq - u'^{(i)}_k) \\ 
         & \geq M(2i+1) - M(i+1)\\
         & \geq Mi \geq 0,\\ 
\end{align*}
giving $\E[\distrU] \geq \E[\distrU']$.
\end{proof} 

\subsubsection*{Pareto Frontier of $\distrB_i, \distrU_i$:}
\label{sec:pareto_distributions}

Now, we use the properties of Pareto optimal $\distrB_i, \distrU_i$ to give an exact characterization of the probability density functions of Pareto optimal $\distrB_i, \distrU_i$:

\begin{lemma}
\label{lem:exact_distribution}
For any $i \in \N$, let $S_L = \{(\distrB_i, \distrU_i) \in \mathcal{P}^{(i)}_{\eps, \gap} \; : \; b^{(i)}_i = L \text { and } (\distrB_i, \distrU_i) \text{ are Pareto optimal} \}$ be all distributions in the Pareto frontier of $\mathcal{P}^{(i)}_{\eps, \gap}$ where $b^{(i)}_i$ is fixed to be some value $L \leq \frac{1 - e^{-\eps}}{e^{-\eps} \gap}$. Then, either $S_L = \emptyset$ or $S_L$ contains a single pair of distributions where letting $n = \lfloor \frac{i}{\gap} \cdot \frac{1}{L}  \rfloor$: 
\begin{enumerate}[label=(\roman*)]
    \item $\bSeq = L$ for $k \in [i, n]$, $b^{(i)}_{n+1} = \frac{i}{\gap} (1 - \frac{\gap}{i} n L)$, and $\bSeq = 0$ for all other values of $k$.
    \item $\uSeq = e^{-\eps} L$ for $k \in [1,n]$, $u^{(i)}_{n+1} = e^{-\eps} b^{(i)}_{n+1}$, $u^{(i)}_0 = \frac{i}{\gap} ( 1 - e^{-\eps} - \tfrac{(i-1)}{i} \gap e^{-\eps} L)$ and $\uSeq = 0$ for all other $k$.
\end{enumerate}
so $\distrB_i$ is a ``nearly uniform'' distribution above $\gap$ with any excess probability mass in the final constant interval, and $\distrU_i$ has the same probability mass as $\distrB_i$ discounted by $e^{-\eps}$ except in a small band around $0$ where it may have inflated probability mass. 
\end{lemma}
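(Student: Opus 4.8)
The plan is to show that once $b^{(i)}_i$ is frozen to the value $L$, every remaining degree of freedom of a Pareto-optimal pair is forced, so $S_L$ is at most a single point and that point is the claimed ``nearly uniform'' distribution. The ingredients are already in place from the preceding lemmas. Lemma~\ref{lem:support_\distrB} gives $b^{(i)}_k=0$ for $k<i$; Lemma~\ref{lem:non_increasing_step} gives that $\{b^{(i)}_k\}_{k\ge i}$ is non-increasing, so $b^{(i)}_k\le b^{(i)}_i=L$ for all $k\ge i$; and Lemma~\ref{lem:determines} forces $\distrU_i$ (for a Pareto-optimal pair) to be the pointwise-minimal indistinguishable response to $\distrB_i$, namely $u^{(i)}_k=e^{-\eps}b^{(i)}_k$ for $k\ge i$, $u^{(i)}_k=e^{-\eps}L$ for $1\le k\le i-1$ (the maximum of $b^{(i)}$ over a window $[k,k+i]$ with $k\le i-1$ is $b^{(i)}_i=L$ by non-increasingness), and $u^{(i)}_0$ whatever mass remains. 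Thus the whole pair is a function of the non-increasing tail $\{b^{(i)}_k\}_{k\ge i}$ together with the frozen number $L$.

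Next I would collapse the two objectives into one. Writing $m_k=\frac{(2k+1)\gap}{2i}$ for the midpoint of interval $k$ and using the determined form of $\distrU_i$,
\begin{align*}
\E[\distrU_i]=u^{(i)}_0\tfrac{\gap}{i}m_0+e^{-\eps}L\sum_{k=1}^{i-1}\tfrac{\gap}{i}m_k+\sum_{k\ge i}e^{-\eps}b^{(i)}_k\tfrac{\gap}{i}m_k=C(L)+e^{-\eps}\E[\distrB_i],
\end{align*}
where the first two summands form a constant $C(L)$ depending only on $L$ (note $u^{(i)}_0=\frac{i}{\gap}(1-e^{-\eps}-\frac{i-1}{i}\gap e^{-\eps}L)$ is itself a function of $L$ alone), and the last step uses that $\distrB_i$ is supported on $[i,\infty)$ so $\E[\distrB_i]=\sum_{k\ge i}b^{(i)}_k\frac{\gap}{i}m_k$. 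Hence, among pairs with $b^{(i)}_i=L$, one pair weakly dominates another precisely when its $\E[\distrB_i]$ is no larger, so every element of $S_L$ must minimize $\E[\distrB_i]$ over the feasible family $\mathcal{F}_L=\{\distrB_i:\ b^{(i)}_k=0\ (k<i),\ \{b^{(i)}_k\}_{k\ge i}\ \text{non-increasing},\ b^{(i)}_i=L,\ \textstyle\int=1\}$.

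Then I would solve this one-dimensional minimization by ``water-filling'': let $\distrB_i^\star$ saturate the density cap $L$ on the leftmost intervals $k=i,i+1,\dots,n$, place the remaining mass in interval $n+1$ (with $n$ the largest index for which these full intervals carry total mass $\le1$), and be $0$ elsewhere; the leftover density is then $\le L$, so $\distrB_i^\star\in\mathcal{F}_L$. A short computation shows that the hypothesis $L\le\frac{1-e^{-\eps}}{e^{-\eps}\gap}$ is exactly what makes the determined partner $\distrU_i^\star$ non-negative and indistinguishable from $\distrB_i^\star$ (this is the only place the bound on $L$ enters, paralleling Lemma~\ref{lem:bound_\distrB}). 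For any $\distrB_i\in\mathcal{F}_L$ the cap $b^{(i)}_k\le L$ forces $\Pr[\distrB_i\le t]\le\Pr[\distrB_i^\star\le t]$ for every $t$ (check interval by interval, using that a CDF is linear within a constant-density interval), hence $\E[\distrB_i]=\int_0^\infty(1-\Pr[\distrB_i\le t])\,dt\ge\E[\distrB_i^\star]$, with equality iff $\distrB_i=\distrB_i^\star$. So if $(\distrB_i,\distrU_i)\in S_L$ then $(\distrB_i^\star,\distrU_i^\star)$ weakly dominates it, Pareto optimality forces equality of both expectations, and uniqueness of the minimizer forces $\distrB_i=\distrB_i^\star$ and, via Lemma~\ref{lem:determines}, $\distrU_i=\distrU_i^\star$. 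Therefore $S_L=\emptyset$ when $\distrB_i^\star$ is not globally Pareto optimal and $S_L=\{(\distrB_i^\star,\distrU_i^\star)\}$ otherwise, and reading off $\distrB_i^\star$ together with its partner yields formulas (i) and (ii), the explicit values of $b^{(i)}_{n+1}$ and $u^{(i)}_0$ being a routine normalization. The main obstacle is bookkeeping rather than any new idea: keeping the $1\le k\le i-1$ ``band'' of $\distrU_i$ straight both in the identity $\E[\distrU_i]=C(L)+e^{-\eps}\E[\distrB_i]$ and in the feasibility check, and verifying that $L\le\frac{1-e^{-\eps}}{e^{-\eps}\gap}$ is precisely the condition keeping $\distrU_i^\star$ admissible.
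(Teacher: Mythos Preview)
Your proposal is correct and follows essentially the same approach as the paper: fix $b^{(i)}_i=L$, use Lemmas~\ref{lem:support_\distrB}, \ref{lem:determines}, and \ref{lem:non_increasing_step} to determine $\distrU_i$ from $\distrB_i$, derive the affine relation $\E[\distrU_i]=C(L)+e^{-\eps}\E[\distrB_i]$ collapsing the two objectives into one, and then water-fill to minimize $\E[\distrB_i]$ under the monotonicity-plus-cap constraint. Your treatment is in fact more careful than the paper's brief argument (you give an explicit stochastic-dominance justification for the water-filling step and track the role of the bound on $L$), and your coefficient $e^{-\eps}$ in the affine relation is the correct one.
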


\begin{proof}
First, note that fixing $b^{(i)}_i = L$, by Lemma~\ref{lem:determines} we have that $\uSeq$ is fully determined by $L$ for $k \in [0,i)$. Therefore, for any Pareto optimal $\distrU_i, \distrB_i$ with $b^{(i)}_i = L$: 
\begin{align*}\E[\distrU_i] = \Pr[\distrU_i < \gap]\E[\distrU_i | \distrU_i < \gap] + \Pr[\distrU_i \geq \gap]\E[\distrU_i | \distrU_i \geq \gap] = C_L + (1-e^{-\eps}) \E[\distrB_i],\end{align*}
where $C_L$ is a constant determined by $L$. Therefore, there is a unique minimizer of $\E[\distrU_i]$ and $\E[\distrB_i]$ over $S_L$ that is obtained by minimizing $\E[\distrB_i]$. Since $\distrB_i$ is monotonically non-increasing above $i$, the distribution that minimizes its expectation puts mass equal to $b^{(i)}_i = L$ at as many intervals as possible giving $n = \lfloor \frac{i}{\gap} \frac{1}{L} \rfloor$ intervals with $\bSeq = L$ and any remaining mass needed to make the distribution integrate to $1$ in the final interval, yielding the unique optimal distributions for $(\distrB_i, \distrU_i)$.
\end{proof}

Taking limits as $i \to \infty$ of each distribution in the set of distributions from Lemma~\ref{lem:exact_distribution} yields exactly the set of zero-inflated Uniform distributions in Theorem~\ref{thm:unif_opt}, so we conclude that any optimizer of a weighted sum objective must come from this set of distributions and hence the Pareto frontier consists of Zero-inflated Uniform distributions.

\subsubsection{Optimal choice of parameter $\pUnif$}
\label{sec:param_choice}

Finally, we derive the optimal choice of parameter $\pUnif$ given $\eps, \gap$ and weighting parameter $\weightB$. From Theorem~\ref{thm:privacy_distributions} the zero-inflated Uniform with parameters $\eps, \gap$ has expectation:
$\E[\distrB] = \frac{1}{2} \gap \left(\pUnif + \frac{\pUnif}{\pUnif - e^{-\eps}} \right)$ and $\E[\distrU] = \frac{1}{2} \gap \left( \frac{\pUnif^2}{\pUnif - e^{-\eps}} \right)$. Therefore, by Pareto optimality of the zero-inflated Uniform proven in Section~\ref{sec:pareto_proof_sec}, for any $\weightB \in [0,1]$, the weighted sum of the expectations can be optimized by choosing \begin{align*}
    \pUnif^* \in \argmin_{\pUnif \in (e^{-\eps}, 1]} \weightB \left(\pUnif + \frac{\pUnif}{\pUnif - e^{-\eps}}\right) + (1-\weightB)  \frac{\pUnif^2}{\pUnif - e^{-\eps}}. 
\end{align*} 

This objective is convex on $(e^{-\eps}, 1]$ as it has second derivative with respect to $\pUnif$ of $ \weightB \left(1 + \frac{1}{\pUnif-e^{-\eps}}\right) + (1-\weightB) \frac{\pUnif^2}{\pUnif - e^{-\eps}} > 0$ for any $\weightB \in [0,1]$ and $\pUnif \in (e^{-\eps}, 1]$.

The first derivative of this objective with respect to $\pUnif$ is $ \frac{1}{(\pUnif - e^{-\eps})^2} \left(\weightB (e^{-2\eps} - e^{-\eps}) + \pUnif(\pUnif - 2e^{-\eps})\right)$. Note that for any $\weightB$, the derivative begins at a negative value on the interval $(e^{-\eps}, 1]$ and is increasing on this interval. Therefore, letting $\hat{\pUnif}$ denote the value at which the first derivative is $0$, we obtain $\hat{\pUnif} = e^{-\eps} \left(1 + \sqrt{1 + e^{\eps} \frac{\weightB}{1 - \weightB}} \right)$. Since $\hat{\pUnif}$ must fall in the interval $(e^{-\eps},1]$ we take $\pUnif^* = \min\{1, \hat{\pUnif}\}$ to get the optimal $\pUnif$ given in Algorithm~\ref{alg:unif_mechanism}, where $\pUnif^*$ is optimal since the utility function must be decreasing on $[e^{-\eps}, 1)$ in the case that $\hat{\pUnif} > 1$.

\subsection{Proof of Theorem~\ref{thm:impossibility} (Impossibility of Two-Sided DP)}
\label{appendix:impossibility}

We prove the result for each of following definitions of ``neighboring'' separately:
\begin{enumerate}[label=(\arabic*)]
    \item \emph{Add or remove a batched comment.} Consider any input $\Arriv$ where an instance of batching occurs at some time $\ts$. Let $\Arriv'$ be identical to $\Arriv$, except some comment $\act$ that arrived in a batch at time $\ts$ does not arrive at all in $\Arriv'$. Then on input $\Arriv'$, since any valid comment posting mechanism cannot generate fake data, for any $\delay > 0$ and time $\finiteTime = \ts + \delay$, the mechanism outputs $\act$ at time $\finiteTime$ with probability $0$. However, if the mechanism is $(\eps, \delta)$-DP with $\eps < \infty$, then for any release time $\finiteTime$ the mechanism outputs $\act$ within time $\finiteTime$ with probability at most $\delta < 1$ and so the mechanism violates the eventual release of all comments property. 
    \item \emph{Move a batched comment to another arrival time where it is no longer batched.} Consider any input $\Arriv$ with an instance of batching that occurs at some time $\ts$. Fix any time horizon $\finiteTime = \ts + \delay$ where $\delay > 0$. Define $\Arriv'$ to be an identical set with one comment $\act$ moved from time $\ts$ to time $\finiteTime$. Since a valid comment posting mechanism must delay comments and cannot generate fake data, the mechanism outputs comment $\act$ at time $\finiteTime$ or later on input $\Arriv'$ with probability $1$. However, if the mechanism is $(\eps, \delta)$-DP with $\eps < \infty$ then it must delay comment $\act$ until at least time $\finiteTime = \ts + \delay$ with probability at least $1-\delta$. Taking $\delay$ to be arbitrarily large, the mechanism violates the eventual release of all comments property for any $\delta < 1$. 
    \item \emph{Move a batched comment by at most $\gap$ units of time to another arrival time where it is no longer batched.} Let $\Arriv^{(1)}$ be an input where a single comment arrives every $\gap$ units of time. Then, define $\Arriv^{(1)'}$ to be a neighboring input to $\Arriv^{(1)}$ where $\act_2$ arrives in a batch with $\act_1$ at time $0$. Define $\Arriv^{(2)}$ to be a neighboring input to $\Arriv^{(1)'}$ where $\act_1$ and $\act_2$ arrive separately with $\act_1$ at time $\gap$ and $\act_2$ at time $0$ and so on:
\begin{align*}
 \Arriv^{(1)} & = \{\act_1, \ts = 0\},  \{\act_2, \ts = \gap\}, \{\act_3, \ts = 2\gap\},\ldots   \\ 
 \Arriv^{(1)'} & = \{\act_1, \act_2, \ts = 0\}, \emptyset, \{\act_3, \ts = 2\gap\},\ldots \\ 
 \Arriv^{(2)} & = \{\act_2, \ts = 0\}, \{\act_1, \ts = \gap\}, \{\act_3, \ts = 2\gap\},\ldots \\ 
 \Arriv^{(2)'} &  = \{\act_2, \ts = 0\}, \{\act_1, \act_3, \ts = \gap\}, \emptyset,\ldots
\end{align*}

Now, for any $j$: $\Arriv^{(j)}$ and $\Arriv^{(j)'}$ are neighbors and $\Arriv^{(j)}$ and $\Arriv^{(j-1)'}$ are neighbors. On input $\Arriv^{(j)}$, comment $\act_1$ arrives at time $j \gap$ and so any valid comment posting therefore posts $\act_1$ at time $j \gap$ or later with probability $1$ since it can only delay comments. Likewise, because $\Arriv^{(j-1)'}$ neighbors $\Arriv^{(j)}$ and the mechanism cannot generate fake data, any $(\eps,\delta)$-DP mechanism releases $\act_1$ at a time earlier than $j \gap$ with probability at most $\delta$ on input $\Arriv^{(j-1)'}$. Since $\Arriv^{(j-1)}$ neighbors $\Arriv^{(j-1)'}$, the mechanism releases $\act_1$ at a time earlier than $j$ with probability at most $2\delta$ on this input. Thus, on input $\Arriv^{(1)}$, comment $\act_1$ gets posted before time $j \gap$ with probability less than $2j\delta$. This suggests that the comment gets delayed by at least $\maxDelay$ with probability at least $1-2\delta(\tfrac{\maxDelay}{\gap} +1)$.
\end{enumerate}

\subsection{Proof of Proposition~\ref{prop:percentile_osdp} (Hypothesis Testing Interpretation of OSDP)}
\label{app:percentile}

Fix comment $\act_1$ and let $\act_2$ denote the closest comment to arrive in $\anonSet$. Let $R$ denote the rejection region of the adversary's chosen hypothesis test. Let $\inpBatch$ be any arrival set where $\ts_1 = \ts_2$. Let $\inp_d$ be an identical arrival set, except that $\act_2$ arrives unbatched $d$ units of time after $\act_1$ (so $\ts_2 - \ts_1 = d$) and let $\inp_d'$ be an identical arrival set except that $\act_1$ arrives $d$ units of time after $\act_2$. Then, conditioning on the event that $\ts_2 - \ts_1 \leq g$, we have that for any rejection region $R$:
$$\textit{\textit{Type I Error}} \geq \sum_{n = 0}^{\gap}\Pr[\ts_2 -\ts_1 = d ; \distr]   \Pr[\mech(\inp_d) \in R] + \sum_{n = 0}^{\gap}\Pr[\ts_1 -\ts_2 = d ; \distr] \cdot  \Pr[\mech(\inp_d') \in R]$$
$$\geq e^{-\eps}  \perc(\gap)  \Pr[\mech(\inpBatch) \in R] = e^{-\eps}  \perc(\gap)  \textit{Power},$$
where the second line follows from the one-sided differential privacy guarantee on $\gap$-adjacent inputs.

\section{Estimation of Batching Deanonymization Risk Statistics}
\label{app:deanon_attack}

Recall that in Section~\ref{sec:intro}, we provided statistics on the rate of batching at a peer-reviewed conference. We used these statistics in Figure~\ref{fig:posteriors} to estimate the linkage risk arising due to observing batched comments. In this section, we provide details about the measurement method used to estimate the batching statistics.

In order to estimate the prevalence of batching in the peer-review process of a conference, we measure the following statistics. For any individual reviewer or meta-reviewer, we order all of their comments on all papers in increasing order of post time. If two comments arrive immediately next to each other in this sequence and were made on different papers, we consider these to be ``consecutive comments from the same (meta)-reviewers on different papers.'' Note that this excludes comments that are made on the same paper by the same (meta)-reviewer consecutively, because consecutive comments by the same (meta)-reviewer on the same paper do not generate additional linkage risk for the (meta)-reviewer. For example, consider the following sequence of comment arrivals from a single (meta)-reviewer (where units of time are minutes from the start of the commenting period):
\begin{align*}
    (\act_1, \paper_1, \ts_1 = 0), (\act_2, \paper_2, \ts_2 = 5),  (\act_3, \paper_2, \ts_2 = 6), (\act_4, \paper_2, \ts_2 = 8) (\act_5, \paper_3, \ts_3 = 100).
\end{align*}
In this example, we count the first two comments ($\act_1$ and $\act_2$) and the last two comments ($\act_4$ and $\act_5$) as consecutive arrivals on different papers. We then capture the rate of batching under $5$ minutes by computing the number of consecutive comments that arrive within $5$ minutes of each other divided by the total number of consecutive comment arrivals. So, in the example above, the rate of batching is $50\%$ since comments $\act_1$ and $\act_2$ arrive within $5$ minutes of one another, while $\act_4$ and $\act_5$ do not. Applying this measurement method to a dataset of comments made by reviewers and meta-reviewers on papers at a top Computer Science conference, we find that there is a $30.10\%$ chance that a comment arrives in a batch with a consecutive comment from the same (meta)-reviewer. 

For a baseline, we additionally compute how often comments from \emph{different} (meta)-reviewers may appear at times close to each other. We look at each pair of distinct reviewers from the set of all reviewers. We then calculate whether any pair of comments from these two (meta)-reviewers arrived within a cutoff of $5$ minutes of one another. We find that there is a $0.66\%$ chance that a randomly chosen pair of (meta)-reviewers makes a pairs of comments that arrive within $5$ minutes of one another. We note that the first statistic capturing the rate of batching excludes reviewers who made only a single comment in the entire conference, as it is not possible for these reviewers to engage in batching. In contrast, the second statistic capturing the baseline rate of close arrivals includes cases where a reviewer makes only a single comment. These comments are counted in the statistic, since any comment may appear to be batched with an anonymized comment made by a different reviewer from the perspective of an observer who does not know reviewer identities.

\section{A Queue-Based Mechanism for Privacy Against Batched Timing Attacks}
\label{app:queue_mech}

In this section, we discuss an alternative privacy formulation that we call ``$\eps$-batching privacy'' and give an algorithm that satisfies privacy under this formulation by delaying comments using a queue to preserve privacy. In doing so, our queue-based mechanism \emph{preserves the ordering in which comments arrive}, a property that may be useful in certain applications. The privacy guarantees are not directly comparable to $(\eps, \gap)$-OSDP because we make substantially different sets of assumptions in the adversarial model. However, one can think of both approaches as responses to the impossibility results for standard two-sided proven in Section~\ref{sec:dp_impossibility}. While $(\eps, \gap)$-OSDP relaxes two-sided DP by introducing a bound $\gap$ on the gap between unbatched comments and by making the notion of neighbors asymmetric, $\eps$-batching privacy introduces distributional assumptions on the inputs that capture an adversary's uncertainty about comment arrivals.

\subsection{Problem Formulation}

In this problem formulation, we assume that comment arrivals are drawn i.i.d. from some unknown distribution over papers and reviewers. We call this the arrival process. We assume \emph{discrete time} comment arrivals over an infinite time horizon so comments arrive at each time-step drawn from this unknown distribution. 

First, we present the arrival process if no batching occurs. In the absence of batching, a single comment arrives at every unit of time. We make an \emph{i.i.d. assumption} on arrivals. At each time-step, the paper-reviewer pair associated with the comment is drawn independently from a (potentially unknown) probability distribution $\distr$ over $\Papers \times \Reviewers$ (where $\Papers$ is the set of all papers and $\Reviewers$ is the set of all reviewers). For instance, $\distr$ could be a uniform distribution over $\Papers \times \Reviewers$ although it need not be uniform or even known to the algorithm. We say $\Arriv \leftarrow \procNoBatch$ if the arrivals are drawn from this no-batching process.

An instance of \emph{potential batching} consists of multiple comments. The batch arrives at a single time-step, but the adversary is uncertain as to which papers and reviewers are in the batch. Thus, when potential batching occurs, the arrival process remains the same except for one modification--- batches consisting of more than one comment arrive at specific fixed time-steps. Formally, let $\batchTimeSet$ be a multi-set of time-steps at which batching occurs. The arrival process proceeds as follows:
\begin{itemize}
    \item On time-steps not contained in $\batchTimeSet$, no batching occurs and a single comment arrives.
    \item For each time-step contained in $\batchTimeSet$, an additional comment arrives due to batching. For instance, if $\batchTimeSet = \{10,10,15\}$ then a single comment arrives at each time-step, but two additional comments arrive at time $10$ due to batching and one additional comment arrives at time $15$ due to batching.
\end{itemize} The paper-reviewer pairs associated with the batched comments are drawn independently with replacement from distribution $\distr$. We say that $\Arriv \leftarrow \procBatchMulti$ if the arrivals are drawn from this process with batchings occurring at time-steps in $\procBatchMulti$. We allow comments to arrive according to $\procBatchMulti$ for any finite multi-set of time-steps $\batchTimeSet$. We do not assume any prior knowledge of either $\batchTimeSet$ nor $|\batchTimeSet|$.

Then, we define a comment posting mechanism to be $\eps$-batching private in this formulation, if the mechanism obscures whether the inputted comment arrival set arrived per the batching process (with any number of batches) or the no batching process (whereby $0$ batches appeared):

\begin{definition}[Batching Privacy]
A comment posting mechanism $\mech$ is \emph{$\eps$-batching private} with respect to arrival processes $(\procNoBatch, \procBatchMulti)$ if for all time horizons $\finiteTime \geq 1$, all finite batching multi-sets $\batchTimeSet$, and any output of the mechanism between time $1$ and $\finiteTime$, $\out_{\finiteTime}$:
\begin{align*}
   & \Pr[\mech(\Arriv) = \out_{\finiteTime} ; \Arriv \leftarrow \procBatchMulti] \leq e^\eps \Pr[\mech(\Arriv) = \out_{\finiteTime} ; \Arriv \leftarrow \procNoBatch] \: \text{and}\\
    & \Pr[\mech(\Arriv) = \out_{\finiteTime} ; \Arriv \leftarrow \procNoBatch] \leq e^\eps \Pr[\mech(\Arriv) = \out_{\finiteTime} ; \Arriv \leftarrow \procBatchMulti].
\end{align*}
\end{definition}

Note that unlike typical differential privacy formulations, this notion of privacy requires distributional assumptions on the data-generating process as we assume that comments are generated by an i.i.d. arrival model.

\subsection{Results}

Under this formulation, we design a mechanism described in Algorithm~\ref{alg:queue_anon_0} that delays comments by deploying them to a queue. The algorithm guarantees \emph{perfect batching privacy} ($\eps = 0$), as shown in the following result. 
\begin{proposition}[Privacy]
Algorithm~\ref{alg:queue_anon_0} guarantees perfect batching privacy $(\eps = 0)$ for comments arriving according to $\procNoBatch$ and $\procBatchMulti$ for any $\batchTimeSet$. 
\end{proposition}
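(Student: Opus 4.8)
The plan is to prove the stronger fact that the distribution of $\mech(\Arriv)$ is \emph{exactly} the same whether $\Arriv \leftarrow \procNoBatch$ or $\Arriv \leftarrow \procBatchMulti$, for every finite multi-set $\batchTimeSet$ and every time horizon $\finiteTime$. Once this exact equality is established, $0$-batching privacy is immediate, since $e^{0}=1$ and the two required inequalities both collapse to the equality.

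The first step is to pin down the behaviour of the queue mechanism on these two families of arrival streams. Recall that Algorithm~\ref{alg:queue_anon_0} maintains a FIFO queue into which arrivals are inserted, and at each time-step removes and posts the oldest queued comment (if any), stamping it with the current time. The key structural observation is that under \emph{both} $\procNoBatch$ and $\procBatchMulti$ at least one comment arrives at every time-step, so at the posting moment of any time $s \le \finiteTime$ the queue has received at least $s$ comments and released exactly $s-1$, hence is nonempty; therefore the mechanism posts exactly one comment at each of the times $1,2,\dots,\finiteTime$, and the comment posted at time $t$ is the $t$-th comment to arrive in FIFO order (with a fixed tie-break among comments arriving in the same slot), carrying post-timestamp $t$. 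This is the crucial point: although $\procBatchMulti$ must delay the ``extra'' batched comments while $\procNoBatch$ introduces no delay, the observed sequence of post-times — one comment stamped $1,\dots,\finiteTime$ — is identical in the two cases, so $\mech(\Arriv)$ restricted to $[1,\finiteTime]$ is the \emph{same} deterministic function of the ordered tuple $(A_1,\dots,A_\finiteTime)$ of the first $\finiteTime$ arriving comments, namely ``post $A_t$ at time $t$''. Now I would invoke the i.i.d.\ arrival assumption: in $\procNoBatch$ the $t$-th arrival is just the comment generated at time $t$; in $\procBatchMulti$ the first $\finiteTime$ arrivals are a deterministic-length mixture of ``regular'' and ``batched'' comments, but every comment either process ever generates has its paper–reviewer pair drawn independently from $\distr$, and imposing a fixed ordering on finitely many i.i.d.\ draws leaves them i.i.d. Hence $(A_1,\dots,A_\finiteTime)$ is distributed as $\finiteTime$ i.i.d.\ draws from $\distr$ under $\procNoBatch$ \emph{and} under $\procBatchMulti$, so the pushforwards under the common output map agree: $\Pr[\mech(\Arriv)=\out_\finiteTime ; \Arriv \leftarrow \procBatchMulti] = \Pr[\mech(\Arriv)=\out_\finiteTime ; \Arriv \leftarrow \procNoBatch]$ for every $\out_\finiteTime$ and every $\finiteTime$, which is exactly $\eps=0$ batching privacy.

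The main obstacle is not any hard calculation but getting the first step airtight: one must verify via the one-in/one-out backlog argument that under $\procBatchMulti$ the queue never runs dry before time $\finiteTime$ — so that the output really is a function of only the first $\finiteTime$ arrivals and, crucially, the observed post-times coincide across the two processes — and one must check that the arbitrary tie-break among simultaneously-arriving batched comments does not disturb the i.i.d.\ structure, which is where exchangeability of i.i.d.\ draws is used. Everything is discrete (discrete time, discrete $\Papers\times\Reviewers$), so no measure-theoretic subtleties arise; once these bookkeeping points are settled, the equality of output laws, and hence perfect batching privacy, follows.
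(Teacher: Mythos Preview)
Your proposal is correct and follows essentially the same route as the paper: both arguments reduce to showing that under either arrival process the mechanism posts exactly one comment per time-step and that the comment posted at time $t$ is (up to an exchangeable tie-break among simultaneous arrivals) the $t$-th i.i.d.\ draw from $\distr$, so the output law is $\prod_{t=1}^{\finiteTime}\distr(\act_t)$ in both cases. Your framing via ``same deterministic function of the first $\finiteTime$ i.i.d.\ arrivals'' is a slightly more structural restatement of the paper's step-by-step trace, but the content is the same.
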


\begin{proof}
Fix a time horizon $\finiteTime$ and multi-set of batching times $\batchTimeSet$. We let $\distr(\act)$ denote the probability of observing the comment $\act$ under distribution $\distr$. When the algorithm is applied to comments drawn according to the no batching process, one comment arrives at each time-step and all comments are posted immediately so by the i.i.d. assumption, $\Pr[\mech(\Arriv) = \act_{1:\finiteTime}; \Arriv \leftarrow \procNoBatch] = \prod_{i=1}^{\finiteTime} \distr(\act_{\ts})$.

If comments were drawn according to the process where batching occurred at times $\batchTimeSet$, then at any time-step before the first instance of batching occurs the mechanism posts the single comment that arrives so the probability of observing output $\{\act\}$ is $\distr(\act)$ independent of other-timesteps. On the first instance of batching, the mechanism posts one of the batched comments chosen uniformly at random from the batch, so due to the i.i.d. arrivals of the batch the probability of observing this output $\{\act\}$ at this time-step is also $\distr(\act)$. At any later time-step, the algorithm posts the comment at the top of the queue, which consists of previous comments that arrived i.i.d. drawn from $\distr$. Therefore, the probability of observing any output is still  $\Pr[\mech(\Arriv) = \act_{1:\finiteTime}; \Arriv \leftarrow \procBatchMulti] = \prod_{i=1}^{\finiteTime} \distr(\act_{\ts})$.
\end{proof}

\begin{algorithm}[t]
   \caption{Queue Mechanism}
   \label{alg:queue_anon_0}
\begin{algorithmic}
   \STATE Initialize empty queue $\queue = \emptyset$
   \FOR{\ts = $1,2,\ldots$}
    \IF{set of batched comments $\Arriv$ arrives}
    \IF{$\queue \not = \emptyset$}
    \STATE Dequeue comment $\act'$ from $\queue$ and post it.
    \STATE Enqueue all comments in $\Arriv$ to $\queue$ in a random order.
    \ELSE
    \STATE Choose $\act \in \Arriv$ uniformly at random to post.
    \STATE Enqueue all comments in $\Arriv \setminus \{\act\}$ to $\queue$ in a random order.
    \STATE Post comment $\act$ immediately.
    \ENDIF
    \ELSIF{a single comment $\act$ arrives}
    \IF{\queue $\not= \emptyset$}
    \STATE Dequeue comment $\act'$ from $\queue$ and post it.
    \STATE Enqueue comment $\act$ to $\queue$.
    \ELSE
    \STATE Post comment $\act$.
    \ENDIF
\ENDIF
\ENDFOR
\end{algorithmic}
\end{algorithm}

The algorithm delays comments by a deterministic value depending on the number of batched comments that have arrived already.
\begin{proposition}[Delay]
If comments arrive according to $\procBatchMulti$, then Algorithm~\ref{alg:queue_anon_0} adds worst-case delay to any comment equal to $|\batchTimeSet|$. 
\end{proposition}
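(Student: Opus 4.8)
The plan is to directly track the size of the queue over the course of the arrival process and argue that every comment is posted within $|\batchTimeSet|$ time-steps of its arrival. First I would establish an invariant: at the start of time-step $\ts$, the size of the queue equals the number of ``extra'' batched comments that have arrived strictly before time $\ts$ and have not yet been posted. Concretely, let $b_{<\ts}$ denote the total number of additional comments due to batching that have arrived at time-steps in $\batchTimeSet$ occurring before $\ts$ (counting multiplicity). I claim that the queue length at the beginning of step $\ts$ is exactly $b_{<\ts}$. This follows by induction on $\ts$: at a non-batching step, one comment arrives and one comment is posted (if the queue is nonempty) so the queue length is unchanged, or the queue is empty and the single comment is posted immediately (queue stays empty); at a batching step with $m$ additional comments, $m+1$ comments arrive, one is posted, so the queue grows by exactly $m$.

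**Bounding the delay.** Given the invariant, the key observation is that the queue is FIFO, so a comment that arrives at time $\ts$ and is enqueued sits behind at most the number of comments currently in the queue, which is at most $b_{<\ts} \le |\batchTimeSet|$ by the invariant (noting $b_{<\ts}$ counts a subset of the multiset $\batchTimeSet$). Hence it is dequeued and posted within $|\batchTimeSet|$ further steps, since one comment leaves the queue at every step after a comment has been enqueued (once the queue is nonempty it stays nonempty until drained, and comments arrive at every step to keep it fed, but crucially a comment already in the queue only needs the queue ahead of it to drain). I would also handle the boundary case where a comment is posted immediately (delay $0 \le |\batchTimeSet|$) and confirm that after the last batching time the queue monotonically drains, so the worst-case delay $|\batchTimeSet|$ is actually attained — e.g. by the very first batched comment that is enqueued rather than posted, which waits behind all subsequently-arriving batched surplus comments. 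This gives both the upper bound and tightness, justifying ``worst-case delay equal to $|\batchTimeSet|$.''

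**Main obstacle.** The only delicate point is bookkeeping around simultaneous enqueue/dequeue at a batching step: the algorithm dequeues one comment and then enqueues all of $\Arriv$ (or all but one), so I need to be careful about whether a freshly-arrived batched comment could itself be the one dequeued in the same step — it cannot, because the dequeue happens from the pre-existing queue before the enqueue. Making this ordering explicit in the induction is what makes the invariant clean. I expect this step to be routine once the invariant is stated precisely; there is no real analytic difficulty, just the need to argue the FIFO discipline and the ``one-out-per-step'' property carefully enough to extract the exact constant $|\batchTimeSet|$ rather than a loose bound.
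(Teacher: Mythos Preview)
Your approach is essentially the paper's: track the queue length via an invariant (the paper states it less formally as ``the queue has length $|\batchTimeSet|$ after the last instance of batching'') and use the FIFO discipline plus one-out-per-step to bound the delay. The upper-bound argument is correct, modulo the minor bookkeeping that a comment arriving in a batch may also sit behind other comments from the \emph{same} batch enqueued ahead of it, so the right count is $b_{\le \ts}$ rather than $b_{<\ts}$; this is still at most $|\batchTimeSet|$, so the conclusion is unaffected.

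Your tightness example is wrong, however. The ``very first batched comment that is enqueued rather than posted'' is at the \emph{front} of the queue and is dequeued at the very next step --- that is the whole point of FIFO --- so it does not wait behind subsequently-arriving comments. The worst case is achieved instead by any comment arriving \emph{after} the last batching time $\max\batchTimeSet$: at that point the queue has length exactly $|\batchTimeSet|$ by your invariant, and since one comment arrives and one departs at every subsequent step the queue stays at that length forever, so every later comment waits exactly $|\batchTimeSet|$ steps. This is also how the paper argues tightness.
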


\begin{proof}
After the last instance of batching in $\batchTimeSet$, there are $\finiteTime + |\batchTimeSet|$ comments that have arrived in total. The mechanism posts the earliest-arriving comment at each time-step and delays the incoming comment so the queue has length $|\batchTimeSet|$ and any single incoming comment is delayed for $|\batchTimeSet|$ timesteps before being posted. Any comments arriving before all instances have batching have occurred are delayed by the number of additional comments arriving due to batching at an earlier time-step, so have delay less than $|\batchTimeSet|$.  
\end{proof}

In fact, this perfectly private mechanism is optimal for this privacy formulation as it achieves the best possible worst-case delay to any comment at any value of $\eps$. In particular, at any setting of $\eps$ any batching-private comment posting mechanism must delay a comment by at least $|\batchTimeSet|$ in the worst-case:
\begin{proposition}[Lower Bound, Minimum Delay]
\label{claim:lower_bound_multi_batch}
Any comment posting mechanism guaranteeing $\epsilon$-batching privacy with any $\epsilon < \infty$ for comments arriving according to $\procNoBatch$ and $\procBatchMulti$ must introduce delay of at least $|\batchTimeSet|$ to at least one comment when applied to comments arriving according to $\procBatchMulti$.
\end{proposition}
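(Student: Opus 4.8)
The plan is to exhibit a single distinguishing event that is outright impossible under $\procNoBatch$ but would be forced (with positive probability) whenever the mechanism delays \emph{every} comment by less than $|\batchTimeSet|$ on inputs drawn from $\procBatchMulti$, and then to rule this out using the $\eps$-batching privacy guarantee. Assume $\batchTimeSet \neq \emptyset$ (otherwise the claim is vacuous), write $m = |\batchTimeSet|$ and $t^\star = \max \batchTimeSet$, and fix the horizon $\finiteTime = t^\star + m - 1$. The first step is a routine upgrade of the privacy definition from a pointwise statement to an event statement: since the definition holds for every fixed output $\out_\finiteTime$, and the set of possible outputs up to time $\finiteTime$ is countable (discrete time, finitely many posted comments by time $\finiteTime$, paper and reviewer labels from finite sets), summing over any collection $\mathcal{E}$ of such outputs gives $\Pr[\mech(\Arriv) \in \mathcal{E};\, \Arriv \leftarrow \procBatchMulti] \le e^{\eps}\,\Pr[\mech(\Arriv) \in \mathcal{E};\, \Arriv \leftarrow \procNoBatch]$.

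Next I would do the counting. Under $\procBatchMulti$ the number of comments that have arrived by time $t$ is $t$ plus the number of batched extras occurring at times $\le t$; since every batch time is at most $t^\star$, by time $t^\star = \finiteTime - (m-1)$ exactly $t^\star + m = \finiteTime + 1$ comments have arrived. Let $\mathcal{E}$ be the event ``at least $\finiteTime + 1$ distinct comments are posted at or before time $\finiteTime$.'' Under $\procNoBatch$ exactly one comment arrives per step, so only $\finiteTime$ comments have arrived by time $\finiteTime$, and because a valid posting mechanism is delay-only and generates no fake data it can never post more comments than have arrived; hence $\Pr[\mech(\Arriv) \in \mathcal{E};\, \Arriv \leftarrow \procNoBatch] = 0$.

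Finally, suppose for contradiction that the conclusion fails, i.e.\ that with positive probability (over the mechanism's randomness, with $\batchTimeSet$ fixed) \emph{every} comment posted under $\procBatchMulti$ incurs delay at most $m-1$. On that event, each of the $\finiteTime + 1$ comments that arrived by time $t^\star$ is posted by time $t^\star + (m-1) = \finiteTime$, so $\mathcal{E}$ occurs and therefore $\Pr[\mech(\Arriv) \in \mathcal{E};\, \Arriv \leftarrow \procBatchMulti] > 0$. Combining this with the previous paragraph and the event-level inequality yields $0 < \Pr[\mech(\Arriv) \in \mathcal{E};\, \Arriv \leftarrow \procBatchMulti] \le e^{\eps}\cdot 0 = 0$, a contradiction for every $\eps < \infty$. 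Hence, almost surely at least one comment is delayed by at least $m = |\batchTimeSet|$; this matches the worst-case delay of Algorithm~\ref{alg:queue_anon_0}, so the bound is tight. I expect the only mildly delicate points to be the bookkeeping identity $t^\star + (m-1) = \finiteTime$ and the pointwise-to-event passage; the genuine idea is simply choosing the distinguishing event $\mathcal{E}$ — ``too many comments posted too soon'' — which is information-theoretically unavailable under the no-batching process.
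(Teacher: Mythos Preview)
Your proposal is correct and follows essentially the same counting argument as the paper: choose a horizon just past the last batch time, observe that under $\procBatchMulti$ strictly more comments have arrived than under $\procNoBatch$, and use the no-fake-data constraint to produce an output event that has zero probability under $\procNoBatch$ but would be forced if all delays were below $|\batchTimeSet|$. The paper uses horizon $\finiteTime = t^\star + |\batchTimeSet|$ and counts arrivals up to $t^\star+1$, while you use $\finiteTime = t^\star + |\batchTimeSet| - 1$ and count up to $t^\star$; both choices work. Your treatment is actually a bit more careful than the paper's in two places: you explicitly justify the passage from the pointwise privacy inequality to an event-level inequality, and you phrase the contradiction hypothesis as ``with positive probability all delays are $\le m-1$,'' which yields the almost-sure conclusion directly, whereas the paper tacitly assumes the small-delay event holds with probability one.
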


It follows immediately that since the Queue Mechanism (Algorithm~\ref{alg:queue_anon_0}) achieves this lower bound it is optimal among $\eps$-batching private mechanisms in minimizing worst-case delay:

\begin{corollary}
For any setting of privacy parameter $\eps$, Algorithm~\ref{alg:queue_anon_0} is optimal among $\eps$-batching private comment posting mechanisms in minimizing the worst-case delay added to any comment.
\end{corollary}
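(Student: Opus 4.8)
The plan is to obtain the corollary as an immediate consequence of the two propositions stated just before it, so the only real work is lining up the quantifiers. First I would note that the Queue Mechanism (Algorithm~\ref{alg:queue_anon_0}) is itself a valid competitor in the comparison: the preceding Privacy proposition shows it is $0$-batching private, and since the pair of inequalities defining $\eps$-batching privacy only weakens as $\eps$ grows, Algorithm~\ref{alg:queue_anon_0} is $\eps$-batching private for every $\eps \ge 0$. Hence for each fixed $\eps$ it belongs to the class of mechanisms over which we are optimizing, and it suffices to show that its worst-case delay matches the best worst-case delay achievable in that class.

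Second, I would fix an arbitrary finite multi-set of batching times $\batchTimeSet$ and recall from the preceding Delay proposition that, on inputs drawn according to $\procBatchMulti$, Algorithm~\ref{alg:queue_anon_0} delays every comment by at most $|\batchTimeSet|$, with this bound attained by a comment arriving after the last batching time. Thus the worst-case delay of Algorithm~\ref{alg:queue_anon_0} on $\procBatchMulti$ equals exactly $|\batchTimeSet|$. Third, I would invoke Proposition~\ref{claim:lower_bound_multi_batch}: every $\eps$-batching private mechanism with $\eps < \infty$ must delay at least one comment by at least $|\batchTimeSet|$ when run on $\procBatchMulti$, so its worst-case delay on that input is at least $|\batchTimeSet|$.

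Combining the second and third steps, no $\eps$-batching private mechanism achieves worst-case delay below $|\batchTimeSet|$ on $\procBatchMulti$, while Algorithm~\ref{alg:queue_anon_0} achieves exactly $|\batchTimeSet|$; since $\batchTimeSet$ was arbitrary and Algorithm~\ref{alg:queue_anon_0} lies in the competing class for every $\eps$ by the first step, Algorithm~\ref{alg:queue_anon_0} minimizes the worst-case delay added to any comment among $\eps$-batching private mechanisms, for every $\eps$. There is no substantive obstacle here — the argument is a direct composition of the two propositions — and the only point requiring care is to confirm that the lower bound of Proposition~\ref{claim:lower_bound_multi_batch} is quantified over exactly the class (all $\eps < \infty$, for each $\batchTimeSet$) in which Algorithm~\ref{alg:queue_anon_0} sits, which the observation in the first step guarantees.
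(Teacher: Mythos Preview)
Your proposal is correct and matches the paper's intended argument: the corollary is obtained by combining the Delay proposition (Algorithm~\ref{alg:queue_anon_0} has worst-case delay exactly $|\batchTimeSet|$) with Proposition~\ref{claim:lower_bound_multi_batch} (every $\eps$-batching private mechanism has worst-case delay at least $|\batchTimeSet|$). The paper even says ``it follows immediately'' before stating the corollary. One minor note: the proof block that the paper places immediately after the corollary is in fact a proof of the lower-bound proposition itself (a counting argument showing $\finiteTime+1$ comments cannot all be posted by time $\finiteTime$ under $\procNoBatch$), not a separate argument for the corollary; you instead invoke that proposition as a black box, which is the cleaner way to argue a corollary. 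Your explicit observation that $0$-batching privacy implies $\eps$-batching privacy for every $\eps \ge 0$ (so Algorithm~\ref{alg:queue_anon_0} is a legitimate competitor at every $\eps$) is a detail the paper leaves implicit, and it is good that you spelled it out.
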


\begin{proof}
Let $\finiteTime' = \max\{\batchTimeSet \}$ be the latest time-step when batching occurs and $\finiteTime = \finiteTime' + |\batchTimeSet|$. Then, if comments arrive according to $\procBatchMulti$, $\finiteTime' + |\batchTimeSet|+1 = \finiteTime + 1$ comments arrive up until time $\finiteTime'+1$. Assume for the sake of contradiction that all of the comments arriving before time $\finiteTime'+1$ are posted with delay strictly less than $|\batchTimeSet|$. Then, when acting on comments arriving according to $\procBatchMulti$, the mechanism must post at least $\finiteTime+1$ comments within time horizon $\finiteTime$ (with probability $1$). However, under arrival process $\procNoBatch$, only $\finiteTime$ comments have arrived up until $\finiteTime$, so no mechanism can ever output $\finiteTime+1$ comments up until time $\finiteTime$. Hence, any output of the mechanism up until time $\finiteTime$ on comments arriving per $\procBatchMulti$ contains $\finiteTime+1$ comments with probability $1$, while for comments arriving per $\procNoBatch$ any output up until time $\finiteTime$ contains $\finiteTime+1$ comments with probability $0$. 
\end{proof}

The above formulation and corresponding queue-based mechanism offer an alternative approach to provide privacy in light of the impossibility results for two-sided DP. Here, we relax the problem by introducing distributional assumptions on inputs to the mechanism. While this does not yield a privacy-delay trade-off in $\eps$, it allows for a mechanism that preserves the ordering of comments. As noted in Section~\ref{sec:discussion}, an interesting direction of future work is to understand how we might make the Zero-Inflated Uniform Mechanism order-preserving as well.

\section{Additional Experimental Results}
\label{app:experiments}

In this section, we provide experimental results that augment those presented in the main text. 

\subsection{Wikipedia}
\label{appendix:experiments_wiki}

In the main text, we showed results setting $\gap = 11$ minutes by choosing the 25th percentile of prior inter-arrival times for the category ``21-st century American Politicains'' and $\gap = 36$ minutes at the 50th percentile. Here, we provide additional results, setting $\gap = 79$ minutes based on the 75th percentile of the inter-arrival distribution as shown in Figure~\ref{fig:wiki_p75} and Table~\ref{fig:delay_wiki_75}. Algorithm~\ref{alg:unif_mechanism} adds significantly higher delay at this setting of $\gap$, and consequently the adversary's batched timing linkage attack performs quite poorly. For instance, taking $\eps = 0.5$
corresponds to an average delay of roughly $3$ hours and maximum delay of $6$ hours, but renders the attack highly inaccurate: the attack now achieves around $80\%$ recall at $10\%$ precision compared to the non-private baseline which achieves $85\%$ recall at $80\%$ precision. 

\begin{figure}[h]
\centering
  \includegraphics[width=.5\linewidth]{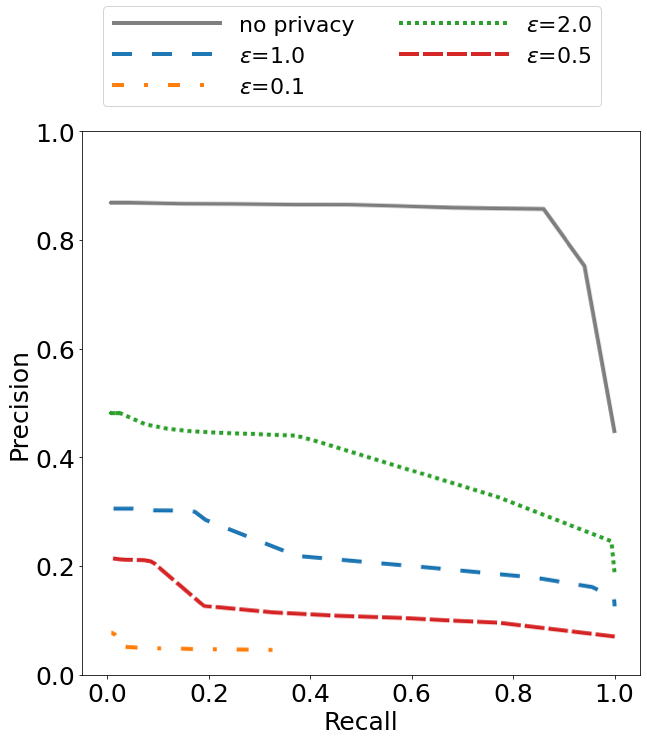}
  \caption{Accuracy in linking pairs of Wikipedia article revisions within the category “21st-century American
Politicians” based on batched timing (averaged over 5 runs of the randomized privacy mechanism) for $\gap$ set to 79 minutes.}
  \label{fig:wiki_p75}
\end{figure}

\begin{table}[ht]
    \centering
    \begin{tabular}{|l|cccc|cccc|}
    \hline
         & \multicolumn{4}{c|}{Mean Delay} & \multicolumn{4}{c|}{Maximum Delay} \\\hline
    & $\eps = 0.1$ & $\eps = 0.5$ & $\eps = 1.0$ & $\eps=2.0$ &  $\eps = 0.1$ & $\eps = 0.5$ & $\eps = 1.0$ & $\eps=2.0$ \\ \hline
    $\: \gap = 79$ & 820 & 192 & 115 & 77 & 1615 & 360 & 205 & 129 
 \\\hline
    \end{tabular}
    \caption{Mean and maximum delay (in minutes) added to Wikipedia article revisions within the category ``21st-century American Politicians'' for $\gap$ set to the $75$th percentile of the historical inter-arrival distribution.}
    \label{fig:delay_wiki_75}
\end{table}

\subsection{Bitcoin}
\label{appendix:experiments_btc}

In the main text, we showed results using Algorithm~\ref{alg:unif_mechanism} with $\gap$ set to the median of the historical inter-arrival times of transactions sent to a given output address (with a default of $10$ minutes when there were no prior transactions.) In this section, we give results for alternative settings of $\gap$. In Figure~\ref{fig:btc_success_25} and Figure~\ref{fig:btc_delay_25} we show the delay added to comments and the success of attacks when $\gap$ is set to a more lenient value based on the $25$-th percentile of historical transaction inter-arrival times. In Figure~\ref{fig:btc_success_75} and Figure~\ref{fig:btc_delay_75} we show results for a stricter setting of $\gap$ to the $75$-th percentile of historical transaction inter-arrival times. 

\begin{figure}[ht]
    \centering
    \includegraphics[width=0.8\linewidth]{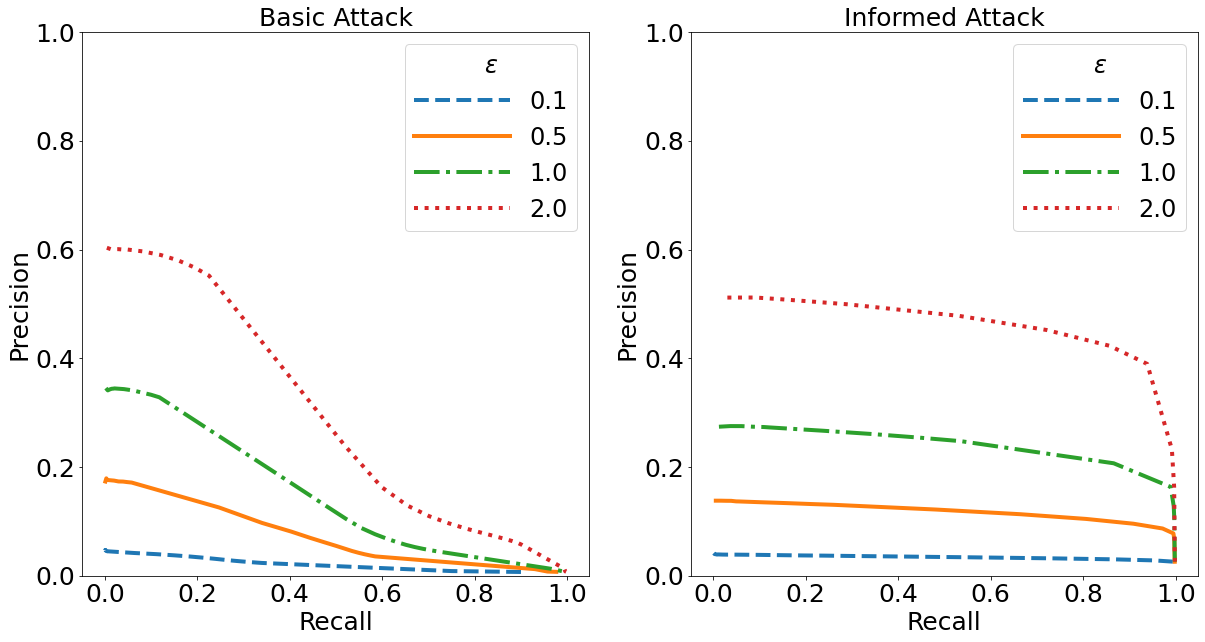}
    \caption{Performance of basic and informed attacks on Bitcoin transactions when $\gap$ is set to the 25th percentile historical inter-arrival time for an output address.}
    \label{fig:btc_success_25}
\end{figure}

\begin{figure}[ht]
    \centering
    \includegraphics[width=0.7\linewidth]{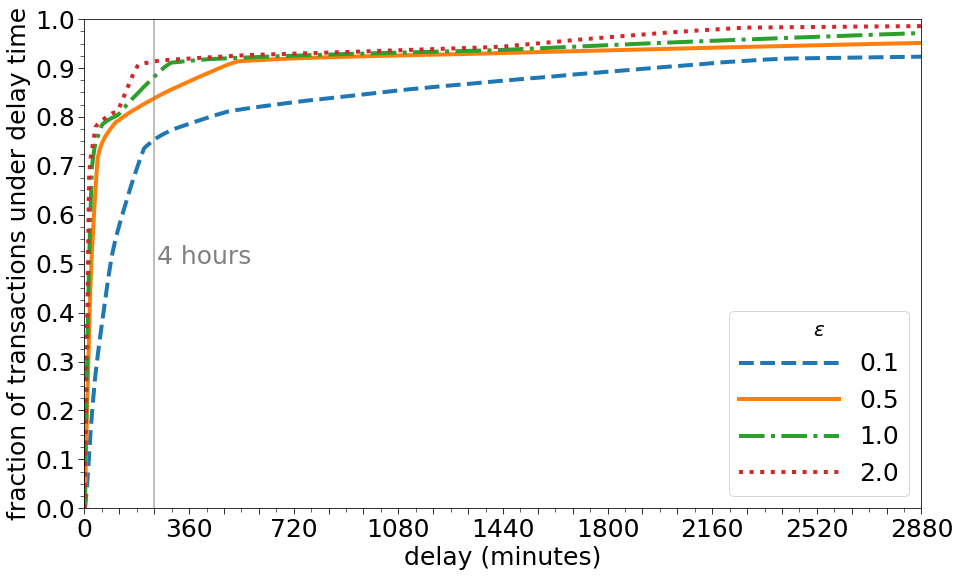}
    \caption{Cumulative distribution of delay added to batched Bitcoin transactions (averaged over 5 trials). Delay is drawn from a privacy-preserving uniform distribution with $\gap$ set to the 25th percentile of the inter-arrival time of transactions to an output address within the past $7$ days.}
    \label{fig:btc_delay_25}
\end{figure}

\begin{figure}[ht]
    \centering
    \includegraphics[width=0.8\linewidth]{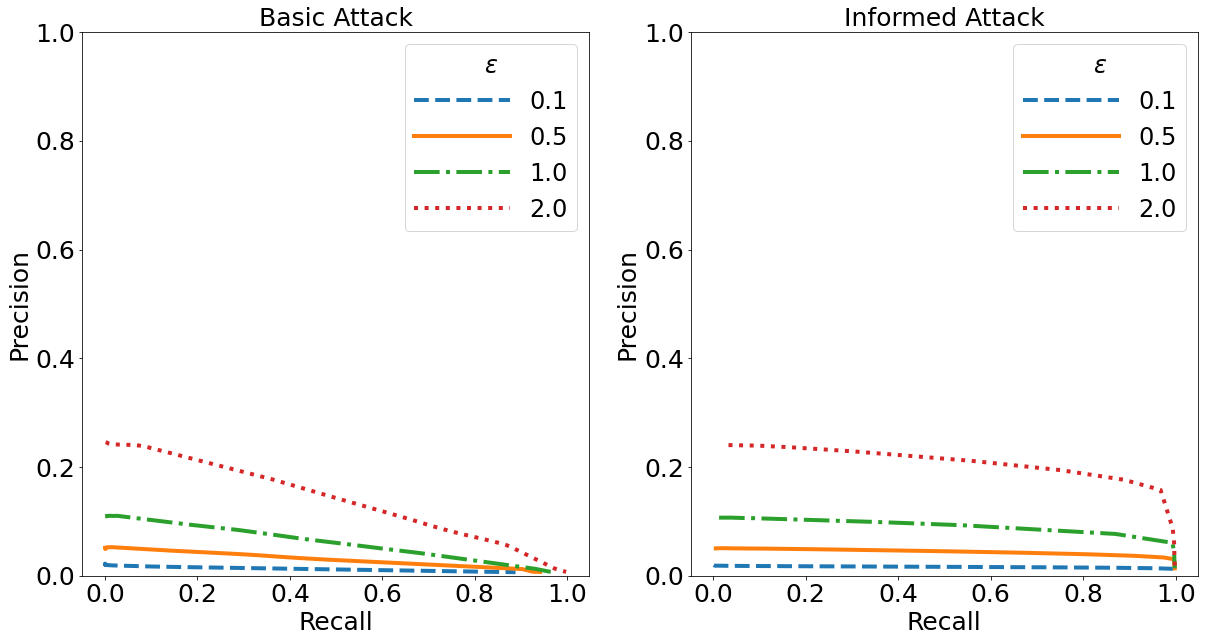}
    \caption{Performance of basic and informed attacks on Bitcoin transactions when $\gap$ is set to the median historical inter-arrival time for an output address.}
    \label{fig:btc_success_75}
\end{figure}

\begin{figure}[ht]
    \centering
    \includegraphics[width=0.7\linewidth]{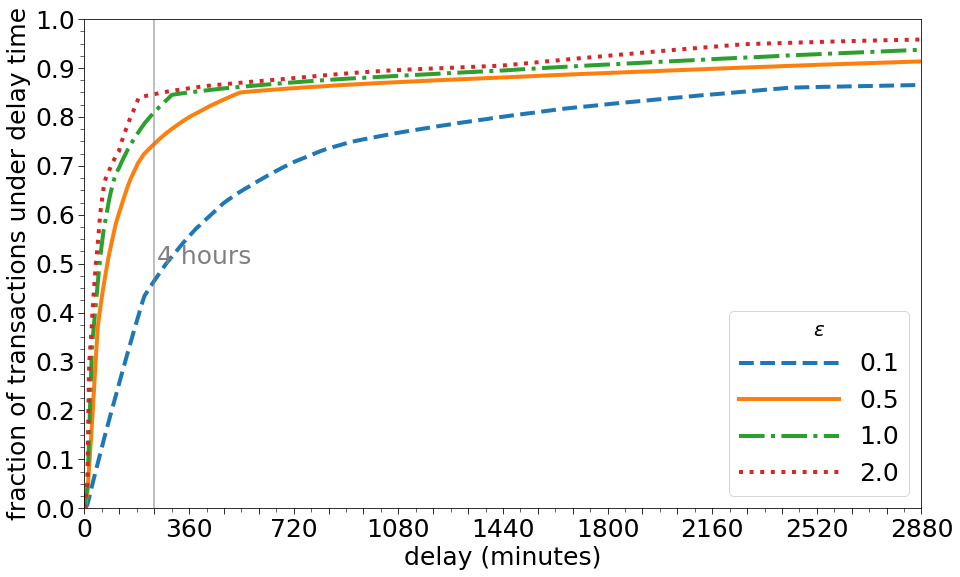}
    \caption{Cumulative distribution of delay added to batched Bitcoin transactions (averaged over 5 trials). Delay is drawn from a privacy-preserving uniform distribution with $\gap$ set to the median of the inter-arrival time of transactions to an output address within the past $7$ days.}
    \label{fig:btc_delay_75}
\end{figure}

\end{appendix}

\end{document}